\documentclass{amsart}
\addtolength{\hoffset}{-2cm} 
\addtolength{\textwidth}{3cm} 
\usepackage{amsmath}
\usepackage{amssymb,amsthm}
\usepackage{graphicx}
\numberwithin{equation}{section}
\newtheorem{thm}{Theorem}[section]
\newtheorem{rem}[thm]{Remark}

\newtheorem{lem}[thm]{Lemma}

\newtheorem{pro}[thm]{Proposition}

\title[Molecular predissociation resonances]{Molecular predissociation resonances below an energy level crossing}
\author{Sohei Ashida}
\begin{document}
\maketitle
\begin{abstract}
We study the resonances of $2\times 2$ systems of one dimensional Schr\"odinger operators which are related to the mathematical theory of molecular predissociation. We determine the precise positions of the resonances with real parts below the energy where bonding and anti-bonding potentials intersect transversally. In particular, we find that imaginary parts (widths) of the resonances are exponentially small and that the indices are determined by Agmon distances for the minimum of two potentials.
\end{abstract}
\section{Introduction}\label{zerothsec}
In this paper, we study precise positions of molecular predissociation resonances near the real axis as semiclassical asymptotics. More precisely, we consider resonances whose real parts are apart from a bottom of a well and a crossing of potentials. In particular, we obtain exponentialy small imaginary parts (widths) for these resonances.

In the theory of shape resonances of Schr\"odinger operators it is known that widths of the resonances have exponential bounds as $C_{\epsilon}e^{-2(1-\epsilon)S/h}$ where $\epsilon$ can be taken arbitrarily small, $h$ is the semiclassical parameter and $S$ is the Agmon distance between the bounded and unbounded regions where the potential is below the real part of the resonance (see, e.g., \cite{CDKS,HeS,HiS}). In one dimensional case, Servat \cite{Se} determine the precise positions of the resonances whose real parts are apart from the critical values of potentials using WKB constructions and considering the connection of the solutions and quantization conditions. In particular, it is proved that the exact order of the imaginary parts of the resonances is $he^{-2S/h}$.

When we consider the Schr\"odinger equation of molecules the study of the equation for electrons and nuclei is reduced to that of semiclassical system of Schr\"odinger-type operators by the Born--Oppenheimer approximation (see, e.g., \cite{KMSW,MM,MS}). In the Born--Oppenheimer approximation the semiclassical parameter $h$ is the square-root of the ratio of electronic to nuclear mass, and the potentials of the diagonal elements of the matrix of the Schr\"odinger operators describe electronic energy levels. At sufficiently low energies, since this system is scalar, numerous results from the semiclassical analysis of the Schr\"odinger operators can be applied. When several electronic levels are involved, states in different electronic levels interact due to the off-diagonal first order differential or pseudodifferential operators. 

In Martinez \cite{Ma}, Nakamura \cite{Na}, Baklouti \cite{Ba} and Grigis-Martinez \cite{GM2}, they study resonances for potentials that do not intersect and obtain exponential bound on their widths.
Klein \cite{Kl} studies the case of more than two intersecting potentials some of them forming wells to trap nuclei and the others being non-trapping. In this case, it is shown that the widths of the resonances with real parts converging to the bottom of the potential well have the exponential bound as in the case of usual Schr\"odinger operators with Agmon distance of the minimum of the two potentials. For intersecting two potentials, Grigis-Martinez \cite{GM} obtains the full asymptotic of the width of the resonance at the bottom of the well, showing that the exponential bound is optimal.

In Fujii\'e-Martinez-Watanabe \cite{FMW} they consider the resonances with real parts in the distance of order $h^{2/3}$ from a crossing of two potentials in one dimensional space. They construct the solution to the system on the left and right intervals from the crossing and consider the condition that solutions decaying on the left and those ontgoing on the right are connected at the crossing. The solutions to the system are constructed as series by successive approximation using the Yafaev's construction  for the solution of one dimensional Schr\"odinger equations (see \cite{Ya}) and showing that norms of operators including fundamental solutions are small for small $h$. Under an additional condition of ellipticity on the interaction they obtain the exact order $h^{5/3}$ of the widths of the resonances.

Here, as in \cite{FMW} we study $2\times 2$ matrix system, the diagonal part of which consists of semiclassical Schr\"odinger operators, and the off-diagonal parts of first order differential operators. We assume that two potentials $V_1$ and $V_2$ cross transversally and that below the crossing $V_1$ admits a well, while $V_2$ is non-trapping (see Fig. \ref{fig1}). We study the resonances with imaginary parts $\mathcal O(h)$ and with real parts apart from the crossing and the bottom of the well by a constant distance independent of $h$. The real parts of the resonances behave like eigenvalues obtained from usual Bohr--Sommerfeld quantization condition for the well, that is, the leading terms $e_k$ of the real parts of the resonances satisfy $\int_{a}^{b}\sqrt{e_k-V_1(x)}dx=(k+1/2)\pi h$, where $a$ and $b$ are the endpoints of the well. We also find that the widths of the resonances have exponential bounds as in \cite{Kl,GM} and under a condition on the interaction and the real part of the resonances they behave exactly like $h^2e^{-2S/h}$, where $S$ is the Agmon distance for $\min\{V_1,V_2\}$ between endpoints $b$ and $c$ of the region where the potentials are greater than $e_k$. In our model, the interaction is of the form $h(r_0(x)+hr_1(x)\partial_x)$ and the condition on the interaction is that $r_0(0)+r_1(0)\sqrt{V_1(0)+e_k}\neq 0$, where we assume the position of the crossing is $x=0$. This result should be compared with \cite[Theorem 2.6]{Se} for the case of one well. Paying attention to the shape of the graph of $\min\{V_1,V_2\}$ on the real axis our exponential bound is expected.

\begin{figure}
\includegraphics[width=10cm,clip,trim=0 150 0 200]{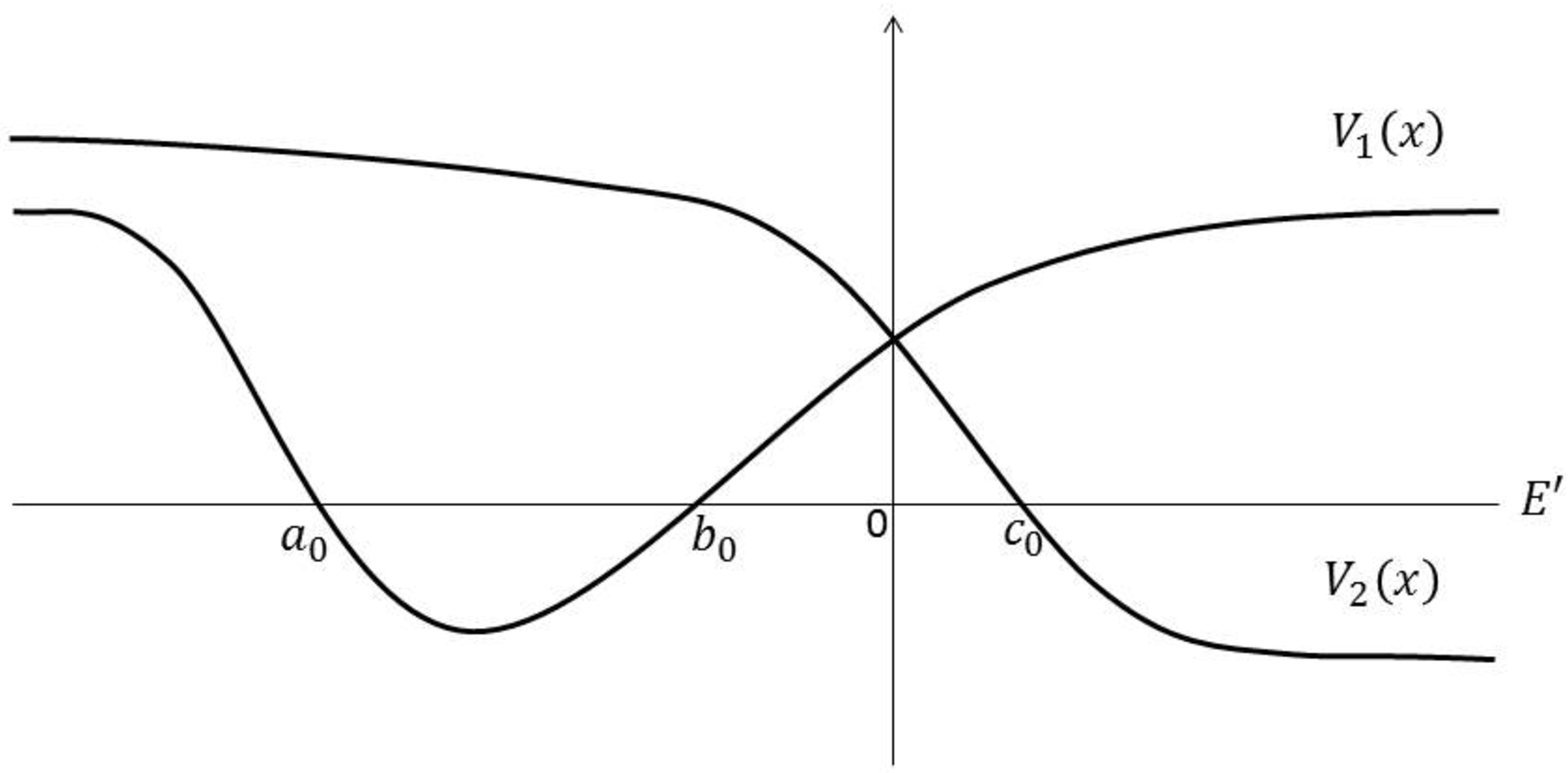}
\caption{The graph of the two potentials on the real axis}
\label{fig1}
\end{figure}

To prove our main theorem, we construct solutions to the system on four intervals, because if we formally construct solutions on two intervals as in \cite{FMW} the series in the solutions do not converge. We need to construct four solutions on each of the middle two intervals since the solutions on the left and right intervals are represented as linear combinations of the four solutions. To control the behavior of the series in the solutions, we need four fundamental solutions on each of the middle intervals. We choose the fundamental solutions so that formal leading terms in the solutions will really determine the order of the solution with respect to $h$. We study the connection of the solutions and calculate the transition matrix at the crossing and change the basis of the solutions in order that some of the elements of the transition matrix will be zero. This makes the calculation of the quantization condition simple and we can find exponentially small terms with the expected index.

The content of this paper is as follows. In Sect. \ref{firstsec} we give the assumptions and state our main result. In Sect. \ref{secondsec} we give some preliminaries. In particular, we introduce solutions to the scalar Schr\"odinger equations. In Sect. \ref{section2.1} we construct fundamental solutions on four intervals and give estimates on them. In Sect. \ref{thirdsec} we give the solutions to the system on the four intervals. In Sect. \ref{fourthsec} we study the connection of the solutions and change the bases of the solutions to make transition matrix suitable for the calculation of the quantization condition. In Sect. \ref{fifthsec} the quantization condition is given. Using the quantizaion condition in Sect. \ref{sixthsec} we prove the main theorem.

\section{Assumptions and results}\label{firstsec}
As in \cite{FMW}, we consider a $2\times2$ Schr\"odinger operator of the type,
\begin{equation}\label{myeq1.1}
Pu=Eu, P=\begin{pmatrix}
P_1& hW\\
hW^*& P_2
\end{pmatrix},
\end{equation}
where $P_j=-h^2\Delta+V_j(x), j=1,2$, $W=r_0(x)+hr_1(x)\partial_x$ and $W^*$ is the formal adjoint of $W$. 

We suppose the following conditions on $V_1(x)$, $V_2(x)$ (see Fig. \ref{fig1}) and $r_0(x)$, $r_1(x)$.\\
\textbf{Assumption (A1)} $V_1(x)$ and $V_2(x)$ are real-valued analytic functions on $\mathbb R$ and extend to holomorphic functions in the complex domain,
$$\Gamma=\{x\in \mathbb C;\lvert \mathrm{Im}\, x\rvert<\delta_0\langle \mathrm{Re}\, x\rangle\},$$
where $\delta_0>0$ is a constant and $\langle t\rangle:=(1+\lvert t\rvert^2)^{1/2}$.\\
\textbf{Assumption (A2)} For $j=1,2,$ $V_j$ admits limits as $\mathrm{Re}\, x\to \pm\infty$ in $\Gamma$ and there exists a real number $E'$ such that
\begin{align*}
&\lim_{\substack{
\mathrm{Re}\, x\to-\infty\\
x\in\Gamma}}
V_1(x)>E';\ \lim_{\substack{
\mathrm{Re}\, x\to-\infty\\
x\in\Gamma}}V_2(x)>E';\\
&\lim_{\substack{
\mathrm{Re}\, x\to+\infty\\
x\in\Gamma}}
V_1(x)>E';\ \lim_{\substack{
\mathrm{Re}\, x\to+\infty\\
x\in\Gamma}}V_2(x)<E'.
\end{align*}\\
\textbf{Assumption (A3)}
There exist real numbers $a_0<b_0<0<c_0$ such that
\begin{itemize}
\item $V_1>E'$ and $V_2>E'$ on $(-\infty,a_0)$;
\item $V_1<E'<V_2$ on $(a_0,b_0)$;
\item $E'<V_1<V_2$ on $(b_0,0)$;
\item $E'<V_2<V_1$ on $(0,c_0)$;
\item $V_2<E'<V_1$ on $(c_0,+\infty)$,
\end{itemize}
Moreover, one has
$$V_1'(a_0)<0,\ V_1'(b_0)>0,\ V_2'(c_0)<0,\ V_1'(0)>V_2'(0).$$
\textbf{Assumption (A4)}
$r_0$ and $r_1$ are bounded analytic functions on $\Gamma$, and $r_0(x)$ and $r_1(x)$ are real when $x$ is real.

As in \cite{FMW} we can define the resonances of $P$ as eigenvalues of the operator $P$ acting on $L^2(\mathbb R_{\theta})\oplus L^2(\mathbb R_{\theta})$ where $\mathbb R_{\theta}$ is a complex distortion of $\mathbb R$ that coincides with $e^{i\theta}\mathbb R$ for $x\gg1$. We denote by $\mathrm{Res}(P)$ the set of the resonances of $P$.

For $d>0$ small enough, we set $I:=[E'-d,E'+d]$. 
We fix $C_0$ arbitrarily large, and we study the resonances of $P$ lying in the set
$$\mathcal D_I:=\{E\in \mathbb C;\mathrm{Re}\, E\in I,\ -C_0h<\mathrm{Im}\, E<0\}.$$
For $E\in \mathcal D_I$, $V_1(x)=E$ has only two solutions and we denote the solution with the smaller real part and the larger one by $a=a(E)$ and $b=b(E)$ respectively. We also denote the unique solution of $V_2(x)=E$ for $E\in \mathcal D_I$ by $c=c(E)$. For $E\in \mathcal D_I$ we define the action,
$$\mathcal A(E):=\int_{a(E)}^{b(E)}\sqrt{E-V_1(t)}dt.$$
Then our result is the following.
\begin{thm}\label{main}
Under Assumptions (A1)-(A4), for $h>0$ small enough, one has,
$$\mathrm{Res}(P)\cap \mathcal D_I=\{E_k(h);k\in\mathbb Z\}\cap \mathcal D_I$$
where $E_k(h)$'s are complex numbers that satisfy,
$$\mathrm{Re}\, E_k(h)=e_k(h)+\mathcal O(h^{3/2})$$
\begin{align*}
\mathrm{Im}\, E_k(h)=&-\frac{h^2\pi}{4}\mathcal A'(e_k(h))^{-1}e^{-2S(e_k)/h}(V_1(0)-e_k(h))^{-1/2}(V_1'(0)-V_2'(0))^{-1}\\
&\cdot(r_0(0)+r_1(0)\sqrt{V_1(0)-e_k(h)})^2+\mathcal O(h^{5/2}e^{-2S(e_k)/h}),
\end{align*}
uniformly as $h\to 0_+$ where
$$e_k=e_k(h):=\mathcal A^{-1}((k+\frac{1}{2})\pi h),$$
$$S(e_k)=\int_{b(e_k)}^0\sqrt{V_1(t)-e_k}dt+\int^{c(e_k)}_0\sqrt{V_2(t)-e_k}dt.$$
\end{thm}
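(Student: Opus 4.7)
The plan is to characterize $E\in\mathcal D_I$ as a resonance by a single scalar quantization condition obtained from matching global solutions of $(P-E)u=0$, and then to solve that condition perturbatively. Specifically, $E$ is a resonance if and only if the $2$-dimensional space of solutions decaying at $\mathrm{Re}\, x=-\infty$ meets nontrivially the $2$-dimensional space of outgoing solutions at $\mathrm{Re}\, x=+\infty$ along the distorted contour; since the system has order $4$, this matching condition is the vanishing of a single determinant in $E$.

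I would split $\mathbb R$ into four subintervals: a left piece containing the well turning points $a$ and $b$, two middle pieces straddling the crossing (roughly $(b,0)$ and $(0,c)$) on which both potentials lie above $\mathrm{Re}\, E$, and a right piece containing $c$ and extending to $+\infty$. On each middle piece the scalar operators $P_j-E$ have two exponential WKB modes with phases $\pm\int\sqrt{V_j-E}\,dx/h$, giving four decoupled modes. Following the Yafaev-type construction recalled in Section \ref{secondsec}, for each of these modes I would construct a fundamental solution of the full $2\times 2$ system by successive approximation, treating the first-order coupling $hW$ as a perturbation. The key point, and the reason one needs four intervals rather than the two of \cite{FMW}, is that one must retain four independent Green's functions on each middle piece so that $hW$ applied to a leading mode produces a strictly smaller mode; only then does the Neumann iteration converge with a controlled remainder of relative size $h$.

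On the outer intervals I would use the classical Airy-type analysis at the scalar turning points $a$, $b$, $c$ to express the two decaying-at-$-\infty$ and the two outgoing-at-$+\infty$ solutions in the WKB basis of the adjacent middle interval. At each interface one obtains a connection matrix; those at $b$ and $c$ are standard scalar connections, while the delicate one is at $x=0$, where $V_1-V_2$ vanishes transversally and the leading parts of the $V_1$- and $V_2$-WKB modes coincide up to $\sqrt{V_1(0)-e_k}$. I would compute this $4\times 4$ transition matrix to leading order and then, as indicated in the introduction, change basis so that two entries vanish, cleanly separating the Bohr--Sommerfeld contribution from the tunneling contribution in the resulting scalar resonance equation.

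Expanding that equation, the leading real balance is the Bohr--Sommerfeld condition $\mathcal A(E)=(k+1/2)\pi h$, which immediately yields $\mathrm{Re}\, E_k=e_k+\mathcal O(h^{3/2})$; the leading imaginary balance comes from the single entry of the reduced matrix that carries the tunneling factor. That entry has order $e^{-S(e_k)/h}$, its square gives the announced $e^{-2S(e_k)/h}$, the amplitude factor $(r_0(0)+r_1(0)\sqrt{V_1(0)-e_k})^2$ arises from evaluating the symbol of $W$ on the WKB amplitude at the crossing, the prefactor $(V_1(0)-e_k)^{-1/2}(V_1'(0)-V_2'(0))^{-1}$ arises from the change of basis at the transversal crossing combined with the WKB normalizations, and $\mathcal A'(e_k)^{-1}$ appears as the density-of-states factor from differentiating the Bohr--Sommerfeld condition. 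The main obstacle throughout is precisely the control of the Neumann remainders: with only two intervals the errors dominate the tunneling term and the argument breaks down, whereas with the four-interval construction and the chosen Green's functions the transition matrices have errors of size $\mathcal O(h)$ times their leading parts, which is what allows $\mathrm{Im}\, E_k$ to be captured with the relative error $\mathcal O(h^{1/2})$ stated in the theorem.
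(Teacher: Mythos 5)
Your proposal follows essentially the same route as the paper: characterize resonances as the vanishing of a $4\times4$ Wronskian of solutions built by Neumann iteration on the four intervals $(-\infty,b]$, $[b,0]$, $[0,c]$, $[c,+\infty)$ (along the distorted contour), compute the $4\times4$ transition matrix at the crossing, change basis so that alternating entries vanish, and solve the resulting scalar quantization condition. One small quantitative slip in your closing sentence: the paper's transition-matrix entries have relative remainders of size $\mathcal O(h^{1/2})$ (not $\mathcal O(h)$), e.g.\ $t_{12}=(1+\mathcal O(h^{1/2}))e^{A_1+B_1}$ and $t_{23}=-h^{1/2}\sqrt{\pi}\,e^{-A_1-A_2}(\cdots)+\mathcal O(h\,e^{-A_1-A_2})$, and it is precisely this $\mathcal O(h^{1/2})$ relative error — propagated through the quantization condition and a Rouch\'e argument in two stages (first locate the real root $\tilde e_k$ of the cosine equation, then the full root $E_k$) — that yields the relative remainder $\mathcal O(h^{1/2})$ in $\mathrm{Im}\,E_k$.
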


\section{Some preliminaries}\label{secondsec}
As in \cite{FMW} let $x_{\infty}>0$ be a sufficiently large number and  $f\in C^{\infty}(\mathbb R_+,\mathbb R_+)$ be a function such that $f(x)=x$ for $x$ large enough, $f(x)=0$ for $x\in [0,x_{\infty}]$ and
$$f'(t)\geq \delta(t^{-1}f(t)+t^{-3}f(t)^3),$$
for some constant $\delta>0$. In the sequel we will use the following notation:
\begin{align*}
&I_b:=[b,0]\ ;\ I_c:=[0,c];\\
&I_L:=(-\infty,b]\  ;\  I_R^{\theta}:=F_{\theta}([c,+\infty)),
\end{align*}
for real $b$ and $c$ where $F_{\theta}(x):=x+i\theta f(x)$. For complex $b$ and $c$, we denote by the same notation appropriate curves in the complex plane connecting the end points.
Then $\theta>0$ small enough, $E\in\mathcal D_I$ and some positive constant $C$, we have
$$\mathrm{Im}\, \int_{x_{\infty}}^{F_{\theta}}\sqrt {E-V_2(t)}dt\geq -Ch,$$
where the integral is taken along $I_R^{\theta}$.

We fix $E'\in I$ and $x_0\in (a(E'),b(E'))$. Then we can define a function $\xi_1(x;E)$ which depends analytically on $x\in \{x\in \mathbb C;\lvert \mathrm{Im}\, (x-b(E'))\rvert<\delta'\langle \mathrm{Re}\, (x-b(E'))\rangle,\ \mathrm{Re}\, x>x_0\}$ with $\delta'>0$ small enough and $E\in \mathcal D_I$ sufficiently close to $E'$ as follows:
$$\xi_1(x;E)=\left(\frac{3}{2}\int_{b(E)}^x\sqrt{V_1(t)-E}dt\right)^{2/3}.$$
Similarly, we can define,
$$\xi_2(x;E)=\left(\frac{3}{2}\int_{c(E)}^x\sqrt{E-V_2(t)}dt\right)^{2/3},$$
which depends analytically on $E\in \mathcal D_I$ and $x\in \{x\in \mathbb C;\lvert \mathrm{Im}\, (x-c(E))\rvert<\delta'\langle \mathrm{Re}\, (x-c(E))\rangle\}$ with $\delta'>0$ small enough.

In the same way as in \cite[Appendix A.2]{FMW} we have solutions to $(P_j-E)u=0$ for $E\in \mathcal D_I$. We denote by $\mathrm{Ai}(x)$ and $\mathrm{Bi}(x)$ the Airy functions.
\begin{pro}\label{yafaev}
Let $E\in \mathcal D_I$. Then,\\
(i) For sufficiently small $h>0$, the equation $(P_1-E)u=0$ admits two solutions $u_{1,R}^{\pm}$ on $\Gamma_1=\{x\in \mathbb C;\lvert\mathrm{Im}\, (x-b(E))\rvert\leq \delta_1\lvert\mathrm{Re}\, (x-b(E))\rvert,\ \mathrm{Re}\, x\geq x_0\}$ with sufficiently small $\delta_1>0$ such that as $x\to +\infty$,
$$u_{1,R}^{\pm}(x)\sim(1+\mathcal O(h))\frac{h^{1/6}}{\sqrt{\pi}}(V_1(x)-E)^{-1/4}e^{\pm\int_{b(E)}^x\sqrt{V_1(t)-E}dt/h},$$
uniformly with respect to $h>0$ small enough, and as $h\to 0_+$,
\begin{align*}
u_{1,R}^-=&2(\xi'_1(x))^{-1/2}\mathrm{Ai}(h^{-2/3}\xi_1(x))(1+\mathcal O(h))\\
&\qquad \qquad \qquad \qquad on\ \Gamma_1\cap \{\mathrm{Re}\, \xi_1(x)\geq 0\};\\
u_{1,R}^-=&2(\xi'_1(x))^{-1/2}\mathrm{Ai}(h^{-2/3}\xi_1(x))+\mathcal O(h(1+h^{-2/3}\lvert \xi_1(x)\rvert)^{-1/4}))\\
&\qquad \qquad \qquad \qquad on\ \Gamma_1\cap \{\mathrm{Re}\, \xi_1(x)\leq 0\};\\
u_{1,R}^+=&(\xi'_1(x))^{-1/2}\mathrm{Bi}(h^{-2/3}\xi_1(x))(1+\mathcal O(h))\\
&\qquad \qquad \qquad \qquad on\ \Gamma_1\cap \{\mathrm{Re}\, \xi_1(x)\geq 0\};\\
u_{1,R}^+=&(\xi'_1(x))^{-1/2}\mathrm{Bi}(h^{-2/3}\xi_1(x))+\mathcal O(h(1+h^{-2/3}\lvert \xi_1(x)\rvert)^{-1/4}))\\
&\qquad \qquad \qquad \qquad on\ \Gamma_1\cap \{\mathrm{Re}\, \xi_1(x)\leq 0\}.
\end{align*}
(ii) For sufficiently small $h>0$, there exist two constants $a_2^{\pm}=1+\mathcal O(h)$ and the solutions $u_{2,L}^{\pm}$ to the equation $(P_2-E)u=0$ on $\Gamma_2=\{x\in \mathbb C;\lvert\mathrm{Im}\, (x-c(E))\rvert\leq \delta_2\lvert\mathrm{Re}\, (x-c(E))\rvert\}$ with sufficiently small $\delta_2>0$ such that,
\begin{align*}
&u_{2,L}^{\pm}(x)\sim(1+\mathcal O(h))\frac{h^{1/6}}{\sqrt{\pi}}(V_2(x)-E)^{-1/4}e^{\mp\int_{c(E)}^x\sqrt{V_2(t)-E}dt/h},\ (x\to -\infty);\\
&e^{\mp i\frac{\pi}{4}}(\frac{1}{2}a_2^-u_{2,L}^-\pm ia_2^+u_{2,L}^+(x))\\
&\quad \sim(1+\mathcal O(h))\frac{h^{1/6}}{\sqrt{\pi}}(E-V_2(x))^{-1/4}e^{\mp i\int_{c(E)}^x\sqrt{E-V_2(t)}dt/h},\ (x\to +\infty),
\end{align*}
uniformly with respect to $h>0$ small enough, and as $h\to 0_+$,
\begin{align*}
u_{2,L}^-=&2(\xi'_2(x))^{-1/2}\check{\mathrm{Ai}}(h^{-2/3}\xi_2(x))+\mathcal O(h(1+h^{-2/3}\lvert \xi_2(x)\rvert)^{-1/4}))\\
&\qquad \qquad \qquad \qquad on\ \Gamma_2\cap \{\mathrm{Re}\, \xi_2(x)\geq 0\};\\
u_{2,L}^-=&2(\xi'_2(x))^{-1/2}\check{\mathrm{Ai}}(h^{-2/3}\xi_2(x))(1+\mathcal O(h))\\
&\qquad \qquad \qquad \qquad on\ \Gamma_2\cap \{\mathrm{Re}\, \xi_2(x)\leq 0\};\\
u_{2,L}^+=&(\xi'_2(x))^{-1/2}\check{\mathrm{Bi}}(h^{-2/3}\xi_2(x))+\mathcal O(h(1+h^{-2/3}\lvert \xi_2(x)\rvert)^{-1/4}))\\
&\qquad \qquad \qquad \qquad on\ \Gamma_2\cap \{\mathrm{Re}\, \xi_2(x)\geq 0\};\\
u_{2,L}^+=&(\xi'_2(x))^{-1/2}\check{\mathrm{Bi}}(h^{-2/3}\xi_2(x))(1+\mathcal O(h))\\
&\qquad \qquad \qquad \qquad on\ \Gamma_2\cap \{\mathrm{Re}\, \xi_2(x)\leq 0\},
\end{align*}
where $\check{\mathrm{Ai}}(x)=\mathrm{Ai}(-x)$ and $\check{\mathrm{Bi}}(x)=\mathrm{Bi}(-x)$.
\end{pro}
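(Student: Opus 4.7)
The plan is to adapt the Liouville--Green/Yafaev construction of \cite{FMW}, Appendix A.2, to the present setting, treating the neighborhoods of the two turning points $b(E)$ and $c(E)$ separately but in parallel. The functions $\xi_j$ have been introduced precisely because they are the Liouville--Green substitutions that conjugate the leading-order behavior of $(P_j-E)u=0$ to an Airy equation. First I set $v(\xi):=(\xi_1'(x))^{1/2}u(x)$ with $\xi=\xi_1(x)$; a direct computation shows that $u$ solves $(P_1-E)u=0$ if and only if $v$ satisfies
$$-h^2 v''(\xi)+\xi\, v(\xi)=h^2\Phi_1(\xi;E)v(\xi),$$
where $\Phi_1(\xi;E)=(\xi_1')^{-3/2}\partial_x^2(\xi_1')^{-1/2}$ is analytic and bounded on the image of $\Gamma_1$ by Assumption~(A1) and the simplicity of the zero of $V_1-E$ at $b(E)$. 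After rescaling $\zeta=h^{-2/3}\xi$, this is a perturbation of the Airy equation with effective small parameter $h^{4/3}$.

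Next, using the Airy Wronskian $W(\mathrm{Ai},\mathrm{Bi})=\pi^{-1}$ and variation of constants, I write $u_{1,R}^-$ as the fixed point of an integral equation of the form
$$v_-(\zeta)=\mathrm{Ai}(\zeta)+\pi h^{4/3}\int_\zeta^{+\infty}\bigl(\mathrm{Ai}(\zeta')\mathrm{Bi}(\zeta)-\mathrm{Ai}(\zeta)\mathrm{Bi}(\zeta')\bigr)\widetilde{\Phi}_1(\zeta';h)v_-(\zeta')\,d\zeta',$$
integrated along a contour in $\xi_1(\Gamma_1)$ that escapes to $+\infty$ in a sector where $\mathrm{Ai}$ decays, and $u_{1,R}^+$ as the analogous fixed point based on $\mathrm{Bi}$ with the opposite contour. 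The key step is to show that, in a weighted space built from the natural Airy weight $(1+|\zeta|)^{-1/4}$ times the relevant exponential factor, the integral operator has norm $\mathcal O(h)$; this follows from the standard large-argument asymptotics of $\mathrm{Ai}$ and $\mathrm{Bi}$ together with the choice of contour. The Neumann series then converges for $h$ small and produces precisely the two error forms stated in the proposition: a multiplicative $(1+\mathcal O(h))$ correction in the exponentially dominated region $\{\mathrm{Re}\,\xi_1\geq 0\}$, where the Airy solution supplies the dominant scale, and an additive $\mathcal O(h(1+h^{-2/3}|\xi_1|)^{-1/4})$ remainder in the oscillatory region $\{\mathrm{Re}\,\xi_1\leq 0\}$, where the Airy solution is comparable to its WKB envelope.

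The WKB prefactor at $+\infty$ is then extracted from $\mathrm{Ai}(\zeta)\sim(2\sqrt\pi)^{-1}\zeta^{-1/4}e^{-\frac{2}{3}\zeta^{3/2}}$ combined with the identities $\frac{2}{3}\xi_1^{3/2}=\int_{b(E)}^x\sqrt{V_1-E}\,dt$ and $(\xi_1')^{-1/2}\xi_1^{-1/4}=(V_1-E)^{-1/4}$; the factor $h^{1/6}$ arises from the scaling inside $\mathrm{Ai}(h^{-2/3}\xi)$. Part~(ii) is handled by the same construction at $c(E)$, with $\check{\mathrm{Ai}},\check{\mathrm{Bi}}$ replacing $\mathrm{Ai},\mathrm{Bi}$ because the roles of the classically allowed and forbidden regions are swapped (one has $V_2>E$ for $x<c$). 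The constants $a_2^\pm=1+\mathcal O(h)$ are obtained by matching the Neumann-series solutions at $+\infty$ via the connection formulae $\check{\mathrm{Ai}}(\zeta)\sim\pi^{-1/2}\zeta^{-1/4}\cos(\frac{2}{3}\zeta^{3/2}-\frac{\pi}{4})$ and $\check{\mathrm{Bi}}(\zeta)\sim\pi^{-1/2}\zeta^{-1/4}\sin(\frac{2}{3}\zeta^{3/2}-\frac{\pi}{4})$; the $\mathcal O(h)$ discrepancy from the exact value $a_2^\pm=1$ for the pure Airy equation is exactly the contribution of the Neumann iteration.

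The main technical obstacle is verifying that the weighted norm estimate for the integral operator holds uniformly along contours that run from the turning point out to infinity inside $\Gamma_j$. The Airy Green's kernel grows exponentially in the wrong direction at infinity, and only a careful choice of contour keeps the integrand absolutely integrable and the operator bound uniform in $E\in\mathcal D_I$. Assumption~(A1) and the complex distortion $F_\theta$ introduced in Section~\ref{secondsec} are used here: they allow bending the contour into a sector where $\mathrm{Ai}$ and $\mathrm{Bi}$ have the correct monotone behaviour, and they supply the uniformity in $h\in(0,h_0]$ needed for the asymptotic statements.
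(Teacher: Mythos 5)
Your reconstruction matches the expected proof. The paper itself does not actually prove Proposition~\ref{yafaev} but simply cites \cite[Appendix~A.2]{FMW}, which in turn rests on Yafaev's construction \cite{Ya}; that construction is exactly the Liouville--Green change of variable $\xi_j(x)$, conjugation to a perturbed Airy equation with effective parameter $h^{4/3}$, and a Volterra/Neumann iteration against the Airy Green kernel along contours that stay in decay sectors of $\mathrm{Ai}$ and $\mathrm{Bi}$, producing the multiplicative $(1+\mathcal O(h))$ correction on $\{\mathrm{Re}\,\xi_j\ge 0\}$ and the additive $\mathcal O(h(1+h^{-2/3}|\xi_j|)^{-1/4})$ error on $\{\mathrm{Re}\,\xi_j\le 0\}$, as you describe. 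Two minor points: the norm bound $\mathcal O(h)$ on the integral operator deserves a line of justification (the $h^{4/3}$ prefactor against the kernel produces $\mathcal O(h)$ only after accounting for the $h^{-1/3}$-type contribution from integrating over a contour of length $\mathcal O(h^{-2/3})$ in the rescaled variable with the $(1+|\zeta|)^{-1/2}$ decay of the diagonal Airy kernel), and your asymptotic for $\check{\mathrm{Bi}}$ has a sign slip: $\mathrm{Bi}(-z)\sim -\pi^{-1/2}z^{-1/4}\sin(\tfrac{2}{3}z^{3/2}-\tfrac{\pi}{4})$, not $+\sin$. Neither affects the validity of the overall scheme.
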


\begin{rem}
Similarly, we have two solutions $u_{1,L}^{\pm}$ on $\tilde \Gamma_1=\{x\in \mathbb C;\lvert\mathrm{Im}\, (x-a(E))\rvert\leq \tilde \delta_1\lvert\mathrm{Re}\, (x-a(E))\rvert,\ \mathrm{Re}\, x\leq x_0\}$ with sufficiently small $\tilde \delta_1>0$ with the asymptotic behavior,
$$u_{1,L}^{\pm}(x)\sim(1+\mathcal O(h))\frac{h^{1/6}}{\sqrt{\pi}}(V_1(x)-E)^{-1/4}e^{\mp\int_{a(E)}^x\sqrt{V_1(t)-E}dt/h},\ (x\to -\infty).$$
\end{rem}

As in \cite[Proposition 5.1]{FMW} we have,
\begin{pro}\label{LRC}
Let $u_{1,R}^{\pm}$ and $u_{1,L}^{\pm}$ as above. Then one has
$$u_{1,L}^{\pm}=a_{\pm}u_{1,R}^-+b_{\pm}u_{1,R}^+$$
with,
\begin{align*}
&a_-=\sin\frac{\mathcal A(E)}{h}+\mathcal O(h)\quad;\quad b_-=2\cos\frac{\mathcal A(E)}{h}+\mathcal O(h)\\
&a_+=\frac{1}{2}\cos\frac{\mathcal A(E)}{h}+\mathcal O(h)\quad;\quad b_+=-\sin\frac{\mathcal A(E)}{h}+\mathcal O(h),
\end{align*}
as $h\to 0_+$.
\end{pro}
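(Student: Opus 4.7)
The plan is to identify $a_\pm,b_\pm$ as ratios of Wronskians and to compute those Wronskians by matching WKB asymptotics in the classically allowed region $(a(E),b(E))$. Since $(P_1-E)u=0$ is a second-order linear ODE, its solution space is two-dimensional. The Wronskian $W(u_{1,R}^-,u_{1,R}^+)=u_{1,R}^-(u_{1,R}^+)'-(u_{1,R}^-)'u_{1,R}^+$ is constant in $x$, and reading off the leading contribution from the $x\to+\infty$ asymptotics in Proposition \ref{yafaev}(i) gives
$$W(u_{1,R}^-,u_{1,R}^+)=\frac{2}{\pi h^{2/3}}\bigl(1+\mathcal O(h)\bigr)\neq 0.$$
Consequently $(u_{1,R}^-,u_{1,R}^+)$ is a basis, the decomposition $u_{1,L}^\pm=a_\pm u_{1,R}^-+b_\pm u_{1,R}^+$ is unique, and
$$a_\pm=\frac{W(u_{1,L}^\pm,u_{1,R}^+)}{W(u_{1,R}^-,u_{1,R}^+)},\qquad b_\pm=\frac{W(u_{1,R}^-,u_{1,L}^\pm)}{W(u_{1,R}^-,u_{1,R}^+)}.$$
It therefore suffices to evaluate the four numerators.

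I would compute these on a compact subinterval $K\Subset(a(E),b(E))$, where Proposition \ref{yafaev}(i) and its analogue at the left turning point express both $u_{1,R}^\pm$ and $u_{1,L}^\pm$ through Airy-function formulas. The classical large-argument expansions
$$\mathrm{Ai}(-z)=\tfrac{1}{\sqrt\pi}z^{-1/4}\sin\bigl(\tfrac{2}{3}z^{3/2}+\tfrac{\pi}{4}\bigr)+\mathcal O(z^{-7/4}),\ \ \mathrm{Bi}(-z)=\tfrac{1}{\sqrt\pi}z^{-1/4}\cos\bigl(\tfrac{2}{3}z^{3/2}+\tfrac{\pi}{4}\bigr)+\mathcal O(z^{-7/4}),$$
together with their derivative versions, convert the Airy expressions into oscillatory WKB forms
\begin{align*}
u_{1,R}^-(x)&=\tfrac{2h^{1/6}}{\sqrt\pi}(E-V_1)^{-1/4}\sin\theta_R(x)+\mathcal O(h^{7/6}),\\
u_{1,R}^+(x)&=\tfrac{h^{1/6}}{\sqrt\pi}(E-V_1)^{-1/4}\cos\theta_R(x)+\mathcal O(h^{7/6}),
\end{align*}
uniformly on $K$, where $\theta_R(x)=h^{-1}\int_x^{b(E)}\sqrt{E-V_1}\,dt+\pi/4$, together with parallel formulas for $u_{1,L}^\pm$ in terms of $\theta_L(x)=h^{-1}\int_{a(E)}^x\sqrt{E-V_1}\,dt+\pi/4$ and corresponding estimates for the derivatives (the dominant contribution $\pm h^{-5/6}\cdot(E-V_1)^{1/4}\pi^{-1/2}$ coming from differentiating the phase).

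The decisive algebraic fact is that $\theta_L(x)+\theta_R(x)=\mathcal A(E)/h+\pi/2$ is independent of $x$. Inserting the WKB forms into the Wronskians, the $(E-V_1)^{\pm 1/4}$ prefactors cancel, the phase derivative produces the factor $\pm\sqrt{E-V_1}/h$, and each trigonometric cross product collapses by the addition formulas to $\sin(\theta_L+\theta_R)=\cos(\mathcal A(E)/h)$ or $-\cos(\theta_L+\theta_R)=\sin(\mathcal A(E)/h)$. Treating the four cases in turn yields, for instance,
$$W(u_{1,L}^-,u_{1,R}^+)=\tfrac{2}{\pi h^{2/3}}\sin\tfrac{\mathcal A(E)}{h}+\mathcal O(h^{1/3}),\ \ W(u_{1,R}^-,u_{1,L}^-)=\tfrac{4}{\pi h^{2/3}}\cos\tfrac{\mathcal A(E)}{h}+\mathcal O(h^{1/3}),$$
and analogous expressions for $W(u_{1,L}^+,u_{1,R}^+)$ and $W(u_{1,R}^-,u_{1,L}^+)$, the additional factors $1,2,4$ stemming from the Ai vs.\ Bi prefactors. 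Dividing by $W(u_{1,R}^-,u_{1,R}^+)$ delivers the four stated asymptotics.

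The only delicate step is the error bookkeeping. The $\mathcal O(h^{7/6})$ remainder in a function paired with the $\mathcal O(h^{-5/6})$ leading derivative (and the analogous pairing using the Airy-derivative asymptotics) contributes $\mathcal O(h^{1/3})$ to each numerator Wronskian, so dividing by the $\mathcal O(h^{-2/3})$ denominator produces the claimed $\mathcal O(h)$ corrections in $a_\pm, b_\pm$. This is the main technical point; once it is in place, the remainder of the argument is routine trigonometry and is structurally identical to the proof of \cite[Proposition 5.1]{FMW}.
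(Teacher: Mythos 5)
Your proposal is correct, and it fills in the standard argument that the paper itself omits (the paper simply cites \cite[Proposition 5.1]{FMW} for this connection formula): express $a_\pm,b_\pm$ as Wronskian ratios, switch to the oscillatory Airy asymptotics on a compact subinterval of the well where the additive remainders in Proposition~\ref{yafaev}(i) become $\mathcal O(h^{7/6})$, and reduce each Wronskian via $\theta_L(x)+\theta_R(x)=\mathcal A(E)/h+\pi/2$; the resulting leading factors $2/(\pi h^{2/3})\sin(\mathcal A/h)$, $4/(\pi h^{2/3})\cos(\mathcal A/h)$, $1/(\pi h^{2/3})\cos(\mathcal A/h)$, $-2/(\pi h^{2/3})\sin(\mathcal A/h)$ divided by $W(u_{1,R}^-,u_{1,R}^+)=2/(\pi h^{2/3})(1+\mathcal O(h))$ give exactly the four stated coefficients with $\mathcal O(h)$ errors. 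This is the same Wronskian-plus-Airy connection argument used in \cite{FMW}; the only point you treat slightly implicitly is that the derivative remainders for $u_{1,L}^\pm,u_{1,R}^\pm$ also have the order needed for the $\mathcal O(h^{1/3})$ Wronskian error, which is supplied by the Yafaev-type construction underlying Proposition~\ref{yafaev}.
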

Setting
$$u_{2,R}^{\pm}:=e^{\mp i\frac{\pi}{4}}(\frac{1}{2}a_2^-u_{2,L}^-\pm ia_2^+u_{2,L}^+),$$
by Proposition \ref{yafaev} we have,
$$u_{2,R}^{\pm}\sim(1+\mathcal O(h))\frac{h^{1/6}}{\sqrt{\pi}}(E-V_2(x))^{-1/4}e^{\mp i\int_{c(E)}^x\sqrt{E-V_2(t)}dt/h},\ (x\to +\infty).$$

The Wronskians $\mathcal W[u_{j,L}^-,u_{j,L}^+]$ and $\mathcal W[u_{j,R}^-,u_{j,R}^+]$ are independent of the variable $x$ and satisfies
\begin{equation}\label{myeq2.0.1}
\begin{split}
&\mathcal W[u_{j,L}^-,u_{j,L}^+]=-\frac{2}{\pi h^{2/3}}(1+\mathcal O(h)),\\
&\mathcal W[u_{1,R}^-,u_{1,R}^+]=\frac{2}{\pi h^{2/3}}(1+\mathcal O(h)),\\
&\mathcal W[u_{2,R}^-,u_{2,R}^+]=-\frac{2i}{\pi h^{2/3}}(1+\mathcal O(h)).
\end{split}
\end{equation}
To see the growth and decay from the points $b$ and $c$, we define the solutions $u_{j,b}^{\pm}$ on $I_L$, $u_{j,c}^{\pm}$ on $I_R^{\theta}$, $v_{j,b}^{\pm}$ on $I_b$ and $v_{j,c}^{\pm}$ on $I_c$ as follows:
\begin{align*}
&u_{1,b}^{\pm}:=u_{1,L}^{\pm},\ u_{2,b}^+:=e^{-S_2/h}u_{2,L}^+,\ u_{2,b}^-:=e^{S_2/h}u_{2,L}^-,\\
&u_{2,c}^{\pm}:=u_{2,R}^{\pm},\ u_{1,c}^+:=e^{-S_1/h}u_{1,R}^+,\ u_{1,c}^-:=e^{S_1/h}u_{1,R}^-,\\
&v_{1,b}^{\pm}:=u_{1,R}^{\pm},\ v_{2,b}^+:=e^{S_2/h}u_{2,L}^-,\ v_{2,b}^-:=e^{-S_2/h}u_{2,L}^+,\\
&v_{2,c}^{\pm}:=u_{2,L}^{\pm},\ v_{1,c}^+:=e^{S_1/h}u_{1,R}^-,\ v_{1,c}^-:=e^{-S_1/h}u_{1,R}^+,
\end{align*}
where $S_1:=\int_b^c\sqrt{V_1(t)-E}dt,\ S_2:=\int_b^c\sqrt{V_2(t)-E}dt$.

\section{Fundamental solutions}\label{section2.1}
In this section we introduce fundamental solutions used to construct solutions to the system.
\subsection{Fundamental solutions on $I_b$}\mbox{} \\
We define fundamental solutions
$$K_{1,b},\ K_{1,b}',\ K_{1,b}'': C(I_b)\to C^2(I_b),$$
of $P_1-E$ and
$$K_{2,b}: C(I_b)\to C^2(I_b),$$
of $P_2-E$ by setting for $v\in C(I_b)$,
\begin{align*}
&K_{1,b}[v](x):=\frac{1}{h^2\mathcal W[v_{1,b}^-,v_{1,b}^+]}\left(v_{1,b}^-(x)\int^x_bv_{1,b}^+(t)v(t)dt+v_{1,b}^+(x)\int^0_xv_{1,b}^-(t)v(t)dt\right),\\
&K_{1,b}'[v](x):=\frac{1}{h^2\mathcal W[v_{1,b}^-,v_{1,b}^+]}\left(v_{1,b}^-(x)\int^x_bv_{1,b}^+(t)v(t)dt-v_{1,b}^+(x)\int^x_bv_{1,b}^-(t)v(t)dt\right),\\
&K_{1,b}''[v](x):=\frac{1}{h^2\mathcal W[v_{1,b}^-,v_{1,b}^+]}\left(-v_{1,b}^-(x)\int^0_xv_{1,b}^+(t)v(t)dt+v_{1,b}^+(x)\int^0_xv_{1,b}^-(t)v(t)dt\right),\\
&K_{2,b}[v](x):=\frac{1}{h^2\mathcal W[v_{2,b}^-,v_{2,b}^+]}\left(v_{2,b}^-(x)\int^x_bv_{2,b}^+(t)v(t)dt+v_{2,b}^+(x)\int^0_xv_{2,b}^-(t)v(t)dt\right),\\
\end{align*}
where $\mathcal W[v_{1,b}^-,v_{1,b}^+]$ is the Wronskian of $v_{1,b}^-$ and $v_{1,b}^+$ and so on.
Then $K_{1,b},\ K_{1,b}'$ and $K_{1,b}''$ satisfy
$$(P_1-E)K_{1,b}={\bf 1},\ (P_1-E)K_{1,b}'={\bf 1},\ (P_1-E)K_{1,b}''={\bf 1},$$
and $K_{2,b}$ satisfies
$$(P_2-E)K_{2,b}={\bf 1}.$$
For an interval $I$ and sufficiently small $h_0>0$, let $v(x,h)\in C(I)$ be a family of functions satisfying $v(x,h)\neq0$ for $x\in I$, $0<h<h_0$. For $h\in (0,h_0]$ we define $C(v,h)$ as the set of continuous functions $u$ on $I$ equipped with the norm $\lVert u\rVert_{C(v,h)}:=\sup_{x\in I}\lvert u(x)\rvert\lvert v(x,h)\rvert^{-1}$. In the following, we consider families of functions $u(x,h)\in C(v,h)$, $h\in (0,h_0]$ and fimilies of operators $A(h)\in \mathcal L(C(v,h))$, $h\in (0,h_0]$.

Note that $v_{j,b}^{\pm}\neq 0$ and $v_{j,c}^{\pm}\neq 0$ on $I_b$ and $I_c$ respectively for sufficiently small $h$. In view of the construction of solutions to the system, we prove,

\begin{lem}\label{fundamental}
As $h$ goes to $0_+$ one has,
\begin{align}
&\lVert hK_{1,b}W\rVert_{\mathcal L(C(v_{1,b}^{\pm},h))}=\mathcal O(1), \label{myeq2.1}\\
&\lVert hK_{2,b}W^*\rVert_{\mathcal L(C(v_{1,b}^{\pm},h))}=\mathcal O(h^{1/2}), \label{myeq2.2}\\
&\lVert hK_{1,b}'W\rVert_{\mathcal L(C(v_{2,b}^{+},h))}=\mathcal O(h^{1/2}), \label{myeq2.4}\\
&\lVert hK_{1,b}''W\rVert_{\mathcal L(C(v_{2,b}^{-},h))}=\mathcal O(h^{1/2}), \label{myeq2.3}\\
&\lVert hK_{2,b}W^*\rVert_{\mathcal L(C(v_{2,b}^{\pm},h))}=\mathcal O(1). \label{myeq2.5}
\end{align}
Moreover, there exist complex numbers $\alpha_{j,b},\beta_{1,b}^{\pm},\beta_{2,b}$ depending on $v$ such that,
\begin{align}
&\begin{pmatrix}hK_{1,b}Wv(b)\\  h\partial_x (K_{1,b}Wv)(b)\end{pmatrix}=\alpha_{1,b}\begin{pmatrix}v_{1,b}^{+}(b)\\ \partial_xv_{1,b}^{+}(b)\end{pmatrix}, \label{myeq2.6}\\
&\begin{pmatrix}hK_{2,b}W^*v(b)\\  h\partial_x (K_{2,b}W^*v)(b)\end{pmatrix}=\alpha_{2,b}\begin{pmatrix}v_{2,b}^{+}(b)\\ \partial_xv_{2,b}^{+}(b)\end{pmatrix}, \label{myeq2.7}\\
&\begin{pmatrix}hK_{1,b}'Wv(b)\\  h\partial_x (K_{1,b}'Wv)(b)\end{pmatrix}=\begin{pmatrix}0\\ 0\end{pmatrix}, \label{myeq2.9}\\
&\begin{pmatrix}hK_{1,b}''Wv(b)\\  h\partial_x (K_{1,b}''Wv)(b)\end{pmatrix}=\beta_{1,b}^-\begin{pmatrix}v_{1,b}^{-}(b)\\ \partial_xv_{1,b}^{-}(b)\end{pmatrix}+\beta_{1,b}^+\begin{pmatrix}v_{1,b}^{+}(b)\\ \partial_xv_{1,b}^{+}(b)\end{pmatrix}, \label{myeq2.8}\\
&\begin{pmatrix}hK_{2,b}W^*v(b)\\  h\partial_x (K_{2,b}W^*v)(b)\end{pmatrix}=\beta_{2,b}\begin{pmatrix}v_{2,b}^{+}(b)\\ \partial_xv_{2,b}^{+}(b)\end{pmatrix}, \label{myeq2.10}
\end{align}
and
$$\lvert \alpha_{1,b}\rvert\leq C\lVert v\rVert_{C(v_{1,b}^{\pm},h)},\ \lvert \alpha_{2,b}\rvert\leq Ch^{5/6}\lVert v\rVert_{C(v_{1,b}^{\pm},h)},\ \lvert \beta_{1,b}^{\pm}\rvert\leq Ch^{5/6}\lVert v\rVert_{C(v_{2,b}^{-},h)},\ \lvert \beta_{2,b}\rvert\leq C\lVert v\rVert_{C(v_{2,b}^{\pm},h)}.$$
\end{lem}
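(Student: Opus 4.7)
The plan is to establish the five operator-norm bounds \eqref{myeq2.1}--\eqref{myeq2.5} first and then read off the connection identities \eqref{myeq2.6}--\eqref{myeq2.10} with their size estimates from the explicit integral expressions for $K_{j,b}$ evaluated at $x=b$.

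I would start by collecting pointwise size information on $I_b$. For $t-b\gtrsim h^{2/3}$ Proposition \ref{yafaev} gives the WKB form
\[
v_{1,b}^{\pm}(t)\sim \frac{h^{1/6}}{\sqrt{\pi}}(V_1(t)-E)^{-1/4}e^{\pm T_1(t)/h},\qquad T_j(t):=\int_b^t\sqrt{V_j-E}\,ds,
\]
while on the Airy window $t-b\lesssim h^{2/3}$ the Airy representation forces $v_{1,b}^{\pm}(t)=O(1)$; since $c\notin I_b$, the functions $v_{2,b}^{\pm}$ are purely WKB on $I_b$ with the analogous formula for $T_2$ (no Airy window needed). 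Together with $\mathcal W[v_{j,b}^-,v_{j,b}^+]=\pm\frac{2}{\pi h^{2/3}}(1+O(h))$, which follows from the definitions and \eqref{myeq2.0.1}, the common prefactor $1/(h\mathcal W[v_{j,b}^-,v_{j,b}^+])$ is of order $h^{-1/3}$.

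Next I would split $W=r_0+hr_1\partial_x$ (and likewise $W^{*}$) inside each $K_{j,b}$. The $r_0 v$-piece produces a direct weighted integral, while the $hr_1v'$-piece is integrated by parts in $t$. A useful observation is that the $t=x$ boundary contributions generated by the two integrals in $K_{j,b}$ cancel identically, since they carry opposite signs via $v_{j,b}^{\pm}(x)v_{j,b}^{\mp}(x)$; the only boundary terms left after IBP live at $t=b$ and $t=0$, and are eventually absorbed into the $\alpha$- and $\beta$-coefficients. The remaining interior integrals now carry $\partial_t(v_{j,b}^{\pm}hr_1)$, which via the semiclassical eikonal relation $(v_{j,b}^{\pm})'\sim \pm h^{-1}\sqrt{V_j-E}\,v_{j,b}^{\pm}$ gives a term of the same shape as the $r_0$-piece with bounded coefficient. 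Each of the bounds \eqref{myeq2.1}--\eqref{myeq2.5} is thereby reduced to estimating model integrals $v_{j,b}^{\sigma}(x)\int v_{j,b}^{-\sigma}(t)v_{k,b}^{\sigma'}(t)\,dt$ divided by the target envelope. I would evaluate these integrals by Laplace asymptotics in the WKB region and by the Airy rescaling $\eta=h^{-2/3}\xi_1(t)$ near $b$. In the diagonal cases $j=k$ the factors $e^{\pm T_j/h}$ cancel, leaving an algebraic integral of size $O(h^{1/3})$ which, after multiplication by $h^{-1/3}$, gives the $O(1)$ estimates \eqref{myeq2.1} and \eqref{myeq2.5}. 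In the off-diagonal cases the exponents $T_1$ and $T_2$ do not cancel, but the integrand is effectively localized in an $O(h)$-scale window near $b$, and on that window one of the factors $v_{j,b}^{\pm}$ carries only the WKB prefactor $h^{1/6}$; the resulting integral is of size $O(h^{7/6})$. Combined with the $h^{-1/3}$ prefactor and the appropriate envelope, this yields the $O(h^{1/2})$ in \eqref{myeq2.2}, \eqref{myeq2.4}, \eqref{myeq2.3}, and, when the same integral is specialized at $x=b$ (where no further envelope division is needed), the sharper $O(h^{5/6})$ claimed for $\alpha_{2,b}$ and $\beta_{1,b}^{\pm}$.

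Finally, the connection identities follow directly from the integral definitions at $x=b$. There the integrals $\int_b^b$ vanish, so $K_{1,b}[Wv](b)$ and $\partial_xK_{1,b}[Wv](b)$ collapse to the common scalar $\frac{1}{h^2\mathcal W}\int_b^0v_{1,b}^{-}Wv\,dt$ times $(v_{1,b}^{+}(b),(v_{1,b}^{+})'(b))$, giving \eqref{myeq2.6} with $\alpha_{1,b}=\frac{1}{h\mathcal W}\int_b^0v_{1,b}^{-}Wv\,dt$; the same reading of $K_{2,b}$ yields \eqref{myeq2.7} and \eqref{myeq2.10}; both integrals in $K_{1,b}'$ vanish at $b$, giving the trivial identity \eqref{myeq2.9}; and $K_{1,b}''$ retains both integrals at $b$, producing the two-term combination \eqref{myeq2.8}. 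The size bounds on the coefficients $\alpha_{j,b}$, $\beta_{1,b}^{\pm}$, $\beta_{2,b}$ then follow by applying the integral estimates above to the specific integrals $\int_b^0$ defining them. The main obstacle will be the careful uniform matching of the Laplace and Airy regimes ($t-b\lesssim h^{2/3}$ versus $t-b\gtrsim h^{2/3}$) across all combinations of $v_{j,b}^{\pm}v_{k,b}^{\pm}$, uniformly in $E\in\mathcal D_I$ and in $h\in(0,h_0]$.
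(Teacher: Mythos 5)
Your reading of the connection identities \eqref{myeq2.6}--\eqref{myeq2.10} is correct and matches the paper: at $x=b$ one of the two integrals in $K_{1,b}$, $K_{2,b}$, $K_{1,b}'$ vanishes (giving a single-vector multiple of $(v_{j,b}^+(b),\partial v_{j,b}^+(b))$ or zero), while $K_{1,b}''$ retains both, giving the two-term combination. The general scheme — envelope functions $\tilde v$, integration by parts on the $hr_1\partial$ piece, the $h^{-1/3}$ from $1/(h\mathcal W)$, the diagonal/off-diagonal distinction — is also the paper's scheme, and your observation that the $t=x$ boundary terms cancel in $K_{1,b}$ and $K_{2,b}$ is a clean simplification.

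The gap is in the off-diagonal operator-norm estimates \eqref{myeq2.2}, \eqref{myeq2.4}, \eqref{myeq2.3}. You claim the off-diagonal integrals are ``effectively localized in an $O(h)$-scale window near $b$'' and get $O(h^{7/6})$ from that. That argument describes only the value at $x=b$ and is indeed what controls the coefficients $\alpha_{2,b}$, $\beta_{1,b}^\pm$ (it gives $h^{-1/3}\cdot h^{7/6}=h^{5/6}$ for them, which is correct). But the operator norm is a supremum over all $x\in I_b$, and the binding case is $x$ near the crossing $0$, not near $b$. There the kernel of $K_{2,b}$ forces localization near $t=x$ with exponent $\exp(-|T_2(t)-T_2(x)|/h\pm(T_1(t)-T_1(x))/h)$, and because $V_1(0)=V_2(0)$ the difference $\sqrt{V_1-E}-\sqrt{V_2-E}$ vanishes to first order at $0$, so the exponent degenerates quadratically, behaving like $\beta|x^2-t^2|/h$. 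The localization window is then $O(h^{1/2})$, not $O(h)$, and the resulting integral is $O(h^{5/6})$, giving $O(h^{1/2})$ after the $h^{-1/3}$ prefactor. The paper proves this separately via $\int_{-2\delta'}^0e^{-\beta|x^2-t^2|/h}\,dt=\mathcal O(h^{1/2})$; nothing in your proposal addresses this degeneracy. Indeed your arithmetic is internally inconsistent for the norm bounds: $O(h^{7/6})$ times $h^{-1/3}$ is $O(h^{5/6})$, and no legitimate ``envelope division'' supplies the extra $h^{-1/3}$ you invoke to reach $O(h^{1/2})$. Without the quadratic-phase (crossing) analysis, your argument cannot yield the stated norm estimates, and a version that only sees the localization at $b$ would in fact claim a different (and too strong) power.
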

\begin{proof}
(i) First, we shall prove \eqref{myeq2.1} and \eqref{myeq2.6}.
We set,
\begin{align*}
&\tilde v_{1,b}^{\pm}:=\max\{\lvert v_{1,b}^{\pm}\rvert, \lvert h\partial_x v_{1,b}^{\pm}\rvert\},\\
&\mathcal V_{1,b}(x,t):=\tilde v_{1,b}^-(x)\tilde v_{1,b}^+(t){\bf 1}_{\{t<x\}}+\tilde v_{1,b}^+(x)\tilde v_{1,b}^-(t){\bf 1}_{\{t>x\}}.
\end{align*}
By an integration by parts we have,
$$
\lvert hK_{1,b}Wv(x)\rvert=\mathcal O(h^{-1/3})\Bigg( \int_b^0\mathcal V_{1,b}(x,t)\lvert v(t)\rvert dt+h\mathcal V_{1,b}(x,0)\lvert v(0)\rvert+h\mathcal V_{1,b}(x,b)\lvert v(b)\rvert\Bigg ).
$$
Using the asymptotics of $v_{1,b}^{\pm}$ and $h\partial_xv_{1,b}^{\pm}$ on $I_b$ and fixing some constant $C_1>0$ sufficiently large we obtain,
\begin{itemize}
\item If $b\leq x,t\leq b+C_1h^{2/3}$, then,
$$\mathcal V_{1,b}(x,t)\lvert v_{1,b}^{\pm}(t)\rvert\lvert v_{1,b}^{\pm}(x)\rvert^{-1}=\mathcal O(1).$$
\item If $b\leq x\leq b+C_1h^{2/3}\leq t\leq 0$, then,
$$\mathcal V_{1,b}(x,t)\lvert v_{1,b}^{\pm}(t)\rvert\lvert v_{1,b}^{\pm}(x)\rvert^{-1}=\mathcal O(h^{1/3})\frac{e^{-\mathrm{Re}\int_b^t(V_1-E)^{1/2}/h\pm \mathrm{Re}\int_b^t(V_1-E)^{1/2}/h}}{\lvert V_1(t)-E\rvert^{1/2}}.$$
\item If $b\leq t\leq b+C_1h^{2/3}\leq x\leq 0$, then,
$$\mathcal V_{1,b}(x,t)\lvert v_{1,b}^{\pm}(t)\rvert\lvert v_{1,b}^{\pm}(x)\rvert^{-1}=\mathcal O(1)e^{-\mathrm{Re}\int_b^x(V_1-E)^{1/2}/h\mp \mathrm{Re}\int_b^x(V_1-E)^{1/2}/h}.$$
\item If $b+C_1h^{2/3}\leq x,t\leq 0$, then,
$$\mathcal V_{1,b}(x,t)\lvert v_{1,b}^{\pm}(t)\rvert\lvert v_{1,b}^{\pm}(x)\rvert^{-1}=\mathcal O(h^{1/3})\frac{e^{-\lvert\mathrm{Re}\int_t^x(V_1-E)^{1/2}\rvert/h\pm \mathrm{Re}\int_x^t(V_1-E)^{1/2}/h}}{\lvert V_1(t)-E\rvert^{1/2}}.$$
\end{itemize}
Hence, observing $\lvert V_1(t)-E\rvert^{1/4}\geq C\lvert t-b\rvert\geq Ch^{1/6}$ for $b+C_1h^{2/3}\leq t\leq 0$, we have
$$\mathcal V_{1,b}(x,t)\lvert v_{1,b}^{\pm}(t)\rvert\lvert v_{1,b}^{\pm}(x)\rvert^{-1}=\mathcal O(1)\ \mathrm{as}\ h\to 0_+,$$
uniformly with respect to $b\leq x,t\leq 0$. Thus we obtain
$$\lVert \mathcal V_{1,b}(x,0)\lvert v(0)\rvert\rVert_{C(v_{1,b}^{\pm},h)},\ \lVert \mathcal V_{1,b}(x,b)\lvert v(b)\rvert\rVert_{C(v_{1,b}^{\pm},h)}\leq C\lVert v\rVert_{C(v_{1,b}^{\pm},h)}.$$
Moreover, regardless of $b\leq x\leq b+C_1h^{2/3}$ or $b+C_1h^{2/3}<x\leq 0$, we have
\begin{align*}
\int_b^0\mathcal V_{1,b}(x,t)\lvert v_{1,b}^{\pm}(t)\rvert\lvert v_{1,b}^{\pm}(x)\rvert^{-1} dt&= \mathcal O(1)\int_b^{b+C_1h^{2/3}} dt+\mathcal O(h^{1/3})\int_{b+C_1h^{2/3}}^0( t-b)^{-1/2} dt\\
&=\mathcal O(h^{1/3}).
\end{align*}
Hence we obtain
$$\left \lVert \int_b^0\mathcal V_{1,b}(x,t)\lvert v(t)\rvert dt\right \rVert_{C(v_{1,b}^{\pm},h)}\leq Ch^{1/3}\lVert v\rVert_{C(v_{1,b}^{\pm},h)},$$
which completes the proof of \eqref{myeq2.1}. Observing
\begin{align*}
&hK_{1,b}Wv(b)=\frac{1}{h^2\mathcal W[v_{1,b}^-,v_{1,b}^+]}v_{1,b}^+(b)\int^0_bv_{1,b}^-(t)v(t)dt,\\
&h\partial_xK_{1,b}Wv(b)=\frac{1}{h^2\mathcal W[v_{1,b}^-,v_{1,b}^+]}\partial_xv_{1,b}^+(b)\int^0_bv_{1,b}^-(t)v(t)dt,\\
\end{align*}
we estimate $\lvert (hK_{1,b}Wv(b))(v_{1,b}^+(b))^{-1}\rvert$ and $\lvert (h\partial_xK_{1,b}Wv(b))(\partial_x v_{1,b}^+(b))^{-1}\rvert$ by the similar calculation as above and obtain \eqref{myeq2.6}.

(ii) We shall prove \eqref{myeq2.2} and \eqref{myeq2.7}.
We set,
\begin{align*}
&\tilde v_{2,b}^{\pm}:=\max\{\lvert v_{2,b}^{\pm}\rvert, \lvert h\partial_x v_{2,b}^{\pm}\rvert\},\\
&\mathcal V_{2,b}(x,t):=\tilde v_{2,b}^-(x)\tilde v_{2,b}^+(t){\bf 1}_{\{t<x\}}+\tilde v_{2,b}^+(x)\tilde v_{2,b}^-(t){\bf 1}_{\{t>x\}}.
\end{align*}
By an integration by parts we have,
\begin{equation}\label{myeq2.11}
\lvert hK_{2,b}W^*v(x)\rvert=\mathcal O(h^{-1/3})\Bigg( \int_b^0\mathcal V_{2,b}(x,t)\lvert v(t)\rvert dt+h\mathcal V_{2,b}(x,0)\lvert v(0)\rvert+h\mathcal V_{2,b}(x,b)\lvert v(b)\rvert\Bigg ).
\end{equation}
Using the asymptotics of $v_{j,b}^{\pm}$ and $h\partial_xv_{2,b}^{\pm}$ on $I_b$ and fixing some constant $C_1>0$ sufficiently large we obtain,
\begin{itemize}
\item If $b\leq x,t\leq b+C_1h^{2/3}$, then,
$$\mathcal V_{2,b}(x,t)\lvert v_{1,b}^{\pm}(t)\rvert\lvert v_{1,b}^{\pm}(x)\rvert^{-1}=\mathcal O(h^{1/3})e^{-\lvert\mathrm{Re}\int_t^x(V_2-E)^{1/2}\rvert/h}.$$
\item If $b\leq x\leq b+C_1h^{2/3}\leq t\leq 0$, then,
$$\mathcal V_{2,b}(x,t)\lvert v_{1,b}^{\pm}(t)\rvert\lvert v_{1,b}^{\pm}(x)\rvert^{-1}=\mathcal O(h^{1/2})\frac{e^{-\lvert\mathrm{Re}\int_t^x(V_2-E)^{1/2}\rvert/h\pm \mathrm{Re}\int_b^t(V_1-E)^{1/2}/h}}{\lvert V_1(t)-E\rvert^{1/4}}.$$
\item If $b\leq t\leq b+C_1h^{2/3}\leq x\leq 0$, then,
$$\mathcal V_{2,b}(x,t)\lvert v_{1,b}^{\pm}(t)\rvert\lvert v_{1,b}^{\pm}(x)\rvert^{-1}=\mathcal O(h^{1/6})e^{-\lvert\mathrm{Re}\int_t^x(V_2-E)^{1/2}\rvert/h\mp \mathrm{Re}\int_b^x(V_1-E)^{1/2}/h}.$$
\item If $b+C_1h^{2/3}\leq x,t\leq 0$, then,
$$\mathcal V_{2,b}(x,t)\lvert v_{1,b}^{\pm}(t)\rvert\lvert v_{1,b}^{\pm}(x)\rvert^{-1}=\mathcal O(h^{1/3})\frac{e^{-\lvert\mathrm{Re}\int_t^x(V_2-E)^{1/2}\rvert/h\pm \mathrm{Re}\int_x^t(V_1-E)^{1/2}/h}}{\lvert V_1(t)-E\rvert^{1/4}}.$$
\end{itemize}
Hence, observing that $\int_b^{b+C_1h^{2/3}}(V_1(t)-E)^{1/2}dt=O(h)$, we have
$$\mathcal V_{2,b}(x,t)\lvert v_{1,b}^{\pm}(t)\rvert\lvert v_{1,b}^{\pm}(x)\rvert^{-1}=\mathcal O(h^{1/6}),$$
uniformly with respect ot $b\leq x,t\leq 0$. Thus we have
$$\lVert \mathcal V_{2,b}(x,0)\lvert v(0)\rvert\rVert_{C(v_{1,b}^{\pm},h)},\ \lVert \mathcal V_{2,b}(x,b)\lvert v(b)\rvert\rVert_{C(v_{1,b}^{\pm},h)}\leq Ch^{1/6}\lVert v\rVert_{C(v_{1,b}^{\pm},h)}.$$

When $b\leq x\leq b+C_1h^{2/3}$, there exist constants $\delta>0$ small enough and $\alpha>0$ such that
$$\int_b^0\mathcal V_{2,b}(x,t)\lvert v_{1,b}^{\pm}(t)\rvert\lvert v_{1,b}^{\pm}(x)\rvert^{-1} dt= \mathcal O(h^{1/3})\int_b^{b+\delta}e^{-\alpha\lvert x-t\rvert/h} dt+\mathcal O(e^{-\alpha/h})=\mathcal O(h^{4/3}).$$
On the other hand, if $b+C_1h^{2/3}<x\leq-\delta'$ with $>\delta'>0$ small enough there exist a constant $\alpha>0$ such that
$$\int_b^0\mathcal V_{2,b}(x,t)\lvert v_{1,b}^{\pm}(t)\rvert\lvert v_{1,b}^{\pm}(x)\rvert^{-1} dt= \mathcal O(h^{1/6})\int_b^{-\delta'/2}e^{-\alpha\lvert x-t\rvert/h} dt+\mathcal O(e^{-\alpha/h})=\mathcal O(h^{7/6}).$$
Finally, when $-\delta'<x\leq 0$ there exist constants $\alpha,\beta>0$ such that
$$\int_b^0\mathcal V_{2,b}(x,t)\lvert v_{1,b}^{\pm}(t)\rvert\lvert v_{1,b}^{\pm}(x)\rvert^{-1} dt= \mathcal O(h^{1/3})\int_{-2\delta'}^0e^{-\beta\lvert x^2-t^2\rvert/h}dt+\mathcal O(e^{-\alpha/h})=\mathcal O(h^{5/6}).$$
Here we used $\int_{-2\delta'}^0e^{-\beta\lvert x^2-t^2\rvert/h}dt=\mathcal O(h^{1/2})$. To prove this estimate we write
$$\int_{-2\delta'}^0e^{-\beta\lvert x^2-t^2\rvert/h}dt=\int^x_{-2\delta'}e^{-\beta\lvert x^2-t^2\rvert/h}dt+\int^0_{x}e^{-\beta\lvert x^2-t^2\rvert/h}dt.$$
The first term is estimated as follows;
$$\int^x_{-2\delta'}e^{-\beta\lvert x^2-t^2\rvert/h}dt=\int^0_{-2\delta'-x}e^{-\beta\lvert x^2-(t+x)^2\rvert/h}dt\leq\int^0_{-2\delta'-x}e^{-\beta t^2/h}dt=\mathcal O(h^{1/2}),$$
and the second term is estimated as follows;
\begin{align*}
\int^0_{x}e^{-\beta\lvert x^2-t^2\rvert/h}dt&=\int^{-x}_{0}e^{-\beta\lvert x^2-(t+x)^2\rvert/h}dt=\int^{-x}_{0}e^{-\beta\lvert 2(-x)t-t^2\rvert/h}dt\leq\int^{-x}_{0}e^{-\beta\lvert (-x)t\rvert/h}dt\\
&\leq\int^{-x}_{0}e^{-\beta t^2/h}dt=\mathcal O(h^{1/2}),
\end{align*}
where we used that $(-x)t\geq t^2$ for $0\leq t\leq -x$ in the inequalities.
Hence, we obtain
$$\left\lVert\int_b^0\mathcal V_{2,b}(x,t)\lvert v(t)\rvert dt\right\rVert_{C(v_{1,b}^{\pm},h)}\leq Ch^{5/6}\lVert v\rVert_{C(v_{1,b}^{\pm},h)},$$
which completes the proof of \eqref{myeq2.2}. Noting that there exists a constant $C>0$ such that $\lvert v_{1,b}^{\pm}(b)\rvert\leq C$ and $C^{-1}h^{1/6}\leq \lvert v_{2,b}^{-}(b)\rvert$, we estimate $\lvert (hK_{2,b}W^*v(b))(v_{2,b}^+(b))^{-1}\rvert$ and $\lvert (h\partial_xK_{2,b}W~*v(b))(\partial_x v_{2,b}^+(b))^{-1}\rvert$ by the similar calculation as above and obtain \eqref{myeq2.7}.

(iii) We shall prove \eqref{myeq2.4} and \eqref{myeq2.9}.
We set,
$$\mathcal V_{1,b}'(x,t):=\tilde v_{1,b}^-(x)\tilde v_{1,b}^+(t)+\tilde v_{1,b}^+(x)\tilde v_{1,b}^-(t).$$
By an integration by parts we have,
$$\lvert hK_{1,b}'Wv(x)\rvert= \mathcal O(h^{-1/3})\Bigg( \int_b^x\mathcal V_{1,b}'(x,t)\lvert v(t)\rvert dt+h\mathcal V_{1,b}''(x,x)\lvert v(x)\rvert+h\mathcal V_{1,b}'(x,b)\lvert v(b)\rvert\Bigg ).$$
We obtain,
\begin{itemize}
\item If $b\leq t\leq x\leq b+C_1h^{2/3}$, then,
$$\mathcal V_{1,b}'(x,t)\lvert v_{2,b}^{+}(t)\rvert\lvert v_{2,b}^{+}(x)\rvert^{-1}=\mathcal O(1)e^{-\mathrm{Re}\int_t^x(V_2-E)^{1/2}/h}.$$
\item If $b\leq t\leq b+C_1h^{2/3}\leq x\leq 0$, then,
$$\mathcal V_{1,b}'(x,t)\lvert v_{2,b}^{+}(t)\rvert\lvert v_{2,b}^{+}(x)\rvert^{-1}=\mathcal O(h^{1/6})\frac{e^{-\mathrm{Re}\int_t^x(V_2-E)^{1/2}/h+ \mathrm{Re}\int_b^x(V_1-E)^{1/2}/h}}{\lvert V_1(x)-E\rvert^{1/4}}.$$
\item If $b+C_1h^{2/3}\leq t\leq x\leq 0$, then,
$$\mathcal V_{1,b}'(x,t)\lvert v_{2,b}^{+}(t)\rvert\lvert v_{2,b}^{+}(x)\rvert^{-1}=\mathcal O(h^{1/3})\frac{e^{-\lvert\mathrm{Re}\int_t^x(V_2-E)^{1/2}\rvert/h+ \mathrm{Re}\int_t^x(V_1-E)^{1/2}/h}}{\lvert V_1(t)-E\rvert^{1/4}\lvert V_1(x)-E\rvert^{1/4}}.$$
\end{itemize}
Hence, we have $\mathcal V_{1,b}'(x,t)\lvert v_{2,b}^{+}(t)\rvert\lvert v_{2,b}^{+}(x)\rvert^{-1}=\mathcal O(1)$ as $h\to 0_+$. Thus we have
$$\lVert \mathcal V_{1,b}'(x,b)\lvert v(b)\rvert\rVert_{C(v_{2,b}^{+},h)},\ \lVert \mathcal V_{1,b}'(x,x)\lvert v(x)\rvert\rVert_{C(v_{2,b}^{-},h)}\leq C\lVert v\rVert_{C(v_{2,b}^{+},h)}.$$
When $b\leq x\leq b+C_1h^{2/3}$ or $b+C_1h^{2/3}<x\leq-\delta'$ with $\delta'>0$ small enough, there exist a constant $\alpha>0$ such that
$$
\int_b^x\mathcal V_{1,b}'(x,t)\lvert v_{2,b}^{+}(t)\rvert\lvert v_{2,b}^{+}(x)\rvert^{-1}dt\leq \mathcal O(1)\int_b^{x}e^{-\alpha(t-x)/h}dt=\mathcal O(h).$$
When $-\delta'<x\leq 0$ there exist constants $\alpha,\beta>0$ such that
$$\int_b^x\mathcal V_{1,b}'(x,t)\lvert v_{2,b}^{+}(t)\rvert\lvert v_{2,b}^{+}(x)\rvert^{-1} dt=\mathcal O(h^{1/3})\int_{-2\delta'}^xe^{-\beta\lvert x^2-t^2\rvert/h}dt+\mathcal O(e^{-\alpha/h})=\mathcal O(h^{5/6}).$$
Hence, we obtain
$$\left\lVert\int_b^x\mathcal V_{1,b}'(x,t)\lvert v(t)\rvert dt\right\rVert_{C(v_{2,b}^{+},h)}\leq Ch^{5/6}\lVert v\rVert_{C(v_{2,b}^{+},h)},$$
which completes the proof of \eqref{myeq2.4}. The equation \eqref{myeq2.9} is obvious from the definition of $K_{1,b}'$.

(iv) We shall prove \eqref{myeq2.3} and \eqref{myeq2.8}.
We set,
$$\mathcal V_{1,b}''(x,t)=\mathcal V_{1,b}'(x,t).$$
By an integration by parts we have,
$$\lvert hK_{1,b}''Wv(x)\rvert= \mathcal O(h^{-1/3})\Bigg( \int_x^0\mathcal V_{1,b}''(x,t)\lvert v(t)\rvert dt+h\mathcal V_{1,b}''(x,0)\lvert v(0)\rvert\\
+h\mathcal V_{1,b}''(x,x)\lvert v(x)\rvert\Bigg ).$$
We obtain,
\begin{itemize}
\item If $b\leq x\leq t\leq b+C_1h^{2/3}$, then,
$$\mathcal V_{1,b}''(x,t)\lvert v_{2,b}^{-}(t)\rvert\lvert v_{2,b}^{-}(x)\rvert^{-1}=\mathcal O(1)e^{-\mathrm{Re}\int_x^t(V_2-E)^{1/2}/h}.$$
\item If $b\leq x\leq b+C_1h^{2/3}\leq t\leq 0$, then,
$$\mathcal V_{1,b}''(x,t)\lvert v_{2,b}^{-}(t)\rvert\lvert v_{2,b}^{-}(x)\rvert^{-1}=\mathcal O(h^{1/6})\frac{e^{-\mathrm{Re}\int_x^t(V_2-E)^{1/2}/h+ \mathrm{Re}\int_b^t(V_1-E)^{1/2}/h}}{\lvert V_1(t)-E\rvert^{1/4}}.$$
\item If $b+C_1h^{2/3}\leq x\leq t\leq 0$, then,
$$\mathcal V_{1,b}''(x,t)\lvert v_{2,b}^{-}(t)\rvert\lvert v_{2,b}^{-}(x)\rvert^{-1}=\mathcal O(h^{1/3})\frac{e^{-\lvert\mathrm{Re}\int_x^t(V_2-E)^{1/2}\rvert/h+ \mathrm{Re}\int_x^t(V_1-E)^{1/2}/h}}{\lvert V_1(t)-E\rvert^{1/4}\lvert V_1(x)-E\rvert^{1/4}}.$$
\end{itemize}
Hence, we have $\mathcal V_{1,b}''(x,t)\lvert v_{2,b}^{-}(t)\rvert\lvert v_{2,b}^{-}(x)\rvert^{-1}=\mathcal O(1)$ as $h\to 0_+$. Thus we have
$$\lVert \mathcal V_{1,b}''(x,0)\lvert v(0)\rvert\rVert_{C(v_{2,b}^{-},h)},\ \lVert \mathcal V_{1,b}''(x,x)\lvert v(x)\rvert\rVert_{C(v_{2,b}^{-},h)}\leq C\lVert v\rVert_{C(v_{2,b}^{-},h)}.$$
When $b\leq x\leq b+C_1h^{2/3}$, there exist constants $\delta>0$ small enough and $\alpha>0$ such that
$$\int_x^0\mathcal V_{1,b}''(x,t)\lvert v_{2,b}^{-}(t)\rvert\lvert v_{2,b}^{-}(x)\rvert^{-1} dt=\mathcal O(1)\int_x^{b+\delta}e^{-\alpha(t-x)/h} dt+\mathcal O(e^{-\alpha/h})\\
=\mathcal O(h).$$
On the other hand, if $b+C_1h^{2/3}<x\leq-\delta'$ with $\delta'>0$ small enough, there exist a constant $\alpha$ such that
$$\int_x^0\mathcal V_{1,b}''(x,t)\lvert v_{2,b}^{-}(t)\rvert\lvert v_{2,b}^{-}(x)\rvert^{-1} dt=\mathcal O(1)\int_x^{-\delta'/2}e^{-\alpha(t-x)/h} dt+\mathcal O(e^{-\alpha/h})=\mathcal O(h).$$
Finally, when $-\delta'<x\leq 0$ there exist constants $\alpha,\beta>0$ such that
$$\int_x^0\mathcal V_{1,b}''(x,t)\lvert v_{2,b}^{-}(t)\rvert\lvert v_{2,b}^{-}(x)\rvert^{-1} dt=\mathcal O(h^{1/3})\int_{x}^0e^{-\beta\lvert x^2-t^2\rvert/h}dt+\mathcal O(e^{-\alpha/h})=\mathcal O(h^{5/6}).$$
Hence, we obtain
$$\left\lVert\int_b^0\mathcal V_{1,b}''(x,t)\lvert v(t)\rvert dt\right\rVert_{C(v_{2,b}^{-},h)}\leq Ch^{5/6}\lVert v\rVert_{C(v_{2,b}^{-},h)},$$
which completes the proof of \eqref{myeq2.3}. Noting that there exists a constant $C>0$ such that $C\leq \lvert v_{1,b}^{\pm}(b)\rvert$ $\lvert v_{2,b}^{-}(b)\rvert\leq Ch^{1/6}$, we obtain \eqref{myeq2.8} by the similar calculation as above.

(v) Finally, we shall prove \eqref{myeq2.5} and \eqref{myeq2.10}.
We shall estimate the terms in \eqref{myeq2.11}
We obtain for any $b\leq x,t\leq 0$,
$$\mathcal V_{2,b}(x,t)\lvert v_{2,b}^{\pm}(t)\rvert\lvert v_{2,b}^{\pm}(x)\rvert^{-1}=\mathcal O(h^{1/3})e^{-\lvert\mathrm{Re}\int_t^x(V_2-E)^{1/2}\rvert/h\pm\mathrm{Re}\int_x^t(V_2-E)^{1/2}/h}.$$
Hence, we have $\mathcal V_{2,b}(x,t)\lvert v_{2,b}^{\pm}(t)\rvert\lvert v_{2,b}^{\pm}(x)\rvert^{-1}=\mathcal O(h^{1/3})$ as $h\to 0_+$. Thus we have
$$\lVert \mathcal V_{2,b}(x,b)\lvert v(b)\rvert\rVert_{C(v_{2,b}^{\pm},h)},\ \lVert \mathcal V_{2,b}(x,x)\lvert v(x)\rvert\rVert_{C(v_{2,b}^{\pm},h)}\leq C\lVert v\rVert_{C(v_{2,b}^{\pm},h)}.$$
We have for any $b\leq x\leq 0$,
$$\int_b^0\mathcal V_{2,b}(x,t)\lvert v_{2,b}^{\pm}(t)\rvert\lvert v_{2,b}^{\pm}(x)\rvert^{-1} dt=\mathcal O(h^{1/3})\int_{b}^0dt=\mathcal O(h^{1/3}).$$
Noting that there exists a constant $C>0$ such that $C^{-1}h^{1/6}\leq v_{2,b}^{\pm}(b)\leq Ch^{1/6}$, we obtain \eqref{myeq2.10} by the similar calculation as above.
\end{proof}

\subsection{Fundamental solutions on $I_c$}\mbox{} \\
We define the fundamental solutions of $P_j-E$ on $I_c$ as,
\begin{align*}
&K_{1,c}[v](x):=\frac{1}{h^2\mathcal W[v_{1,c}^+,v_{1,c}^-]}\left(v_{1,c}^+(x)\int^x_0v_{1,c}^-(t)v(t)dt+v_{1,c}^-(x)\int^c_xv_{1,c}^+(t)v(t)dt\right),\\
&K_{2,c}[v](x):=\frac{1}{h^2\mathcal W[v_{2,c}^+,v_{2,c}^-]}\left(v_{2,c}^+(x)\int^x_0v_{2,c}^-(t)v(t)dt+v_{2,c}^-(x)\int^c_xv_{2,c}^+(t)v(t)dt\right),\\
&K_{2,c}'[v](x):=\frac{1}{h^2\mathcal W[v_{2,c}^+,v_{2,c}^-]}\left(-v_{2,c}^+(x)\int^c_xv_{2,c}^-(t)v(t)dt+v_{2,c}^-(x)\int^c_xv_{2,c}^+(t)v(t)dt\right),\\
&K_{2,c}''[v](x):=\frac{1}{h^2\mathcal W[v_{2,c}^+,v_{2,c}^-]}\left(v_{2,c}^+(x)\int^x_0v_{2,c}^-(t)v(t)dt-v_{2,c}^-(x)\int^x_0v_{2,c}^+(t)v(t)dt\right),
\end{align*}
for $v\in C(I_c)$.

Then, one can prove exactly as for Lemma \ref{fundamental} that we have,
\begin{lem}\label{fundamental2}
As $h$ goes to $0_+$ one has,
\begin{align*}
&\lVert hK_{2,c}W^*\rVert_{\mathcal L(C(v_{2,c}^{\pm},h))}=\mathcal O(1),\\
&\lVert hK_{1,c}W\rVert_{\mathcal L(C(v_{2,c}^{\pm},h))}=\mathcal O(h^{1/2}),\\
&\lVert hK_{2,c}'W^*\rVert_{\mathcal L(C(v_{1,c}^{+},h))}=\mathcal O(h^{1/2}),\\
&\lVert hK_{2,c}''W^*\rVert_{\mathcal L(C(v_{1,c}^{-},h))}=\mathcal O(h^{1/2}),\\
&\lVert hK_{1,c}W\rVert_{\mathcal L(C(v_{1,c}^{\pm},h))}=\mathcal O(1).
\end{align*}
Moreover, there exist complex numbers $\alpha_{j,c},\beta_{1,c},\beta_{2,c}^{\pm}$ depending on $v$ such that,
\begin{align*}
&\begin{pmatrix}hK_{2,c}W^*v(c)\\  h\partial_x (K_{2,c}W^*v)(c)\end{pmatrix}=\alpha_{2,c}\begin{pmatrix}v_{2,c}^{+}(c)\\ \partial_xv_{2,c}^{+}(c)\end{pmatrix}, \\
&\begin{pmatrix}hK_{1,c}Wv(c)\\  h\partial_x (K_{1,c}Wv)(c)\end{pmatrix}=\alpha_{1,c}\begin{pmatrix}v_{1,c}^{+}(c)\\ \partial_xv_{1,c}^{+}(c)\end{pmatrix}, \\
&\begin{pmatrix}hK_{2,c}'W^*v(c)\\  h\partial_x (K_{2,c}'W^*v)(c)\end{pmatrix}=\begin{pmatrix}0\\ 0\end{pmatrix}, \\
&\begin{pmatrix}hK_{2,c}''W^*v(c)\\  h\partial_x (K_{2,c}''W^*v)(c)\end{pmatrix}=\beta_{2,c}^-\begin{pmatrix}v_{2,c}^{-}(c)\\ \partial_xv_{2,c}^{-}(c)\end{pmatrix}+\beta_{2,c}^+\begin{pmatrix}v_{2,c}^{+}(c)\\ \partial_xv_{2,c}^{+}(c)\end{pmatrix}, \\
&\begin{pmatrix}hK_{1,c}Wv(c)\\  h\partial_x (K_{1,c}Wv)(c)\end{pmatrix}=\beta_{1,c}\begin{pmatrix}v_{1,c}^{+}(c)\\ \partial_xv_{1,c}^{+}(c)\end{pmatrix}, 
\end{align*}
and
$$\lvert \alpha_{2,c}\rvert\leq C\lVert v\rVert_{C(v_{2,c}^{\pm},h)},\ \lvert \alpha_{1,c}\rvert\leq Ch^{5/6}\lVert v\rVert_{C(v_{2,c}^{\pm},h)},\ \lvert \beta_{2,c}^{\pm}\rvert\leq Ch^{5/6}\lVert v\rVert_{C(v_{1,c}^{-},h)},\ \lvert \beta_{1,c}\rvert\leq C\lVert v\rVert_{C(v_{1,c}^{\pm},h)}.$$
\end{lem}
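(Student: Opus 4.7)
The proof proceeds in direct analogy with Lemma \ref{fundamental}, exploiting the mirror symmetry between the intervals $I_b$ and $I_c$ and the exchange of roles between $V_1,V_2$ (and correspondingly between $W$ and $W^*$). On $I_b$, $V_1<E<V_2$ makes $v_{1,b}^{\pm}$ oscillatory of size $h^{1/6}|V_1-E|^{-1/4}$ (with an Airy transition layer near $b$) and $v_{2,b}^{\pm}$ exponentially growing/decaying from $b$. On $I_c$ the inequalities flip: $V_2<E<V_1$, so $v_{2,c}^{\pm}$ play exactly the role that $v_{1,b}^{\pm}$ played, and $v_{1,c}^{\pm}$ the role of $v_{2,b}^{\pm}$, while the turning point $c$ takes the place of $b$. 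Under this dictionary the kernels $K_{2,c},K_{1,c},K_{2,c}',K_{2,c}''$ defined above correspond respectively to $K_{1,b},K_{2,b},K_{1,b}',K_{1,b}''$, and the five claimed norm estimates with their exponents $\mathcal O(1)$ or $\mathcal O(h^{1/2})$ are the same as in Lemma \ref{fundamental}.

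Concretely, I would set
\[
\tilde v_{j,c}^{\pm}:=\max\{|v_{j,c}^{\pm}|,|h\partial_xv_{j,c}^{\pm}|\},\qquad
\mathcal V_{j,c}(x,t):=\tilde v_{j,c}^{-}(x)\tilde v_{j,c}^{+}(t)\mathbf 1_{\{t<x\}}+\tilde v_{j,c}^{+}(x)\tilde v_{j,c}^{-}(t)\mathbf 1_{\{t>x\}},
\]
and, because $W=r_0+hr_1\partial_x$ and $W^*=r_0-h\partial_x r_1$ differ only by lower-order terms, an integration by parts yields for each of the five operators an inequality of the form
\[
|hK_{\star,c}W^{(*)}v(x)|=\mathcal O(h^{-1/3})\Bigl(\int \mathcal V_{\star,c}(x,t)|v(t)|\,dt+h\mathcal V_{\star,c}(x,\cdot)|v(\cdot)|\Big|_{\text{endpoints}}\Bigr),
\]
exactly as in \eqref{myeq2.11}. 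The four-case analysis $\{x,t\gtrless c-C_1h^{2/3}\}$ used for $I_b$ transfers verbatim with $(b,V_1,V_2)$ replaced by $(c,V_2,V_1)$; in particular the estimate $|V_2(t)-E|^{1/4}\geq C|t-c|\geq Ch^{1/6}$ on the classically allowed side furnishes the same turning-point gain, and near the crossing $x=0$ the Gaussian-type bound
\[
\int_0^{2\delta'}e^{-\beta|x^2-t^2|/h}\,dt=\mathcal O(h^{1/2}),
\]
proved at the end of part (ii) of Lemma \ref{fundamental}, applies identically since the crossing now sits at the left endpoint $0$ of $I_c$ instead of the right endpoint of $I_b$.

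For the boundary identities I would evaluate each kernel at $x=c$. The definitions of $K_{1,c}$, $K_{2,c}$ and $K_{2,c}''$ are arranged so that at $x=c$ only the integral multiplying $v_{j,c}^{+}(c)$ (respectively both $v_{2,c}^{\pm}(c)$ for $K_{2,c}''$) survives, reproducing the closed formulas used to derive \eqref{myeq2.6}, \eqref{myeq2.7}, \eqref{myeq2.8}, \eqref{myeq2.10}; the same coefficient estimates follow by applying the kernel bounds to $v\equiv\mathrm{const}$ near $x=c$, using the sizes $|v_{j,c}^{\pm}(c)|\leq C$ and $C^{-1}h^{1/6}\leq|v_{1,c}^{-}(c)|,|v_{2,c}^{\pm}(c)|\leq Ch^{1/6}$ obtained from the Airy asymptotics, with the extra $h^{5/6}$ factor in $\alpha_{1,c}$ and $\beta_{2,c}^{\pm}$ coming precisely from the mismatch of these $h^{1/6}$ weights. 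Finally, $K_{2,c}'$ is defined with both integrals running from $x$ to $c$, so its value and derivative vanish identically at $x=c$, giving the zero identity for free.

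The only step that requires genuine care, rather than a formal translation, is tracking the signs and limits of integration so that the integration-by-parts boundary terms land at the endpoints asserted in the five inequalities — in particular verifying that $K_{2,c}'W^*v(c)=0$ and $K_{2,c}''W^*v(c)$ involves integrals from $0$ to $c$ (yielding nontrivial contributions at both $v_{2,c}^{\pm}(c)$). Once this bookkeeping is fixed, every subsequent case-by-case exponential estimate is formally identical to the one in Lemma \ref{fundamental} and can be quoted without repetition.
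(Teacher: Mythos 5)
Your overall approach is correct and matches the paper's, which simply invokes the analogy with Lemma \ref{fundamental} without further proof: the dictionary $b\leftrightarrow c$, $V_1\leftrightarrow V_2$, $W\leftrightarrow W^*$, $(K_{1,b},K_{2,b},K_{1,b}',K_{1,b}'')\leftrightarrow(K_{2,c},K_{1,c},K_{2,c}',K_{2,c}'')$ is the right one, the boundary bookkeeping for $K_{2,c}'$ and $K_{2,c}''$ at $x=c$ is right, and the five norm estimates and five endpoint identities transfer verbatim.

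There are, however, two factual slips in your setup that should be corrected, even though they do not invalidate the transfer argument. First, you state that on $I_b$ one has $V_1<E<V_2$ and that $v_{1,b}^{\pm}$ are oscillatory; in fact on $I_b=[b,0]$ Assumption (A3) gives $E<V_1<V_2$, and similarly on $I_c=[0,c]$ one has $E<V_2<V_1$. Both potentials lie \emph{above} $E$ on both middle intervals, so all of $v_{j,b}^{\pm}$ and $v_{j,c}^{\pm}$ are of exponential type there (with an Airy transition layer at the respective turning point $b$ or $c$), not oscillatory; the oscillatory regime occurs only in the well $(a,b)$ and to the right of $c$. Second, your endpoint sizes are partly miscoded: at $x=c$ the functions $v_{2,c}^{\pm}$ sit exactly at their Airy turning point and are of size $\mathcal O(1)$, while it is $v_{1,c}^{\pm}(c)$ (the normalizations $e^{\pm S_1/h}u_{1,R}^{\mp}$ evaluated at $c$) that are comparable to $h^{1/6}$. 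This is the mirror of the paper's statement in part (ii) of the proof of Lemma \ref{fundamental}, namely $\lvert v_{1,b}^{\pm}(b)\rvert\leq C$ and $C^{-1}h^{1/6}\leq\lvert v_{2,b}^{-}(b)\rvert$. The extra factor $h^{5/6}$ in $\lvert\alpha_{1,c}\rvert$ and $\lvert\beta_{2,c}^{\pm}\rvert$ indeed comes from a size mismatch, but it is the $\mathcal O(1)$ scale of the $v_{2,c}^{\pm}$ family against the $\mathcal O(h^{1/6})$ scale of the $v_{1,c}^{\pm}$ family at $c$, not the other way round. With these two corrections the argument is exactly the paper's.
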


\subsection{Fundamental solutions on $I_L$ and $I_R^{\theta}$}\mbox{} \\
For any $k\in \mathbb N$ we set
\begin{align*}
C_b^0(I_L):=\{u\,:\, I_L\to \mathbb C \mathrm{\ of\ class\ } C^k;\ \sum_{0\leq j\leq k}\sup_{x\in I_L}\lvert u^{(j)}(x)\rvert\leq +\infty\},\\
C_b^0(I_R^{\theta}):=\{u\,:\, I_R^{\theta}\to \mathbb C \mathrm{\ of\ class\ } C^k;\ \sum_{0\leq j\leq k}\sup_{x\in I_R^{\theta}}\lvert u^{(j)}(x)\rvert\leq +\infty\},
\end{align*}
and $\lVert u\rVert_{C_b^0(I_L)}:=\sum_{0\leq j\leq k}\sup_{x\in I_L}\lvert u(x)^{(j)}\rvert$, $\lVert u\rVert_{C_b^0(I_L)}:=\sum_{0\leq j\leq k}\sup_{x\in I_R^{\theta}}\lvert u(x)^{(j)}\rvert$.
We also define the fundamental solutions $K_{j,L}: C_b^0(I_L)\to C^2_b(I_L)$ and $K_{j,R}: C_b^0(I_R^{\theta})\to C^2_b(I_R^{\theta})$ of $P_j-E$ on $I_L$ and $I_R^{\theta}$ as,
$$K_{j,L}[v](x):=\frac{1}{h^2\mathcal W[u_{j,b}^+,u_{j,b}^-]}\left(u_{j,b}^+(x)\int^x_{-\infty}u_{j,b}^-(t)v(t)dt+u_{j,b}^-(x)\int^b_xu_{j,b}^+(t)v(t)dt\right),$$
for $v\in C_b^0(I_L)$ and
$$K_{j,R}[v](x):=\frac{1}{h^2\mathcal W[u_{j,c}^-,u_{j,c}^+]}\left(u_{j,c}^-(x)\int^x_{c}u_{j,c}^+(t)v(t)dt+u_{j,c}^+(x)\int^{+\infty}_xu_{j,c}^-(t)v(t)dt\right),$$
for $v\in C_b^0(I_R^{\theta})$ respectively.
Then one has the following lemma.
\begin{lem}\label{fundamental3}
As $h$ goes to $0_+$ one has,
\begin{equation}\label{myeq2.12}
\lVert hK_{2,L}W^*\rVert_{\mathcal L(C^0_b(I_L))}=\mathcal O(h),
\end{equation}
\begin{equation}\label{myeq2.13}
\lVert hK_{1,L}W\rVert_{\mathcal L(C^0_b(I_L))}=\mathcal O(h^{-1/6}),
\end{equation}
Moreover, there exist complex numbers $\eta_b,\theta_b,\theta'_b$ such that,
\begin{equation}\label{myeq2.12.1}
\begin{pmatrix}
hK_{2,L}W^*u(b)\\ h\partial_x (K_{2,L}W^*u)(b)
\end{pmatrix}=\eta_b
\begin{pmatrix}
u_{2,b}^+(b)\\ \partial_x u_{2,b}^+(b)
\end{pmatrix},
\end{equation}
\begin{equation}\label{myeq2.13.1}
\begin{pmatrix}
hK_{1,L}Wu(b)\\ h\partial_x (K_{1,L}Wu)(b)
\end{pmatrix}=\theta_b
\begin{pmatrix}
u_{1,b}^+(b)\\ \partial_x u_{1,b}^+(b)
\end{pmatrix},
\end{equation}
\begin{equation}\label{myeq2.15.0}
\begin{pmatrix}
hK_{1,L}Wu_{2,b}^-(b)\\ h\partial_x (K_{1,L}Wu_{2,b}^-)(b)
\end{pmatrix}=\theta'_b
\begin{pmatrix}
u_{1,b}^+(b)\\ \partial_x u_{1,b}^+(b)
\end{pmatrix},
\end{equation}
and
$$\lvert \eta_b\rvert\leq Ch^{5/6}\lVert u\rVert_{C^0_b(I_L)},\ \lvert \theta_b\rvert\leq Ch^{-1/6}\lVert u\rVert_{C^0_b(I_L)},\ \lvert \theta'_b\rvert\leq Ch^{5/6}.$$
\end{lem}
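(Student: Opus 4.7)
The plan is to follow the scheme of Lemma \ref{fundamental}, adapted to the unbounded interval $I_L=(-\infty,b]$. First I would integrate by parts to transfer the $h\partial_x$ in $W=r_0+hr_1\partial_x$ (and in $W^*$) onto the fundamental solution, producing an overall factor of $h$, boundary contributions at $b$ (those at $-\infty$ vanish by decay), and a kernel
$$\mathcal V_{j,L}(x,t):=\tilde u_{j,b}^-(x)\tilde u_{j,b}^+(t)\mathbf 1_{\{t<x\}}+\tilde u_{j,b}^+(x)\tilde u_{j,b}^-(t)\mathbf 1_{\{t>x\}},\qquad \tilde u_{j,b}^\pm:=\max\{|u_{j,b}^\pm|,\,|h\partial_xu_{j,b}^\pm|\}.$$
The Wronskian denominators contribute an $h^{-4/3}$ prefactor via \eqref{myeq2.0.1}, and the required asymptotics come from Proposition \ref{yafaev} together with its Remark. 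On $I_L$ one has $V_2-E\geq c>0$ uniformly, so $u_{2,b}^\pm$ are purely exponential; by contrast $u_{1,b}^\pm$ must be handled across the turning point $a\in(a_0,b_0)$ which lies in the interior of $I_L$.

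For \eqref{myeq2.12} and \eqref{myeq2.12.1}, the renormalization $u_{2,b}^\pm=e^{\pm S_2/h}u_{2,L}^\pm$ gives the pointwise bound $|u_{2,b}^\mp(x)|\leq Ch^{1/6}(V_2(x)-E)^{-1/4}e^{\mp\int_x^b\sqrt{V_2(t)-E}\,dt/h}$ on $I_L$, so $\mathcal V_{2,L}(x,t)=\mathcal O(h^{1/3}e^{-\alpha|x-t|/h})$ uniformly on $I_L\times I_L$ for some $\alpha>0$; integrating in $t$ gains an additional $h$, which combined with $h^{-4/3}$ and the $h$ from integration by parts yields \eqref{myeq2.12}. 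Evaluating the kernel formula at $x=b$ kills the second integral, so both $(hK_{2,L}W^*u)(b)$ and $h\partial_x(K_{2,L}W^*u)(b)$ are proportional to $(u_{2,b}^+(b),\partial_xu_{2,b}^+(b))$, with the coefficient $\eta_b$ estimated by dividing the kernel bound by $|u_{2,b}^+(b)|\sim h^{1/6}$, producing $|\eta_b|\leq Ch^{5/6}\lVert u\rVert_{C^0_b(I_L)}$.

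For \eqref{myeq2.13}, \eqref{myeq2.13.1}, and \eqref{myeq2.15.0}, I would split the $t$-integration into the classically forbidden zone $(-\infty,a-C_1h^{2/3}]$, the Airy transition zone $[a-C_1h^{2/3},a+C_1h^{2/3}]$, and the classically allowed zone $[a+C_1h^{2/3},b]$, inserting on each the appropriate asymptotic of Proposition \ref{yafaev}. The extra $h^{-1/6}$ loss in \eqref{myeq2.13}, compared with the $\mathcal O(1)$ bound on $I_b$, reflects the fact that $\mathcal V_{1,L}$ is no longer purely decaying in $|x-t|$ on the allowed side of $a$: the oscillatory WKB amplitude $h^{1/6}(E-V_1)^{-1/4}$ tested against a merely bounded $v$ produces the stated $h^{-1/6}$ factor, whereas on $I_b$ the weighted norm $C(v_{1,b}^\pm,h)$ absorbed the analogous contribution. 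Identity \eqref{myeq2.13.1} then follows at $x=b$ exactly as above. For \eqref{myeq2.15.0} the input is the specific solution $u_{2,b}^-$, whose exponential decay $|u_{2,b}^-(t)|\leq Ch^{1/6}e^{-\int_t^b\sqrt{V_2-E}\,dt/h}$ (shared by $Wu_{2,b}^-$) restores exponential localization in $t$ and turns the would-be $h^{-1/6}$ loss into an $h^{5/6}$ gain.

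The main technical obstacle is the turning-point analysis for \eqref{myeq2.13}: one has to match the three-zone asymptotics for $u_{1,b}^\pm$ cleanly, verify that the pairing of the kernel against a merely bounded input produces exactly $h^{-1/6}$ and no worse, and simultaneously ensure uniform convergence of the improper integrals at $-\infty$ through the exponential decay of $u_{1,b}^-$ in the forbidden zone.
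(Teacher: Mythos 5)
Your proposal follows essentially the same route as the paper's proof: integrate by parts to absorb the derivative in $W$, $W^*$ into the kernel functions $\tilde u_{j,b}^\pm=\max\{|u_{j,b}^\pm|,|h\partial_x u_{j,b}^\pm|\}$, form the same two-point kernel $\mathcal U_{j,L}(x,t)$, use the uniform bound $\mathcal O(h^{1/3}e^{-\alpha|x-t|/h})$ on the purely forbidden kernel for $j=2$, carry out a turning-point case split around $a$ (and, implicitly, $b$) for $j=1$ yielding the extra $h^{-1/6}$, evaluate at $x=b$ where one of the two integrals drops out to get the proportionality to $(u_{j,b}^+(b),\partial_x u_{j,b}^+(b))$, and finally exploit the exponential localization of $u_{2,b}^-$ near $b$ to upgrade $h^{-1/6}$ to $h^{5/6}$ in \eqref{myeq2.15.0}. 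The only cosmetic discrepancy is the bookkeeping of prefactors ("$h^{-4/3}$ plus the $h$ from integration by parts" rather than the single $\mathcal O(h^{-1/3})$ prefactor $h\cdot(h^2\mathcal W)^{-1}$ the paper uses), and you omit explicitly listing the $b$-side transition zone $[b-C_1h^{2/3},b]$ in the three-zone split; neither affects the result.
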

\begin{proof}
We set,
\begin{align*}
&\tilde u_{1,b}^{\pm}:=\max\{\lvert u_{1,b}^{\pm}\rvert, \lvert h\partial_x u_{1,b}^{\pm}\rvert\},\\
&\tilde u_{2,b}^{\pm}:=\max\{\lvert u_{2,b}^{\pm}\rvert, \lvert h\partial_x u_{2,b}^{\pm}\rvert\},\\
&\mathcal U_{1,L}(x,t):=\tilde u_{1,b}^+(x)\tilde u_{1,b}^-(t){\bf 1}_{\{t<x\}}+\tilde u_{1,b}^-(x)\tilde u_{1,b}^+(t){\bf 1}_{\{t>x\}},\\
&\mathcal U_{2,L}(x,t):=\tilde u_{2,b}^+(x)\tilde u_{2,b}^-(t){\bf 1}_{\{t<x\}}+\tilde u_{2,b}^-(x)\tilde u_{2,b}^+(t){\bf 1}_{\{t>x\}}.
\end{align*}
By an integration by parts we have,
$$
\lvert hK_{1,L}Wv(x)\rvert= \mathcal O(h^{-1/3})\Bigg( \int_{-\infty}^b\mathcal U_{1,L}(x,t)\lvert v(t)\rvert dt+h\mathcal U_{1,L}(x,b)\lvert v(b)\rvert \Bigg ),
$$
and
$$\lvert hK_{2,L}W^*v(x)\rvert= \mathcal O(h^{-1/3})\Bigg( \int_{-\infty}^b\mathcal U_{2,L}(x,t)\lvert v(t)\rvert dt+h\mathcal U_{2,L}(x,b)\lvert v(b)\rvert \Bigg ).$$

For any $x,t\leq b$ we have
$$\mathcal U_{2,L}(x,t)=\mathcal O(h^{1/3})e^{-\lvert \mathrm{Re}\int_t^x(V_2-E)^{1/2}\rvert/h}.$$
Hence we have $\mathcal U_{2,L}(x,b)=\mathcal O(h^{1/3})$ and there exists $\alpha>0$ such that,
$$\int_{-\infty}^b\mathcal U_{2,L}(x,t)dt=\mathcal O(h^{1/3})\int_{-\infty}^be^{-\alpha\lvert x-t\rvert/h}dt=\mathcal O(h^{4/3}),$$
which proves \eqref{myeq2.12}. By the similar calculation we obtain \eqref{myeq2.12.1}.

Next we estimate $\mathcal U_{1,L}$. For any $\delta>0$ small enough, there exists $\alpha>0$ such that,
\begin{itemize}
\item If $\lvert t-a\rvert\leq C_1h^{2/3}$ or $b-C_1h^{2/3}\leq t\leq b$, then for any $-\infty\leq x\leq b$,
$$\mathcal U_{1,L}(x,t)=\mathcal O(1).$$
\item If $a+C_1h^{2/3}\leq t\leq b-C_1h^{2/3}$, then for any $-\infty\leq x\leq b$
$$\mathcal U_{1,L}(x,t)=\mathcal O(h^{1/6})\lvert t-b\rvert^{-1/4}.$$
\item If $a-2\delta\leq t\leq a-C_1h^{2/3}$, then for any $-\infty\leq x\leq b$,
$$\mathcal U_{1,L}(x,t)=\mathcal O(h^{1/6})\lvert t-a\rvert^{-1/4}.$$
\item If $t\leq a-2\delta$ and $x\leq a-\delta$ then,
$$\mathcal U_{1,L}(x,t)=\mathcal O(h^{1/3})e^{-\alpha\lvert t-x\rvert /h}.$$
\item If $t\leq a-2\delta$ and $a-\delta\leq x\leq b $ then,
$$\mathcal U_{1,L}(x,t)=\mathcal O(h^{1/6})e^{-\alpha\lvert t-a+\delta\rvert /h}.$$
\end{itemize}
Hence we have $\mathcal U_{1,L}(x,b)=\mathcal O(1)$. Moreover, when $x\leq a-\delta$ we have
\begin{align*}
\int_{-\infty}^b\mathcal U_{1,L}(x,t)dt&=\mathcal O(h^{1/3})\int_{-\infty}^{a-2\delta}e^{-\alpha\lvert t-x\rvert /h}dt+\mathcal O(1)\int_{a-C_1h^{2/3}}^{a+C_1h^{2/3}}dt\\
&\quad +\mathcal O(1)\int_{b-C_1h^{2/3}}^{b}dt+\mathcal O(h^{1/6})\int_{a+C_1h^{2/3}}^{b-C_1h^{2/3}}\lvert t-b\rvert^{-1/4}dt\\
&\quad +\mathcal O(h^{1/6})\int_{a-2\delta}^{a-C_1h^{2/3}}\lvert t-a\rvert^{-1/4}dt=\mathcal O(h^{1/6}).
\end{align*}
When $a-\delta\leq x\leq b $ we have
\begin{align*}
\int_{-\infty}^b\mathcal U_{1,L}(x,t)dt&=\mathcal O(h^{1/6})\int_{-\infty}^{a-2\delta}e^{-\alpha\lvert t-a+\delta\rvert /h}dt+\mathcal O(1)\int_{a-C_1h^{2/3}}^{a+C_1h^{2/3}}dt\\
&\quad +\mathcal O(1)\int_{b-C_1h^{2/3}}^{b}dt+\mathcal O(h^{1/6})\int_{a+C_1h^{2/3}}^{b-C_1h^{2/3}}\lvert t-b\rvert^{-1/4}dt\\
&\quad +\mathcal O(h^{1/6})\int_{a-2\delta}^{a-C_1h^{2/3}}\lvert t-a\rvert^{-1/4}dt=\mathcal O(h^{1/6}),
\end{align*}
which completes the proof of \eqref{myeq2.13}. By the similar calculation we obtain \eqref{myeq2.13.1}.

Finally, we shall prove \eqref{myeq2.15.0}. By an integration by parts we have
$$
\lvert hK_{1,L}Wu_{2,b}^-(b)\rvert= \mathcal O(h^{-1/3})\Bigg( \int_{-\infty}^b\tilde u_{1,b}^-(t)\tilde u_{2,b}^-(t) dt+h\tilde u_{1,b}^-(b)\tilde u_{2,b}^-(b) \Bigg )\tilde u_{1,b}^+(b),
$$
Since $\tilde u_{2,b}^-(b)\leq Ch^{1/6}$ we have $\tilde u_{1,b}^-(b)\tilde u_{2,b}^-(b) \leq Ch^{1/6}$. As for the integral we have
$$\int_{-\infty}^b\tilde u_{1,b}^-(t)\tilde u_{2,b}^-(t) dt=\mathcal O(h^{1/6})\int_{-\infty}^be^{-\alpha\lvert t-b\rvert/h}dt=\mathcal O(h^{7/6}),$$
which proves the estimate for the first element of \eqref{myeq2.15.0}. The estimate for the derivative follows from the similar calculation.
\end{proof}

In the same way as for Lemma \ref{fundamental3} we have,
\begin{lem}\label{fundamental4}
As $h$ goes to $0_+$ one has,
\begin{equation*}
\lVert hK_{1,R}W\rVert_{\mathcal L(C^0_b(I_R^{\theta}))}=\mathcal O(h),
\end{equation*}
\begin{equation*}
\lVert hK_{2,R}W^*\rVert_{\mathcal L(C^0_b(I_R^{\theta}))}=\mathcal O(h^{-1/6}),
\end{equation*}
Moreover, there exist complex numbers $\eta_c,\theta_c,\theta'_c$ such that,
\begin{equation*}
\begin{pmatrix}
hK_{1,R}Wu(c)\\ h\partial_x (K_{1,R}Wu)(c)
\end{pmatrix}=\eta_c
\begin{pmatrix}
u_{1,c}^+(c)\\ \partial_x u_{1,c}^+(c)
\end{pmatrix},
\end{equation*}
\begin{equation*}
\begin{pmatrix}
hK_{2,R}W^*u(c)\\ h\partial_x (K_{2,R}W^*u)(c)
\end{pmatrix}=\theta_c
\begin{pmatrix}
u_{2,c}^+(c)\\ \partial_x u_{2,c}^+(c)
\end{pmatrix},
\end{equation*}
\begin{equation*}
\begin{pmatrix}
hK_{2,R}W^*u_{1,c}^-(c)\\ h\partial_x (K_{2,R}W^*u_{1,c}^-)(c)
\end{pmatrix}=\theta'_c
\begin{pmatrix}
u_{2,c}^+(c)\\ \partial_x u_{2,c}^+(c)
\end{pmatrix},
\end{equation*}
and
$$\lvert \eta_c\rvert\leq Ch^{5/6}\lVert u\rVert_{C^0_b(I_R^{\theta})},\ \lvert \theta_c\rvert\leq Ch^{-1/6}\lVert u\rVert_{C^0_b(I_R^{\theta})},\ \lvert \theta'_c\rvert\leq Ch^{5/6}.$$
\end{lem}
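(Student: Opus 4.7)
The proof will mirror that of Lemma \ref{fundamental3} almost line by line, exploiting the symmetry between the two sides of the crossing: on $I_L$, the potential $V_2$ is everywhere above $E$ (purely decaying, handled with the fast rate $\mathcal O(h)$) while $V_1$ has the turning point $a$ (slower rate $\mathcal O(h^{-1/6})$); on $I_R^\theta$, the roles swap, with $V_1$ everywhere above $E$ (giving the $\mathcal O(h)$ bound for $hK_{1,R}W$) and $V_2$ having the turning point $c$ (giving $\mathcal O(h^{-1/6})$ for $hK_{2,R}W^*$). The plan is to set up the kernels
\begin{align*}
\tilde u_{j,c}^{\pm}&:=\max\{\lvert u_{j,c}^{\pm}\rvert,\lvert h\partial_x u_{j,c}^{\pm}\rvert\},\\
\mathcal U_{j,R}(x,t)&:=\tilde u_{j,c}^-(x)\tilde u_{j,c}^+(t){\bf 1}_{\{t<x\}}+\tilde u_{j,c}^+(x)\tilde u_{j,c}^-(t){\bf 1}_{\{t>x\}},
\end{align*}
and, using integration by parts exactly as in the proof of Lemma \ref{fundamental3}, to obtain representations of the form
$$\lvert hK_{j,R}Wu(x)\rvert = \mathcal O(h^{-1/3})\Bigl(\int_{I_R^{\theta}}\mathcal U_{j,R}(x,t)\lvert u(t)\rvert\lvert dt\rvert+h\,\mathcal U_{j,R}(x,c)\lvert u(c)\rvert\Bigr).$$

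For the $\mathcal O(h)$ bound on $hK_{1,R}W$, the argument is essentially identical to the estimate of $hK_{2,L}W^*$: since $V_1>E$ throughout $I_R^\theta$, Proposition \ref{yafaev} and the asymptotics of $u_{1,c}^\pm$ yield $\mathcal U_{1,R}(x,t)=\mathcal O(h^{1/3})\,e^{-|\mathrm{Re}\int_t^x (V_1-E)^{1/2}|/h}$, so that $\mathcal U_{1,R}(x,c)=\mathcal O(h^{1/3})$ and $\int_{I_R^\theta}\mathcal U_{1,R}(x,t)\,|dt|=\mathcal O(h^{4/3})$. For the $\mathcal O(h^{-1/6})$ bound on $hK_{2,R}W^*$, I would split $I_R^\theta$ into the zones $|t-c|\le C_1h^{2/3}$ (turning point), $c+C_1h^{2/3}\le t\le x_\infty$ (classically allowed, between $c$ and where the distortion starts) and $t\in F_\theta([x_\infty,\infty))$ (distorted tail). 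In the first two zones the estimates are exactly those used for $\mathcal U_{1,L}$ near $b$ in Lemma \ref{fundamental3}; in the distorted tail one invokes the crucial inequality $\mathrm{Im}\int_{x_\infty}^{F_\theta}\sqrt{E-V_2(t)}dt\ge -Ch$ recalled in Section \ref{secondsec}, which converts the oscillatory behaviour along the real axis into genuine exponential decay along $I_R^\theta$. Summing the zones reproduces the $\mathcal O(h^{1/6})$ integral bound and the $\mathcal O(1)$ pointwise bound at $t=c$, yielding the claimed $\mathcal O(h^{-1/6})$ operator norm.

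The three boundary identities \eqref{myeq2.12.1}--\eqref{myeq2.15.0} then transpose verbatim: evaluating $K_{j,R}[Wu](c)$ collapses one of the two pieces in the definition of the kernel (because $\int_c^c=0$), leaving a scalar multiple of $u_{j,c}^+(c)$ and of its derivative, which defines $\eta_c$ and $\theta_c$ with the same bounds derived above. For $\theta'_c$, I apply the $hK_{2,R}W^*$ representation to the specific decaying solution $u_{1,c}^-$, which at $c$ satisfies $|u_{1,c}^-(c)|\leq Ch^{1/6}$ by Proposition \ref{yafaev}; combining this extra $h^{1/6}$ factor with the analogue of the calculation in Lemma \ref{fundamental3} for $\theta'_b$ gives the improved bound $|\theta'_c|\leq Ch^{5/6}$.

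The main obstacle is the analysis in the distorted tail $F_\theta([x_\infty,\infty))$, where standard real-axis WKB asymptotics for $u_{2,c}^\pm$ must be replaced by their holomorphic extensions. One must verify that the function $f$ defining $F_\theta$, subject to $f'(t)\ge\delta(t^{-1}f(t)+t^{-3}f(t)^3)$, is compatible with the integration by parts and that the accumulated phase along the curve has nonnegative imaginary part up to $\mathcal O(h)$; this is precisely what the cited inequality in Section \ref{secondsec} guarantees, so once that reduction is carried out the remaining estimates are parallel to those in Lemma \ref{fundamental3}.
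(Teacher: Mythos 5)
Your proposal is correct and coincides with the paper's approach: for this lemma the paper gives no argument of its own, stating only ``in the same way as for Lemma~\ref{fundamental3},'' and your elaboration supplies precisely the left--right transposition the paper leaves implicit, including the one genuinely new ingredient, the distorted tail of $I_R^\theta$. One small caveat worth recording: the inequality $\mathrm{Im}\,\int_{x_\infty}^{F_\theta}\sqrt{E-V_2(t)}\,dt\geq -Ch$ that you cite only guarantees that the oscillatory factor in $u_{2,c}^{\pm}$ stays bounded near $x_\infty$; the genuine decay along $I_R^\theta$ that makes $\int_{I_R^\theta}\mathcal U_{2,R}(x,t)\,|dt|$ converge with the required $\mathcal O(h^{1/6})$ bound comes from the eventual growth of $\mathrm{Im}\,\int$ forced by the condition $f'(t)\geq \delta(t^{-1}f(t)+t^{-3}f(t)^3)$ on the distortion, so it is that structural property of $F_\theta$, rather than the $\geq -Ch$ bound alone, that converts the real-axis oscillation into exponential decay.
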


\section{Solutions to the system}\label{thirdsec}
In this section, we construct solutions to \eqref{myeq1.1} on each interval. By Lemma \ref{fundamental}, the operators $M_b:=h^2K_{2,b}W^*K_{1,b}W$, $M_b':=h^2K_{2,b}W^*K_{1,b}'W$ and $M_b'':=h^2K_{2,b}W^*K_{1,b}''W$ are $\mathcal O(h^{1/2})$ when acting on $C(v_{1,b}^{\pm},h)$, $C(v_{2,b}^+,h)$ and $C(v_{2,b}^-,h)$ respectively. Therefore, we can define
\begin{align*}
&w_{1,b}^{\pm}:=\begin{pmatrix}
v_{1,b}^{\pm}+hK_{1,b}W\sum_{j\geq 0}M_b^j(hK_{2,b}W^*v_{1,b}^{\pm})\\
-\sum_{j\geq 0}M_b^j(hK_{2,b}W^*v_{1,b}^{\pm})
\end{pmatrix},\\
&w_{2,b}^{+}:=\begin{pmatrix}
-hK_{1,b}'W\sum_{j\geq 0}(M_b')^jv_{2,b}^{+}\\
\sum_{j\geq 0}(M_b')^jv_{2,b}^{+}
\end{pmatrix},\\
&w_{2,b}^{-}:=\begin{pmatrix}
-hK_{1,b}''W\sum_{j\geq 0}(M_b'')^jv_{2,b}^{-}\\
\sum_{j\geq 0}(M_b'')^jv_{2,b}^{-}
\end{pmatrix}.
\end{align*}
These are solutions to \eqref{myeq1.1} on $I_b$ and we have,
$$
w_{1,b}^{\pm}\to \begin{pmatrix}
v_{1,b}^{\pm}\\
0
\end{pmatrix},\
w_{2,b}^{\pm}\to \begin{pmatrix}
0\\
v_{2,b}^{\pm}
\end{pmatrix},
$$
as $h\to 0_+$.

In the same way, the operators $M_c:=h^2K_{1,c}WK_{2,c}W^*$, $M_c':=h^2K_{1,c}WK_{2,c}'W^*$ and $M_c'':=h^2K_{1,c}WK_{2,c}''W^*$ are $\mathcal O(h^{1/2})$ when acting on $C(v_{2,c}^{\pm},h)$, $C(v_{1,c}^+,h)$ and $C(v_{1,c}^-,h)$ respectively. Therefore, we can define
\begin{align*}
&w_{1,c}^{+}:=\begin{pmatrix}
\sum_{j\geq 0}(M_c')^jv_{1,c}^{+}\\
-hK_{2,c}'W^*\sum_{j\geq 0}(M_c')^jv_{1,c}^{+}
\end{pmatrix},\\
&w_{1,c}^{-}:=\begin{pmatrix}
\sum_{j\geq 0}(M_c'')^jv_{1,c}^{-}\\
-hK_{2,c}''W\sum_{j\geq 0}(M_c'')^jv_{1,c}^{-}
\end{pmatrix},\\
&w_{2,c}^{\pm}:=\begin{pmatrix}
-\sum_{j\geq 0}M_c^j(hK_{1,c}Wv_{2,c}^{\pm})\\
v_{2,c}^{\pm}+hK_{2,c}W\sum_{j\geq 0}M_c^j(hK_{1,c}Wv_{2,c}^{\pm})
\end{pmatrix}.
\end{align*}
These are solutions to \eqref{myeq1.1} on $I_c$ and we have,
$$
w_{1,c}^{\pm}\to \begin{pmatrix}
v_{1,c}^{\pm}\\
0
\end{pmatrix},\
w_{2,c}^{\pm}\to \begin{pmatrix}
0\\
v_{2,c}^{\pm}
\end{pmatrix},
$$
as $h\to 0_+$.

The following lemmas are consequences of Lemma \ref{fundamental}, \ref{fundamental2} and the definitions of fundamental solutions. The estimates for the derivatives follow observing that if $v$ is any of the functions $v_{j,b}^{\pm},\ v_{j,c}^{\pm},\ j=1,2$, the order of $\partial v(0)$ is that of $v(0)$ times $h^{-1}$.
\begin{lem}\label{at0b}
There exist complex numbers $\gamma_{1,b}^{\pm}$, $\gamma_{2,b}^+$ and $\delta_{j,b}^{\pm},\ j=1,2$ such that,
\begin{align*}
&w_{1,b}^{\pm}(0)=\begin{pmatrix}
v_{1,b}^{\pm}(0)+\gamma_{1,b}^{\pm}v_{1,b}^-(0)\\
\delta_{1,b}^{\pm}v_{2,b}^-(0)
\end{pmatrix}=\begin{pmatrix}
v_{1,b}^{\pm}(0)\\
0
\end{pmatrix}+\mathcal O(h^{1/2}\lvert v_{1,b}^{\pm}(0)\rvert),\\
&\partial w_{1,b}^{\pm}(0)=\begin{pmatrix}
\partial v_{1,b}^{\pm}(0)+\gamma_{1,b}^{\pm}\partial v_{1,b}^-(0)\\
\delta_{1,b}^{\pm}\partial v_{2,b}^-(0)
\end{pmatrix}=\begin{pmatrix}
\partial v_{1,b}^{\pm}(0)\\
0
\end{pmatrix}+\mathcal O(h^{1/2}\lvert \partial v_{1,b}^{\pm}(0)\rvert),\\
&w_{2,b}^{+}(0)=\begin{pmatrix}
\gamma_{2,b}^+v_{1,b}^+(0)+\zeta_{2,b}^+v_{1,b}^-(0)\\
v_{2,b}^{+}(0)+\delta_{2,b}^+v_{2,b}^-(0)
\end{pmatrix}=\begin{pmatrix}
0\\
v_{2,b}^{+}(0)
\end{pmatrix}+\mathcal O(h^{1/2}\lvert v_{2,b}^{+}(0)\rvert),\\
&\partial w_{2,b}^{+}(0)=\begin{pmatrix}
\gamma_{2,b}^+\partial v_{1,b}^+(0)+\zeta_{2,b}^+\partial v_{1,b}^-(0)\\
\partial v_{2,b}^{+}(0)+\delta_{2,b}^+\partial v_{2,b}^-(0)
\end{pmatrix}=\begin{pmatrix}
0\\
\partial v_{2,b}^{+}(0)
\end{pmatrix}+\mathcal O(h^{1/2}\lvert \partial v_{2,b}^{+}(0)\rvert),\\
&w_{2,b}^{-}(0)=\begin{pmatrix}
0\\
v_{2,b}^{-}(0)+\delta_{2,b}^-v_{2,b}^-(0)
\end{pmatrix}=\begin{pmatrix}
0\\
v_{2,b}^{-}(0)
\end{pmatrix}+\mathcal O(h^{1/2}\lvert v_{2,b}^{-}(0)\rvert).\\
&\partial w_{2,b}^{-}(0)=\begin{pmatrix}
0\\
\partial v_{2,b}^{-}(0)+\delta_{2,b}^-\partial v_{2,b}^-(0)
\end{pmatrix}=\begin{pmatrix}
0\\
\partial v_{2,b}^{-}(0)
\end{pmatrix}+\mathcal O(h^{1/2}\lvert \partial v_{2,b}^{-}(0)\rvert).\\
\end{align*}
\end{lem}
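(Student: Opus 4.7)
The plan splits into an algebraic identification of the column structure at $x=0$ and a quantitative $O(h^{1/2})$ control of the Neumann series remainders. On the algebraic side, I would inspect the defining integrals at $x=0$. Because one of the two integrals in $K_{1,b}[v](x)$ is $\int_x^0 v_{1,b}^- v\,dt$, which vanishes at $x=0$, one obtains $K_{1,b}[v](0)=\gamma\, v_{1,b}^-(0)$ with $\gamma=(h^2\mathcal W[v_{1,b}^-,v_{1,b}^+])^{-1}\int_b^0 v_{1,b}^+ v\,dt$; after differentiating, the boundary terms $\pm v_{1,b}^-(x)v_{1,b}^+(x)v(x)$ cancel pairwise, so $\partial_x K_{1,b}[v](0)=\gamma\,\partial_x v_{1,b}^-(0)$ with the \emph{same} $\gamma$. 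The analogous computation yields $K_{2,b}[v](0)$ and its derivative as a common multiple of $v_{2,b}^-(0)$ and $\partial_x v_{2,b}^-(0)$. For $K_{1,b}'$, both integrals $\int_b^x$ survive at $x=0$, producing a combination $\zeta v_{1,b}^-(0)+\gamma v_{1,b}^+(0)$ and, with identical coefficients, the analogous combination for the derivative. For $K_{1,b}''$ both integrals $\int_x^0$ vanish at $x=0$, and the boundary cancellations force $K_{1,b}''[v](0)=\partial_x K_{1,b}''[v](0)=0$ identically; this algebraic fact is precisely why the operator $K_{1,b}''$ was singled out.

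Feeding these four facts termwise into the Neumann expansions yields the stated column structures. For $w_{1,b}^\pm$, the first-component correction $hK_{1,b}W\bigl(\sum_j M_b^j(hK_{2,b}W^* v_{1,b}^\pm)\bigr)$ at $x=0$ reduces via the $K_{1,b}$-identity to a multiple of $v_{1,b}^-(0)$, while in the second component the outermost $K_{2,b}$ present in every Neumann iterate produces a multiple of $v_{2,b}^-(0)$. For $w_{2,b}^+$, the single application of $K_{1,b}'$ supplies the $\gamma_{2,b}^+ v_{1,b}^+(0)+\zeta_{2,b}^+ v_{1,b}^-(0)$ structure in the first component, and the $K_{2,b}$ nested in $M_b'$ accounts for the $v_{2,b}^-(0)$ tail in the second. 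For $w_{2,b}^-$, the $K_{1,b}''$-identity annihilates the first component together with its derivative at $x=0$ outright, and the second component again reduces to $v_{2,b}^-(0)$ plus terms with outer $K_{2,b}$, which produce the $\delta_{2,b}^- v_{2,b}^-(0)$ tail.

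The size estimates follow by combining the geometric sum with Lemma \ref{fundamental}: the operators $M_b$, $M_b'$, $M_b''$ are compositions of an $O(h^{1/2})$ factor with an $O(1)$ factor and therefore have norm $O(h^{1/2})$ on $C(v_{1,b}^\pm,h)$, $C(v_{2,b}^+,h)$, $C(v_{2,b}^-,h)$ respectively, so every Neumann series converges with tail $O(h^{1/2})$ in the corresponding norm. Evaluating at $x=0$ yields the claimed $O(h^{1/2}|v_{1,b}^\pm(0)|)$ and $O(h^{1/2}|v_{2,b}^\pm(0)|)$ bounds on the column vectors. The derivative estimates require no new idea once one invokes the observation recorded just before the lemma statement, namely that $|\partial_x v(0)|\asymp h^{-1}|v(0)|$ for each reference function $v_{j,b}^\pm$, which converts a bound on function values into an equivalent bound on derivatives. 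The only real bookkeeping obstacle I anticipate is verifying at each step that the relevant $M$-operator stabilises the same reference space in which its input already lives, so that the iteration remains inside a single norm; this is directly supplied by the five operator-norm statements of Lemma \ref{fundamental} and is the main reason three different $K_{1,b}$-type fundamental solutions were introduced in the first place.
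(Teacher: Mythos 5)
Your proposal is correct and follows exactly the route the paper intends: the paper only remarks that Lemma \ref{at0b} follows from Lemma \ref{fundamental}, Lemma \ref{fundamental2} and the definitions of the fundamental solutions, together with the observation that $|\partial_x v(0)|\asymp h^{-1}|v(0)|$ for the reference functions. Your derivation of the column structure at $x=0$ from the vanishing integrals and the cancellation of boundary terms in the $K$-derivatives, combined with the geometric Neumann-series bound from the $\mathcal O(h^{1/2})$ operator norms, supplies precisely the omitted details.
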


\begin{lem}\label{at0c}
There exist complex numbers $\gamma_{1,b}^{\pm}$, $\gamma_{2,b}^+$ and $\delta_{j,b}^{\pm},\ j=1,2$ such that,
\begin{align*}
&w_{2,c}^{\pm}(0)=\begin{pmatrix}
\delta_{2,c}^{\pm}v_{1,c}^-(0)\\
v_{2,c}^{\pm}(0)+\gamma_{2,c}^{\pm}v_{2,c}^-(0)
\end{pmatrix}=\begin{pmatrix}
0\\
v_{2,c}^{\pm}(0)
\end{pmatrix}+\mathcal O(h^{1/2}\lvert v_{2,c}^{\pm}(0)\rvert),\\
&\partial w_{2,c}^{\pm}(0)=\begin{pmatrix}
\delta_{2,c}^{\pm}\partial v_{1,c}^-(0)\\
\partial v_{2,c}^{\pm}(0)+\gamma_{2,c}^{\pm}\partial v_{2,c}^-(0)
\end{pmatrix}=\begin{pmatrix}
0\\
\partial v_{2,c}^{\pm}(0)
\end{pmatrix}+\mathcal O(h^{1/2}\lvert \partial v_{2,c}^{\pm}(0)\rvert),\\
&w_{1,c}^{+}(0)=\begin{pmatrix}
v_{1,c}^{+}(0)+\delta_{1,c}^+v_{1,c}^-(0)\\
\gamma_{1,c}^+v_{2,c}^+(0)+\zeta_{1,c}^+v_{2,c}^-(0)
\end{pmatrix}=\begin{pmatrix}
v_{1,c}^{+}(0)\\
0
\end{pmatrix}+\mathcal O(h^{1/2}\lvert v_{1,c}^{+}(0)\rvert),\\
&\partial w_{1,c}^{+}(0)=\begin{pmatrix}
\partial v_{1,c}^{+}(0)+\delta_{1,c}^+\partial v_{1,c}^-(0)\\
\gamma_{1,c}^+\partial v_{2,c}^+(0)+\zeta_{1,c}^+\partial v_{2,c}^-(0)
\end{pmatrix}=\begin{pmatrix}
\partial v_{1,c}^{+}(0)\\
0
\end{pmatrix}+\mathcal O(h^{1/2}\lvert \partial v_{1,c}^{+}(0)\rvert),\\
&w_{1,c}^{-}(0)=\begin{pmatrix}
v_{1,c}^{-}(0)+\delta_{1,c}^-v_{1,c}^-(0)\\
0
\end{pmatrix}=\begin{pmatrix}
v_{1,c}^{-}(0)\\
0
\end{pmatrix}+\mathcal O(h^{1/2}\lvert v_{1,c}^{-}(0)\rvert).\\
&\partial w_{1,c}^{-}(0)=\begin{pmatrix}
\partial v_{1,c}^{-}(0)+\delta_{1,c}^-\partial v_{1,c}^-(0)\\
0
\end{pmatrix}=\begin{pmatrix}
\partial v_{1,c}^{-}(0)\\
0
\end{pmatrix}+\mathcal O(h^{1/2}\lvert \partial v_{1,c}^{-}(0)\rvert).\\
\end{align*}
\end{lem}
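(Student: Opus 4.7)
The proof runs in complete parallel with that of Lemma \ref{at0b}, the roles of the two endpoints of the middle interval being exchanged. The two inputs are the series definitions of $w_{j,c}^{\pm}$ from Section \ref{thirdsec} and the operator bounds of Lemma \ref{fundamental2}. The plan is (i) to read off the structural form of each $w_{j,c}^{\pm}(0)$ directly from the explicit formulas for the four fundamental solution kernels on $I_c$, evaluated at the left endpoint $x=0$, and (ii) to derive the $\mathcal O(h^{1/2})$ estimates from the norm bounds of Lemma \ref{fundamental2}.

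For step (i), at $x=0$ every integral of the form $\int_0^x\cdots$ vanishes. Hence
$$K_{1,c}[g](0)=\frac{v_{1,c}^-(0)}{h^2\mathcal W[v_{1,c}^+,v_{1,c}^-]}\int_0^c v_{1,c}^+(t)g(t)\,dt,$$
so $K_{1,c}[g](0)$ is a scalar multiple of $v_{1,c}^-(0)$; analogously $K_{2,c}[g](0)\propto v_{2,c}^-(0)$. The kernel $K_{2,c}'$ retains both of its integrals at $x=0$, so $K_{2,c}'[g](0)$ is a linear combination of $v_{2,c}^+(0)$ and $v_{2,c}^-(0)$; finally $K_{2,c}''[g](0)=0$ identically, since both of its integrals collapse to $\int_0^0$. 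Each of $M_c,M_c',M_c''$ factors as an outer $hK_{1,c}W$ followed by a bounded inner operator, so substituting the pointwise identities above into the series formulas for $w_{j,c}^{\pm}$ directly produces the claimed linear combinations of $v_{j,c}^{\pm}(0)$; in particular, the vanishing $K_{2,c}''[g](0)\equiv 0$ forces the second component of $w_{1,c}^-(0)$ to be exactly zero. The identities for $\partial w_{j,c}^{\pm}(0)$ follow by differentiating the kernel expressions: the two boundary terms $\pm v_{1,c}^+(0)v_{1,c}^-(0)g(0)/(h^2\mathcal W)$ cancel at $x=0$, leaving $\partial K_{1,c}[g](0)\propto \partial v_{1,c}^-(0)$, and similarly for the other three kernels.

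For step (ii), Lemma \ref{fundamental2} yields $\lVert M_c\rVert_{\mathcal L(C(v_{2,c}^{\pm},h))}=\mathcal O(h^{1/2})$, with analogous bounds for $M_c'$ and $M_c''$ on their respective spaces, so the Neumann series defining $w_{j,c}^{\pm}$ converge with $\mathcal O(1)$ norms. The outer application of $hK_{1,c}W$ (respectively $hK_{2,c}'W^*$ or $hK_{2,c}''W^*$) then contributes an additional $\mathcal O(h^{1/2})$ factor; evaluating at $x=0$ and dividing by the appropriate reference value identifies the scalars $\delta_{j,c}^{\pm}$, $\gamma_{j,c}^{\pm}$, $\zeta_{1,c}^+$ with the claimed orders. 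The derivative estimates follow identically, using the remark preceding the lemma that the order of $\partial v(0)$ is $h^{-1}$ times that of $v(0)$ for any $v=v_{j,c}^{\pm}$. The main bookkeeping hurdle, as in Lemma \ref{at0b}, is matching reference functions across each equality (for example, verifying $\lvert\delta_{2,c}^{\pm}v_{1,c}^-(0)\rvert=\mathcal O(h^{1/2}\lvert v_{2,c}^{\pm}(0)\rvert)$), which uses the known semiclassical orders of $v_{j,c}^{\pm}(0)$ at the crossing point $x=0$ but requires no new analytic input beyond Lemmas \ref{fundamental} and \ref{fundamental2}.
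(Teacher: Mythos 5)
Your proof is correct and fills in precisely the argument the paper gestures at but does not write out: the paper's only remark before Lemmas \ref{at0b} and \ref{at0c} is that they are "consequences of Lemma \ref{fundamental}, \ref{fundamental2} and the definitions of fundamental solutions," together with the observation that $\partial v(0)$ has the order of $v(0)$ times $h^{-1}$. Your step (i) correctly reads off the structural form (in particular that $K_{2,c}''[\cdot](0)\equiv 0$ and $\partial K_{2,c}''[\cdot](0)\equiv 0$, forcing the exact zero in $w_{1,c}^-$, and the cancellation of boundary terms in the derivatives of the kernels), and step (ii) correctly converts the operator bounds of Lemma \ref{fundamental2} into the stated $\mathcal O(h^{1/2})$ remainders, matching the paper's intended route.
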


We also define the solutions on $I_L$ and $I_R^{\theta}$. By Lemma \ref{fundamental3} the operator $M_L:=h^2K_{1,L}WK_{2,L}W^*$ and $M_R:=h^2K_{2,R}W^*K_{1,R}W$ are $\mathcal O(h^{2/3})$ when acting on $C^0_b(I_L)$ and $C^0_b(I_R^{\theta})$ respectively. Thus we can define,
\begin{align*}
&w_{1,L}:=\begin{pmatrix}
\sum_{j\geq 0}M_L^ju_{1,b}^{-}\\
-hK_{2,L}W^*\sum_{j\geq 0}M_L^ju_{1,b}^{-}
\end{pmatrix},\\
&w_{2,L}:=\begin{pmatrix}
-\sum_{j\geq 0}M_L^j(hK_{1,L}Wu_{2,b}^{-})\\
u_{2,b}^{-}+hK_{2,L}W^*\sum_{j\geq 0}M_L^j(hK_{1,L}Wu_{2,b}^{-})
\end{pmatrix},\\
\end{align*}
on $I_L$ and
\begin{align*}
&w_{1,R}:=\begin{pmatrix}
u_{1,c}^{-}+hK_{1,R}W\sum_{j\geq 0}M_R^j(hK_{2,R}W^*u_{1,c}^{-})\\
-\sum_{j\geq 0}M_R^j(hK_{2,R}W^*u_{1,c}^{-})
\end{pmatrix},\\
&w_{2,R}:=\begin{pmatrix}
-hK_{1,R}W\sum_{j\geq 0}M_R^ju_{2,c}^{-}\\
\sum_{j\geq 0}M_R^ju_{2,c}^{-}
\end{pmatrix},
\end{align*}
on $I_R^{\theta}$.

For these solutions we have the following proposition corresponding to \cite[Proposition 4.1]{FMW}.
\begin{pro}
The solutions $w_{j,L}$ and $w_{j,R}$, $(j=1,2)$ satisfy,
$$w_{j,L}\in L^2(I_L)\oplus L^2(I_L)\ ;\ w_{j,R}\in L^2(I_R^{\theta})\oplus L^2(I_R^{\theta}).$$
\end{pro}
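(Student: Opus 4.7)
The plan is to reduce the proposition to the fact that all the base solutions $u_{1,b}^-, u_{2,b}^-$ on $I_L$ and $u_{1,c}^-, u_{2,c}^-$ on $I_R^\theta$ are themselves in $L^2$, and then to show that the Neumann series defining $w_{j,L}, w_{j,R}$ converge in a Banach space continuously embedded in $L^2$.

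First I would check the $L^2$ membership of the base solutions. By the asymptotics in Proposition \ref{yafaev} and the Remark following it, together with Assumption (A3) which gives $V_1,V_2>E'$ on $(-\infty,a_0)$, both $u_{1,b}^-=u_{1,L}^-$ and $u_{2,b}^-=u_{2,L}^-$ decay like $h^{1/6}(V_j-E)^{-1/4}\exp(-|\int_{a(E)}^{x}\sqrt{V_j-E}\,dt|/h)$ as $\mathrm{Re}\,x\to-\infty$, hence lie in $L^2(I_L)$. On $I_R^\theta$, the function $u_{1,c}^-$ decays exponentially for the same reason (since $V_1>E'$ there), while $u_{2,c}^-$, which is a linear combination of the real-axis outgoing and incoming solutions, becomes exponentially decaying along $F_\theta$ thanks to the complex distortion, using the inequality $\mathrm{Im}\int_{x_\infty}^{F_\theta}\sqrt{E-V_2}\,dt\geq -Ch$ stated in Section \ref{secondsec} together with the positivity of $\mathrm{Im}\,F_\theta(x)$ for $x$ large.

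Second, I would introduce a weighted Banach space $\mathcal H_L$ of continuous functions on $I_L$ with norm $\|u\|_{\mathcal H_L}:=\sup_{x\in I_L}|u(x)|/w_L(x)$, where $w_L$ is a fixed positive exponentially decaying weight dominating $|u_{1,b}^-|$ and $|u_{2,b}^-|$ uniformly in $h$. Since $w_L\in L^2(I_L)$, $\mathcal H_L\hookrightarrow L^2(I_L)$ continuously. The task is then to re-run the estimates of Lemma \ref{fundamental3} in the weighted norm and prove
\begin{equation*}
\|hK_{1,L}W\|_{\mathcal L(\mathcal H_L)}=\mathcal O(h^{-1/6}),\qquad \|hK_{2,L}W^*\|_{\mathcal L(\mathcal H_L)}=\mathcal O(h),
\end{equation*}
so that $M_L=h^2K_{1,L}WK_{2,L}W^*$ is a contraction on $\mathcal H_L$ for small $h$. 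Once this is in place, the series $\sum_j M_L^j u_{1,b}^-$ and $\sum_j M_L^j(hK_{1,L}Wu_{2,b}^-)$ converge in $\mathcal H_L\subset L^2(I_L)$, and applying the bounded operator $hK_{2,L}W^*$ keeps the result in $\mathcal H_L$; hence both components of $w_{1,L}$ and $w_{2,L}$ are in $L^2(I_L)$. An entirely analogous weighted space $\mathcal H_R$ on $I_R^\theta$ (with weight dominating $|u_{1,c}^-|$ and $|u_{2,c}^-|$ along $F_\theta$) handles $w_{1,R}$ and $w_{2,R}$.

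The main obstacle is checking that $K_{1,L}$ and $K_{2,L}$ are bounded on the weighted space, i.e.\ that the pairing of the growing solution $u_{j,b}^+$ with the tail integral $\int_{-\infty}^{x}u_{j,b}^-(t)v(t)\,dt$ in the Green's kernel does not spoil the exponential decay at $-\infty$. This is where the decaying/growing duality must be exploited: for a source $v$ of weight $w_L$, the tail integral decays at least as fast as $u_{j,b}^+(x)^{-1}\cdot w_L(x)$, so the full product $u_{j,b}^+(x)\int_{-\infty}^{x}u_{j,b}^-v\,dt$ indeed retains the weight $w_L$. This is the same trick that powers the $C(v,h)$ estimates on $I_b$ and $I_c$ in Lemmas \ref{fundamental} and \ref{fundamental2}, adapted to the unbounded intervals; carrying it out is essentially a re-reading of the proof of Lemma \ref{fundamental3} while tracking the weighted norm rather than the sup norm.
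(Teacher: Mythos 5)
The paper gives no proof of this proposition; it simply refers to the corresponding Proposition 4.1 of \cite{FMW}, so there is no in-text argument against which to compare your sketch. On its own merits your approach is correct and consistent with the framework of the paper: $u_{1,b}^-=u_{1,L}^-$ and $u_{2,b}^-=e^{S_2/h}u_{2,L}^-$ decay exponentially on $I_L$ because $V_1,V_2>E'$ near $-\infty$; $u_{1,c}^-$ decays along $I_R^\theta$ for the analogous reason; and $u_{2,c}^-=u_{2,R}^-$, although oscillatory on the real axis, becomes exponentially decaying along the distorted contour because $\mathrm{Im}\,F_\theta(x)\sim\theta x$ for large $x$ forces $\mathrm{Im}\int_c^{F_\theta(x)}\sqrt{E-V_2}\,dt\to+\infty$. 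Passing from the bounded space $C^0_b$ of Lemmas \ref{fundamental3}, \ref{fundamental4} to a weighted sup norm with an $L^2$ exponential weight, then re-running those estimates, is exactly the right device to upgrade convergence of the Neumann series to $L^2$ membership.

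One point worth making explicit when you carry this out: the boundary term $h\,\mathcal U_{j,L}(x,b)\lvert v(b)\rvert$ arising from the integration by parts in the proof of Lemma \ref{fundamental3} does not spoil the weighted bound. Since $x<b$ throughout $I_L$, the indicator in the definition of $\mathcal U_{j,L}$ gives $\mathcal U_{j,L}(x,b)=\tilde u_{j,b}^-(x)\tilde u_{j,b}^+(b)$, which decays in $x$ at the same exponential rate as the base solutions, so it is controlled by the weight; the integral term is then handled by the growing/decaying cancellation you describe (the tail integral of $u_{j,b}^-v$ beats the growth of $u_{j,b}^+(x)$ by Laplace). With these checks your argument is complete, and it is presumably the same as the FMW proof the paper invokes.
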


\section{Connection of the solutions}\label{fourthsec}
In this section we investigate the connection of the basis $w_{j,b}^{\pm}$ and $w_{j,c}^{\pm}$ and that of $w_{j,L}$ (resp., $w_{j,R})$ and $w_{j,b}^{\pm}$ (resp., $w_{j,c}^{\pm}$). We define the $4\times4$ transition matrix $T$ as follows:
$$\begin{pmatrix}
w_{1,b}^+\\
w_{1,b}^-\\
w_{2,b}^+\\
w_{2,b}^-
\end{pmatrix}
=T\begin{pmatrix}
w_{1,c}^+\\
w_{1,c}^-\\
w_{2,c}^+\\
w_{2,c}^-
\end{pmatrix}.$$

From now on, we set,
\begin{align*}
&A_1:=\int_b^0\sqrt{V_1(t)-E}dt/h,\ B_1=\int_0^c\sqrt{V_1(t)-E}dt/h,\\
&A_2:=\int_0^c\sqrt{V_2(t)-E}dt/h,\ B_2=\int_b^0\sqrt{V_2(t)-E}dt/h.
\end{align*}
\begin{lem}\label{connection}
One has
\begin{align*}
T&=\begin{pmatrix}
t_{11}&t_{12}&t_{13}&t_{14}\\
t_{21}&t_{22}&t_{23}&t_{24}\\
t_{31}&t_{32}&t_{33}&t_{34}\\
t_{41}&t_{42}&t_{43}&t_{44}
\end{pmatrix}\\
&=\begin{pmatrix}
\mathcal O(h^{1/2}e^{A_1-B_1})&\mathcal O(e^{A_1+B_1})&\mathcal O(h^{1/2}e^{A_1-A_2})&\mathcal O(he^{A_1+A_2})\\
\mathcal O(e^{-A_1-B_1})&\mathcal O(h^{1/2}e^{-A_1+B_1})&\mathcal O(h^{1/2}e^{-A_1-A_2})&\mathcal O(h^{1/2}e^{-A_1+A_2})\\
\mathcal O(h^{1/2}e^{B_2-B_1})&\mathcal O(h^{1/2}e^{B_2+B_1})&\mathcal O(h^{1/2}e^{B_2-A_2})&\mathcal O(e^{B_2+A_2})\\
0&\mathcal O(h^{1/2}e^{-B_2+B_1})&\mathcal O(e^{-B_2-A_2})&\mathcal O(h^{1/2}e^{-B_2+A_2})
\end{pmatrix}.
\end{align*}
Moreover, one has,
\begin{align}
&t_{12}=(1+\mathcal O(h^{1/2}))e^{A_1+B_1},\label{myeq4.0.1}\\
&t_{34}=(1+\mathcal O(h^{1/2}))e^{A_2+B_2},\label{myeq4.0.2}\\
&t_{23}=-h^{1/2}\sqrt{\pi}e^{-A_1-A_2}(V_1(0)-E)^{-1/4}(V_1'(0)-V_2'(0))^{-1/2}\label{myeq4.0.3}\\
&\qquad\cdot(r_0(0)+r_1(0)\sqrt{V_1(0)-E})+\mathcal O(he^{-A_1-A_2}),\notag\\
&t_{32}=h^{1/2}\sqrt{\pi}e^{B_1+B_2}(V_1(0)-E)^{-1/4}(V_1'(0)-V_2'(0))^{-1/2}\label{myeq4.0.4}\\
&\qquad\cdot(r_0(0)+r_1(0)\sqrt{V_1(0)-E})+\mathcal O(he^{B_1+B_2}).\notag
\end{align}
\end{lem}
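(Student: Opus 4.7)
The plan is to determine each $T_{ij}$ by imposing the matching condition at $x = 0$: for each row $i$, the four scalar equations from $w_{i,b}(0) = \sum_j T_{ij} w_{j,c}(0)$ together with the $\partial_x$-analogue form a $4\times 4$ linear system for $(T_{i1},\ldots,T_{i4})$. I will solve it by approximate block inversion in the $v_{1,c}^\pm/v_{2,c}^\pm$ splitting, with the leading constants in the cross entries $t_{23},t_{32}$ produced by explicit Laplace-type integrals.

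The key preliminary observation is that, from the definitions $v_{1,b}^\pm = u_{1,R}^\pm$, $v_{2,c}^\pm = u_{2,L}^\pm$ and the shifted definitions of $v_{1,c}^\pm$ and $v_{2,b}^\pm$, one has the exact scalar identities
\begin{align*}
v_{1,b}^+ &= e^{A_1+B_1}\, v_{1,c}^-,\quad v_{1,b}^- = e^{-(A_1+B_1)}\, v_{1,c}^+,\\
v_{2,b}^+ &= e^{A_2+B_2}\, v_{2,c}^-,\quad v_{2,b}^- = e^{-(A_2+B_2)}\, v_{2,c}^+.
\end{align*}
Combining Lemma \ref{at0b} with these identities expresses each $w_{i,b}^\pm(0)$ and $\partial_x w_{i,b}^\pm(0)$ as a linear combination of $v_{k,c}^\pm(0)$ and $\partial_x v_{k,c}^\pm(0)$; Lemma \ref{at0c} shows that the $4\times 4$ matrix whose columns are the $(w_{j,c}^\pm(0),\partial_x w_{j,c}^\pm(0))^T$ is block-diagonal in the $v_{1,c}^\pm/v_{2,c}^\pm$ splitting up to $\mathcal O(h^{1/2})$ relative perturbations, with invertible diagonal $2\times 2$ Wronskian blocks of determinant of size $h^{-2/3}$ by \eqref{myeq2.0.1}. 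Matching the $v_{k,c}^\pm(0)$-coefficients on both sides then yields each $T_{ij}$, and the claimed non-leading $\mathcal O$-bounds follow by plugging in the estimates for $\alpha_{j,b},\gamma_{j,b}^\pm,\delta_{j,b}^\pm$ and their $c$-analogues together with the exponential factors $e^{\pm(A_k+B_k)}$. The leading entries $t_{12}$ and $t_{34}$ drop out directly from the scalar identities, giving \eqref{myeq4.0.1} and \eqref{myeq4.0.2}. The exact vanishing $t_{41} = 0$ is structural: the two integrals in $K_{1,b}''[v](x)$ run from $x$ to $0$, so $K_{1,b}''[v](0) = \partial_x K_{1,b}''[v](0) = 0$ identically, hence the first component of both $w_{2,b}^-(0)$ and $\partial_x w_{2,b}^-(0)$ vanishes, and the $2\times 2$ first-component subsystem (with nonzero determinant $\mathcal W[v_{1,c}^+,v_{1,c}^-]$) forces $T_{41} = 0$ exactly.

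The delicate step is the leading order of $t_{23}$ and $t_{32}$. For $t_{23}$ the matching first yields $t_{23}\approx\delta_{1,b}^-\,e^{-(A_2+B_2)}$, and the Neumann series definition of $w_{1,b}^-$ together with the formula for $K_{2,b}$ evaluated at $x=0$ (its second integral, from $0$ to $0$, vanishes) gives
\begin{equation*}
\delta_{1,b}^-\, v_{2,b}^-(0) \approx -\frac{v_{2,b}^-(0)}{h\,\mathcal W[v_{2,b}^-,v_{2,b}^+]}\int_b^0 v_{2,b}^+(t)\, hW^* v_{1,b}^-(t)\,dt.
\end{equation*}
Inserting the leading WKB relation $hW^* v_{1,b}^- \approx h(r_0 + r_1\sqrt{V_1-E})\,v_{1,b}^-$ (the $r_0$ and $h r_1\partial_x$ contributions are of the same order because $\partial_x v_{1,b}^- \sim -\sqrt{V_1-E}\, v_{1,b}^-/h$) together with the WKB asymptotics of the product $v_{2,b}^+\,v_{1,b}^-$ reduces the problem to a Laplace-type integral with exponent $\psi(t)/h$, where
\begin{equation*}
\psi(t):=\int_b^t \bigl(\sqrt{V_2-E}-\sqrt{V_1-E}\bigr)\,ds.
\end{equation*}
Here the crossing makes its appearance: $\psi'(0)=\sqrt{V_2(0)-E}-\sqrt{V_1(0)-E}=0$ and $\psi''(0) = (V_2'(0)-V_1'(0))/(2\sqrt{V_1(0)-E}) < 0$ by Assumption (A3), so $\psi$ attains a degenerate maximum at the right endpoint $t=0$ on $[b,0]$. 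Laplace's method at this degenerate boundary maximum produces a Gaussian factor of size $\sqrt{\pi h\sqrt{V_1(0)-E}/(V_1'(0)-V_2'(0))}$; combining it with the $h^{1/6}/\sqrt\pi$ WKB pre-factors, the $h^{-2/3}$ Wronskian of $(v_{2,b}^-,v_{2,b}^+)$ from \eqref{myeq2.0.1}, the extra $h$ coming from $hW^*$, and collecting the exponential factors $e^{\psi(0)/h}=e^{-A_1+B_2}$ (Laplace contribution) and $e^{-(A_2+B_2)}$ (change of basis), yields \eqref{myeq4.0.3}. The formula \eqref{myeq4.0.4} for $t_{32}$ is obtained in the same fashion from $t_{32}\approx\gamma_{2,b}^+\,e^{A_1+B_1}$, where $\gamma_{2,b}^+$ is the $v_{1,b}^+(0)$-coefficient of $-hK_{1,b}'W v_{2,b}^+(0)$: this is again an integral against $e^{\psi/h}$, and the same degenerate boundary Laplace gives the same universal constant with a different exponential factor. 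The main obstacle is the careful bookkeeping of the $h$-powers — the interplay between the $h^{1/6}$ WKB factors, the $h^{-2/3}$ Wronskian, the $h$ from $hW^*$, and the $h^{1/2}$ from the degenerate Laplace — to confirm that the net leading size is exactly $h^{1/2}$; the $\mathcal O(h e^{\cdots})$ remainders absorb the next-order Laplace correction, the $\mathcal O(h)$ corrections to the WKB pre-factors, and the tails of the Neumann series in the construction of $w_{1,b}^-$ and $w_{2,b}^+$.
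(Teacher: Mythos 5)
Your overall strategy (matching the values and derivatives of $w_{1,b}^-(0)$ and $w_{2,b}^+(0)$ against the $w_{j,c}^\pm(0)$-basis, expanded in the $v_{k,c}^\pm(0)$ splitting) is a legitimate alternative to the paper's Cramer's-rule/Wronskian computation, and several pieces are right: the scalar identities $v_{1,b}^\pm = e^{\mp(A_1+B_1)}v_{1,c}^\mp$ and $v_{2,b}^\pm = e^{\mp(A_2+B_2)}v_{2,c}^\mp$ are exact, the structural argument that the first component of $w_{2,b}^-$ vanishes at $0$ (because both integrals in $K_{1,b}''$ run from $x$ to $0$) does force $t_{41}=0$, and the order-of-magnitude bookkeeping for the non-distinguished entries is in the right spirit.

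The genuine gap is in the evaluation of $t_{23}$ (and, symmetrically, $t_{32}$). When you match the coefficient of $v_{2,c}^+(0)$ in the second component of $w_{1,b}^-(0) = \sum_j t_{2j}w_{j,c}(0)$, the equation you actually get, using Lemma \ref{at0c}, is
\begin{equation*}
\delta_{1,b}^-\,e^{-(A_2+B_2)} \;=\; t_{21}\,\gamma_{1,c}^+ \;+\; t_{23},
\end{equation*}
not $t_{23}\approx\delta_{1,b}^-\,e^{-(A_2+B_2)}$: the term $t_{21}\gamma_{1,c}^+$ cannot be dropped. Indeed $t_{21}\approx e^{-(A_1+B_1)}$ is a block-diagonal (size-$1$) entry of $T$, while $\gamma_{1,c}^+ = \mathcal O(h^{1/2}e^{B_1-A_2})$ is the $\mathcal O(h^{1/2})$ off-diagonal perturbation coming from $-hK_{2,c}'W^*v_{1,c}^+$; their product is $\mathcal O(h^{1/2}e^{-A_1-A_2})$, exactly the order you are after. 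Writing it out, $\delta_{1,b}^-e^{-(A_2+B_2)} = -\dfrac{1}{h\,\mathcal W[v_{2,c}^+,v_{2,c}^-]}\displaystyle\int_b^0 u_{2,L}^-\,W^*u_{1,R}^-\,dt$ contributes only the half-Laplace on $[b,0]$, while $t_{21}\gamma_{1,c}^+ = +\dfrac{1}{h\,\mathcal W[v_{2,c}^+,v_{2,c}^-]}\displaystyle\int_0^c u_{2,L}^-\,W^*u_{1,R}^-\,dt$ supplies the other half-Laplace on $[0,c]$. Only the sum reproduces the full $\int_b^c u_{2,L}^-W^*u_{1,R}^-\,dt$ that appears in the paper (equations \eqref{myeq4.3}--\eqref{myeq4.4}); your boundary Laplace alone is off by a factor of $2$ in the leading constant. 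The same omission occurs for $t_{32}$, where $t_{34}\delta_{2,c}^-$ (with $t_{34}\approx e^{A_2+B_2}$ and $\delta_{2,c}^- = \mathcal O(h^{1/2}e^{B_1-A_2})$) contributes at the leading order $h^{1/2}e^{B_1+B_2}$ alongside $\gamma_{2,b}^+e^{A_1+B_1}$, and must be kept. The paper's Wronskian route (Cramer's formula for $t_{23}$ and the multilinear expansion of $\mathcal W[w_{1,c}^+,w_{1,c}^-,w_{1,b}^-,w_{2,c}^-]$) avoids the issue precisely because both the $K_{2,b}W^*v_{1,b}^-$ and $K_{2,c}'W^*v_{1,c}^+$ terms appear on equal footing in the determinant expansion, combining automatically into the single $\int_b^c$ to which the stationary-phase theorem is applied.
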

\begin{proof}
First, we consider $t_{12}$. By Cramer's formula we see that
\begin{equation}\label{myeq4.0.9}
t_{12}=\frac{\mathcal W[w_{1,c}^+,w_{1,b}^+,w_{2,c}^+,w_{2,c}^-]}{\mathcal W[w_{1,c}^+,w_{1,c}^-,w_{2,c}^+,w_{2,c}^-]}.
\end{equation}
For the calculation of Wronskians, we will use the following notation:
If $w$ is any of the vectors of functions $w_{j,b}^{\pm},w_{j,c}^{\pm},\ j=1,2$ and written as
$$w(x)=\begin{pmatrix} w_1(x)\\ w_2(x)\end{pmatrix},$$
we set,
$$\mathbf{w}(x):=\begin{pmatrix}w_1(x)\\ \partial w_1(x)\\ w_2(x)\\ \partial w_2(x)\end{pmatrix}.$$

To estimate the remainder terms in $\mathbf{w}$ we notice
\begin{equation}\label{myeq4.1}
\begin{split}
&v_{1,b}^{\pm} (0)=\mathcal O(h^{1/6}e^{\pm A_1}),\ \partial v_{1,b}^{\pm}(0)=\mathcal O(h^{-5/6}e^{\pm A_1}),\\
&v_{2,b}^{\pm} (0)=\mathcal O(h^{1/6}e^{\pm B_2}),\ \partial v_{2,b}^{\pm}(0)=\mathcal O(h^{-5/6}e^{\pm B_2}),\\
&v_{1,c}^{\pm} (0)=\mathcal O(h^{1/6}e^{\pm B_1}),\ \partial v_{1,c}^{\pm}(0)=\mathcal O(h^{-5/6}e^{\pm B_1}),\\
&v_{2,c}^{\pm} (0)=\mathcal O(h^{1/6}e^{\pm A_2}),\ \partial v_{2,c}^{\pm}(0)=\mathcal O(h^{-5/6}e^{\pm A_2}).
\end{split}
\end{equation}
By Lemma \ref{at0b}, Lemma \ref{at0c} and \eqref{myeq4.1} we have
\begin{equation}\label{myeq4.2}
\begin{split}
&\mathbf{w}_{1,b}^{\pm}(0)=\begin{pmatrix}
v_{1,b}^{\pm}(0)\\ \partial v_{1,b}^{\pm}(0)\\ 0\\ 0
\end{pmatrix}+\begin{pmatrix}
\mathcal O(h^{2/3}e^{{\pm}A_1})\\ \mathcal O(h^{-1/3}e^{{\pm}A_1})\\ \mathcal O(h^{2/3}e^{{\pm}A_1})\\ \mathcal O(h^{-1/3}e^{{\pm}A_1})
\end{pmatrix},\\
&\mathbf{w}_{2,b}^{\pm}(0)=\begin{pmatrix}
0\\ 0\\ v_{2,b}^{\pm}(0)\\ \partial v_{2,b}^{\pm}(0)
\end{pmatrix}+\begin{pmatrix}
\mathcal O(h^{2/3}e^{{\pm}B_2})\\ \mathcal O(h^{-1/3}e^{{\pm}B_2})\\ \mathcal O(h^{2/3}e^{{\pm}B_2})\\ \mathcal O(h^{-1/3}e^{{\pm}B_2})
\end{pmatrix},\\
&\mathbf{w}_{1,c}^{\pm}(0)=\begin{pmatrix}
v_{1,c}^{\pm}(0)\\ \partial v_{1,c}^{\pm}(0)\\ 0\\ 0
\end{pmatrix}+\begin{pmatrix}
\mathcal O(h^{2/3}e^{\pm B_1})\\ \mathcal O(h^{-1/3}e^{\pm B_1})\\ \mathcal O(h^{2/3}e^{\pm B_1})\\ \mathcal O(h^{-1/3}e^{\pm B_1})
\end{pmatrix},\\
&\mathbf{w}_{2,c}^{\pm}(0)=\begin{pmatrix}
0\\ 0\\ v_{2,c}^{\pm}(0)\\ \partial v_{2,c}^{\pm}(0)
\end{pmatrix}+\begin{pmatrix}
\mathcal O(h^{2/3}e^{\pm A_2})\\ \mathcal O(h^{-1/3}e^{\pm A_2})\\ \mathcal O(h^{2/3}e^{\pm A_2})\\ \mathcal O(h^{-1/3}e^{\pm A_2})
\end{pmatrix}.
\end{split}
\end{equation}
The Wronskian $\mathcal W[w_{1,c}^+,w_{1,c}^-,w_{2,c}^+,w_{2,c}^-]$ is written as the determinant
$$\mathcal W[w_{1,c}^+,w_{1,c}^-,w_{2,c}^+,w_{2,c}^-]=\mathrm{det}(\mathbf{w}_{1,c}^+(0),\mathbf{w}_{1,c}^-(0),\mathbf{w}_{2,c}^+(0),\mathbf{w}_{2,c}^-(0)).$$

We calculate the determinant using the multilinearity with respect to columns. We regard each column in the determinant as the sum of vectors whose upper or lower two elements are $0$, and expand the determinant into determinants whose columns are such vectors. Then the order of the upper and lower two elements in the remainder terms in \eqref{myeq4.2} are those of the upper or lower two elements of the leading terms multiplied by $h^{1/2}$.
Thus by \eqref{myeq2.0.1} and the definitions of $v_{j,b}^{\pm}$ and $v_{j,c}^{\pm}$, we have
\begin{align}
\mathcal W[w_{1,c}^+,w_{1,c}^-,w_{2,c}^+,w_{2,c}^-]&=\mathcal W[v_{1,c}^+,v_{1,c}^-]\mathcal W[v_{2,c}^+,v_{2,c}^-]+\mathcal O(h^{-5/6})\label{myeq4.1.1}\\
&=\frac{4}{\pi^2h^{4/3}}+\mathcal O(h^{-5/6}),\notag \\
\mathcal W[w_{1,c}^+,w_{1,b}^+,w_{2,c}^+,w_{2,c}^-]&=\mathcal W[v_{1,c}^+,v_{1,b}^+]\mathcal W[v_{2,c}^+,v_{2,c}^-]+\mathcal O(h^{-5/6}e^{A_1+B_1})\label{myeq4.1.2}\\
&=\frac{4}{\pi^2h^{4/3}}e^{A_1+B_1}+\mathcal O(h^{-5/6}e^{A_1+B_1}).\notag
\end{align}
By \eqref{myeq4.0.9}, \eqref{myeq4.1.1} and \eqref{myeq4.1.2} we obtain \eqref{myeq4.0.1}. By the similar calculation we obtain \eqref{myeq4.0.2}.

Next we study $t_{23}$ and $t_{32}$. For the calculation of higher order terms we notice
\begin{equation}
\begin{split}
&\mathbf{w}_{1,b}^-(0)=\begin{pmatrix}
v_{1,b}^-(0)\\ \partial v_{1,b}^-(0)\\ -hK_{2,b}W^*v_{1,b}^-(0)\\-h\partial (K_{2,b}W^*v_{1,b}^-)(0)
\end{pmatrix}+\begin{pmatrix}
\mathcal O(h^{2/3}e^{-A_1})\\ \mathcal O(h^{-1/3}e^{-A_1})\\ \mathcal O(h^{7/6}e^{-A_1})\\ \mathcal O(h^{1/6}e^{-A_1})
\end{pmatrix},\\
&\mathbf{w}_{2,b}^+(0)=\begin{pmatrix}
 -hK_{1,b}'Wv_{2,b}^+(0)\\-h\partial (K_{1,b}'Wv_{2,b}^+)(0)\\ v_{2,b}^+(0)\\ \partial v_{2,b}^+(0)
\end{pmatrix}+\begin{pmatrix}
\mathcal O(h^{7/6}e^{B_2})\\ \mathcal O(h^{1/6}e^{B_2})\\ \mathcal O(h^{2/3}e^{B_2})\\ \mathcal O(h^{-1/3}e^{B_2})\\ 
\end{pmatrix},\\
&\mathbf{w}_{1,c}^+(0)=\begin{pmatrix}
v_{1,c}^+(0)\\ \partial v_{1,c}^+(0)\\ -hK_{2,c}'W^*v_{1,c}^+(0)\\-h\partial (K_{2,c}'W^*v_{1,c}^+)(0)
\end{pmatrix}+\begin{pmatrix}
\mathcal O(h^{2/3}e^{B_1})\\ \mathcal O(h^{-1/3}e^{B_1})\\ \mathcal O(h^{7/6}e^{B_1})\\ \mathcal O(h^{1/6}e^{B_1})
\end{pmatrix},\\
&\mathbf{w}_{2,c}^-(0)=\begin{pmatrix}
 -hK_{1,c}Wv_{2,c}^-(0)\\-h\partial (K_{1,c}Wv_{2,c}^-)(0)\\ v_{2,c}^-(0)\\ \partial v_{2,c}^-(0)
\end{pmatrix}+\begin{pmatrix}
\mathcal O(h^{7/6}e^{-A_2})\\ \mathcal O(h^{1/6}e^{-A_2})\\ \mathcal O(h^{2/3}e^{-A_2})\\ \mathcal O(h^{-1/3}e^{-A_2})\\ 
\end{pmatrix}.\\
\end{split}
\end{equation}
Since $\mathcal W[v_{1,c}^+,v_{1,b}^-]=0$, we have
\begin{equation}\label{myeq4.3}
\begin{split}
\mathcal W[w_{1,c}^+,w_{1,c}^-,w_{1,b}^-,w_{2,c}^-]&=\mathcal W[v_{1,c}^+,v_{1,c}^-]\mathcal W[-hK_{2,b}W^*v_{1,b}^-,v_{2,c}^-]\\
&\quad+\mathcal W[v_{1,c}^-,v_{1,b}^-]\mathcal W[-hK_{2,c}'W^*v_{1,c}^+,v_{2,c}^-]\\
&\quad+\mathcal O(h^{-1/3}e^{-A_1-A_2})\\
&=-h^{-1}\mathcal W[v_{1,c}^+,v_{1,c}^-]e^{-A_2-B_2}\int_b^0v_{2,b}^-(t)W^*v_{1,b}^-(t)dt\\
&\quad +h^{-1}\mathcal W[v_{1,c}^-,v_{1,b}^-]\int_0^cv_{2,c}^-(t)W^*v_{1,c}^+(t)dt\\
&\quad +\mathcal O(h^{-1/3}e^{-A_1-A_2})\\
&=h^{-1}\mathcal W[v_{1,c}^-,v_{1,c}^+]\int_b^cu_{2,L}^-(t)W^*u_{1,R}^-(t)dt\\
&\quad+\mathcal O(h^{-1/3}e^{-A_1-A_2}).
\end{split}
\end{equation}
As for the derivative of $u_{1,R}^-$, for $x$ near $0$ we have
$$\partial u_{1,R}^-(x)=(1+\mathcal O(h))\frac{1}{h^{5/6}\sqrt{\pi}}(V_1(x)-E)^{3/4}e^{-\int_b^x\sqrt{V_1(t)-E}dt/h}.$$
We apply the stationary phase theorem to $\int_b^cu_{2,L}^-(t)W^*u_{1,R}^-(t)dt$. Estimating the derivatives of  $u_{2,L}^-(t)W^*u_{1,R}^-(t)e^{\int_b^t\sqrt{V_1(r)-E}dr/h+\int_t^c\sqrt{V_2(r)-E}dr/h}$ near $0$ by Cauchy's integral formula, we have
\begin{equation}\label{myeq4.4}
\begin{split}
\int_b^cu_{2,L}^-(t)W^*u_{1,R}^-(t)dt&=\frac{2h^{5/6}}{\sqrt{\pi}}(V_1(0)-E)^{-1/4}(V_1'(0)-V_2'(0))^{-1/2}\\
&\quad \cdot(r_0(0)+r_1(0)\sqrt{V_1(0)-E})e^{-A_1-A_2}\\
&\quad+\mathcal O(h^{4/3}e^{-A_1-A_2}).
\end{split}
\end{equation}
Here we used $V_1(0)=V_2(0)$. From \eqref{myeq4.3}, \eqref{myeq4.4} and that
$$t_{23}=\frac{\mathcal W[w_{1,c}^+,w_{1,c}^-,w_{1,b}^-,w_{2,c}^-]}{\mathcal W[w_{1,c}^+,w_{1,c}^-,w_{2,c}^+,w_{2,c}^-]},$$
\eqref{myeq4.0.3} follows. In the similar way we can obtain \eqref{myeq4.0.4}.

As for $t_{41}$, since by Lemmas \ref{at0b} and \ref{at0c} upper two elements of $\mathbf{w}_{2,b}^-$, $\mathbf{w}_{1,c}^-$ and $\mathbf{w}_{2,c}^{\pm}$ can be written as
$$C\begin{pmatrix}
v_{1,c}^-\\ \partial v_{1,c}^-
\end{pmatrix},$$
we have $t_{41}=0$. As for $t_{14}$, since we can see by Lemmas \ref{at0b} and \ref{at0c} that we need to choose remainder terms from at least two columns of the determinant $\mathrm{det}(\mathbf{w}_{1,c}^+,\mathbf{w}_{1,c}^-,\mathbf{w}_{2,c}^+,\mathbf{w}_{1,b}^+)$, we obtain $t_{14}=\mathcal O(he^{A_1+A_2})$. The estimates for the other terms can be obtained by the similar way as above.
\end{proof}

To make some of the elements of the transition matrix $0$, we change the basis on $I_b$ and $I_c$.
\begin{lem}\label{connection2}
There exist complex numbers $a_j,b_j,c_j,d_j,\ j=1,2$ such that if we set,
\begin{align*}
&\tilde w_{1,b}^+=w_{1,b}^+,\ \tilde w_{2,b}^+=w_{2,b}^+,\\
&\tilde w_{1,b}^-=w_{1,b}^--\tilde a_1w_{1,b}^+-\tilde a_2w_{2,b}^+,\\
&\tilde w_{2,b}^-=w_{2,b}^--\tilde b_1w_{1,b}^+-\tilde b_2w_{2,b}^+,\\
&\tilde w_{1,c}^+=w_{1,c}^+,\ \tilde w_{2,c}^+=w_{2,c}^+,\\
&\tilde w_{1,c}^-=w_{1,c}^-+\tilde c_1w_{1,c}^++\tilde c_2w_{2,c}^+,\\
&\tilde w_{2,c}^-=w_{2,c}^-+\tilde d_1w_{1,c}^++\tilde d_2w_{2,c}^+,
\end{align*}
the matrix $\tilde T$ defined by
$$\begin{pmatrix}
\tilde w_{1,b}^+\\
\tilde w_{1,b}^-\\
\tilde w_{2,b}^+\\
\tilde w_{2,b}^-
\end{pmatrix}
=\tilde T\begin{pmatrix}
\tilde w_{1,c}^+\\
\tilde w_{1,c}^-\\
\tilde w_{2,c}^+\\
\tilde w_{2,c}^-
\end{pmatrix},$$
has the following form;
$$\tilde T=\begin{pmatrix}
0&\tilde t_{12}&0&\tilde t_{14}\\
\tilde t_{21}&0&\tilde t_{23}&0\\
0&\tilde t_{32}&0&\tilde t_{34}\\
\tilde t_{41}&0&\tilde t_{43}&0\\
\end{pmatrix}.$$
with $\tilde t_{23}$ and $\tilde t_{32}$ having the same asymptotics as $t_{23}$ and $t_{32}$ respectively. Moreover, we have the following estimates.
\begin{align*}
&\tilde a_1=\mathcal O(h^{1/2}e^{-2A_1}),\ \tilde a_2=\mathcal O(h^{1/2}e^{-A_1-B_2}),\\
&\tilde b_1=\mathcal O(h^{1/2}e^{-A_1-B_2}),\ \tilde b_2=\mathcal O(h^{1/2}e^{-2B_2}),\\
&\tilde c_1=\mathcal O(h^{1/2}e^{-2B_1}),\ \tilde c_2=\mathcal O(h^{1/2}e^{-B_1-A_2}),\\
&\tilde d_1=\mathcal O(h^{1/2}e^{-B_1-A_2}),\ \tilde d_2=\mathcal O(h^{1/2}e^{-2A_2}).
\end{align*}
\end{lem}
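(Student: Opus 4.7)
The plan is to encode the basis change as a matrix equation $\tilde T=A_bTA_c^{-1}$, where $A_b$ and $A_c$ are $4\times 4$ matrices whose rows $1, 3$ coincide with those of the identity and whose rows $2, 4$ carry the parameters $-\tilde a_j, -\tilde b_j$ and $\tilde c_j, \tilde d_j$, respectively. Because of this form, the entries of $\tilde T$ in rows $1$ and $3$ depend only on the $c$-side parameters $(\tilde c_j, \tilde d_j)$, while the entries in rows $2, 4$ depend, once the $c$-side is fixed, only on $(\tilde a_j, \tilde b_j)$. This block structure decouples the eight zero conditions into four independent $2\times 2$ linear systems.

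Explicitly, the conditions $\tilde t_{11}=\tilde t_{31}=0$ and $\tilde t_{13}=\tilde t_{33}=0$ give
$$\begin{pmatrix}t_{12}&t_{14}\\t_{32}&t_{34}\end{pmatrix}\begin{pmatrix}\tilde c_1\\\tilde d_1\end{pmatrix}=\begin{pmatrix}t_{11}\\t_{31}\end{pmatrix},\quad \begin{pmatrix}t_{12}&t_{14}\\t_{32}&t_{34}\end{pmatrix}\begin{pmatrix}\tilde c_2\\\tilde d_2\end{pmatrix}=\begin{pmatrix}t_{13}\\t_{33}\end{pmatrix},$$
while $\tilde t_{22}=\tilde t_{24}=\tilde t_{42}=\tilde t_{44}=0$ gives two analogous systems with coefficient matrix equal to the transpose of the above and right-hand sides $(t_{22},t_{24})^T$ and $(t_{42},t_{44})^T$. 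All four systems share the determinant $t_{12}t_{34}-t_{14}t_{32}$. From Lemma \ref{connection}, $t_{12}t_{34}=(1+\mathcal O(h^{1/2}))e^{A_1+B_1+A_2+B_2}$ dominates $t_{14}t_{32}=\mathcal O(h^{3/2})e^{A_1+B_1+A_2+B_2}$, so the determinant is nonzero for small $h$ and Cramer's rule applies uniformly.

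Substituting the orders of the $t_{ij}$ from Lemma \ref{connection} into the Cramer formulas is then bookkeeping with the exponents $\pm A_j, \pm B_j$ and produces the claimed sizes for $\tilde a_j, \tilde b_j, \tilde c_j, \tilde d_j$ (for instance, $\tilde c_1=(t_{11}t_{34}-t_{14}t_{31})/\det=\mathcal O(h^{1/2}e^{-2B_1})$, and the other seven follow identically). To conclude I would verify the asymptotics of $\tilde t_{23}$ and $\tilde t_{32}$: since row $3$ of $A_b$ and column $2$ of $A_c^{-1}$ agree with the corresponding rows/columns of the identity, $\tilde t_{32}=t_{32}$ verbatim; and for $\tilde t_{23}$, direct expansion together with the defining equations for $\tilde a_1, \tilde a_2$ (which cancel the $t_{22}$ and $t_{24}$ contributions) yields $\tilde t_{23}=t_{23}-\tilde a_1 t_{13}-\tilde a_2 t_{33}$, where each correction is $\mathcal O(h\,e^{-A_1-A_2})$ and therefore absorbed into the remainder of \eqref{myeq4.0.3}. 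The only substantive step is the invertibility of the $2\times 2$ coefficient matrix; it hinges on $t_{14}$ carrying an extra factor of $h$ (not merely $h^{1/2}$), which is precisely what Lemma \ref{connection} provides, so I do not expect any further obstacle.
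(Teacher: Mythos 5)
Your proof is correct and follows essentially the same route as the paper: both define the eight coefficients through two $2\times 2$ linear systems with the common coefficient matrix $\begin{pmatrix}t_{12}&t_{14}\\t_{32}&t_{34}\end{pmatrix}$ (respectively its transpose), invert it via Cramer's rule using the dominance $t_{12}t_{34}\sim e^{A_1+B_1+A_2+B_2}\gg t_{14}t_{32}$, and read off the orders of $\tilde a_j,\tilde b_j,\tilde c_j,\tilde d_j$ from Lemma \ref{connection}; you also supply the short verification that $\tilde t_{32}=t_{32}$ and $\tilde t_{23}=t_{23}-\tilde a_1 t_{13}-\tilde a_2 t_{33}=t_{23}+\mathcal O(he^{-A_1-A_2})$, a point the paper leaves implicit. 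One small correction: invertibility does not genuinely hinge on the extra power of $h$ in $t_{14}$ --- even $t_{14}=\mathcal O(h^{1/2}e^{A_1+A_2})$ would give $t_{14}t_{32}=\mathcal O(h\,e^{A_1+B_1+A_2+B_2})\ll t_{12}t_{34}$ --- so the determinant is nonvanishing for a less delicate reason than you suggest.
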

\begin{proof}
We define $\tilde a_j$, $\tilde b_j$, $\tilde c_j$ and $\tilde d_j$ by
$$\begin{pmatrix}
\tilde a_1&\tilde a_2\\
\tilde b_1&\tilde b_2
\end{pmatrix}
\begin{pmatrix}
t_{12}&t_{14}\\
t_{32}&t_{34}
\end{pmatrix}=
\begin{pmatrix}
t_{22}&t_{24}\\
t_{42}&t_{44}
\end{pmatrix}.$$
$$\begin{pmatrix}
t_{12}&t_{14}\\
t_{32}&t_{34}
\end{pmatrix}
\begin{pmatrix}
\tilde c_1&\tilde c_2\\
\tilde d_1&\tilde d_2
\end{pmatrix}=
\begin{pmatrix}
t_{11}&t_{13}\\
t_{31}&t_{33}
\end{pmatrix}.$$
Then, it is easy to see that $\tilde T$ has the form as in the lemma, and by Lemma \ref{connection} we have
\begin{align*}
\begin{pmatrix}
\tilde a_1&\tilde a_2\\
\tilde b_1&\tilde b_2
\end{pmatrix}&=(t_{12}t_{34}-t_{14}t_{32})^{-1}
\begin{pmatrix}
t_{22}&t_{24}\\
t_{42}&t_{44}
\end{pmatrix}
\begin{pmatrix}
t_{34}&-t_{14}\\
-t_{32}&t_{12}
\end{pmatrix}\\
&=\begin{pmatrix}
\mathcal O(h^{1/2}e^{-2A_1})&\mathcal O(h^{1/2}e^{-A_1-B_2})\\
\mathcal O(h^{1/2}e^{-A_1-B_2})&\mathcal O(h^{1/2}e^{-2B_2})
\end{pmatrix},
\end{align*}
\begin{align*}
\begin{pmatrix}
\tilde c_1&\tilde c_2\\
\tilde d_1&\tilde d_2
\end{pmatrix}&=(t_{12}t_{34}-t_{14}t_{32})^{-1}
\begin{pmatrix}
t_{34}&-t_{14}\\
-t_{32}&t_{12}
\end{pmatrix}
\begin{pmatrix}
t_{11}&t_{13}\\
t_{31}&t_{33}
\end{pmatrix}\\
&=\begin{pmatrix}
\mathcal O(h^{1/2}e^{-2B_1})&\mathcal O(h^{1/2}e^{-B_1-A_2})\\
\mathcal O(h^{1/2}e^{-B_1-A_2})&\mathcal O(h^{1/2}e^{-2A_2})
\end{pmatrix}.
\end{align*}
\end{proof}

Note that $\tilde w_{j,b}^{\pm}$ and $\tilde w_{j,c}^{\pm}$ have the same asymptotics as $w_{j,b}^{\pm}$ and $w_{j,c}^{\pm}$ with respect to $h$ respectively. We next consider the connection of solutions at $b$ and $c$.

\begin{lem}\label{connection3}
Set $a_j^{\pm},b_j^{\pm},c_j^{\pm},d_j^{\pm}$ as follows:
\begin{equation}\label{myeq4.8}
\begin{split}
&w_{1,L}=D_L^{-1}(a_1^+\tilde w_{1,b}^++a_1^-\tilde w_{1,b}^-+a_2^+\tilde w_{2,b}^++a_2^-\tilde w_{2,b}^-),\\
&w_{2,L}=D_L^{-1}(b_1^+\tilde w_{1,b}^++b_1^-\tilde w_{1,b}^-+b_2^+\tilde w_{2,b}^++b_2^-\tilde w_{2,b}^-),\\
&w_{1,R}=D_R^{-1}(c_1^+\tilde w_{1,c}^++c_1^-\tilde w_{1,c}^-+c_2^+\tilde w_{2,c}^++c_2^-\tilde w_{2,c}^-),\\
&w_{2,R}=D_R^{-1}(d_1^+\tilde w_{1,c}^++d_1^-\tilde w_{1,c}^-+d_2^+\tilde w_{2,c}^++d_2^-\tilde w_{2,c}^-),
\end{split}
\end{equation}
where $D_L$ and $D_R$ are Wronskians 
\begin{align*}&D_L=\mathcal W[\tilde w_{1,b}^+,\tilde w_{1,b}^-,\tilde w_{2,b}^+,\tilde w_{2,b}^-],\\
&D_R=\mathcal W[\tilde w_{1,c}^+,\tilde w_{1,c}^-,\tilde w_{2,c}^+,\tilde w_{2,c}^-].
\end{align*}
Then we have
\begin{align}
&a_1^+=\frac{8}{\pi^2 h^{4/3}}\cos \frac{\mathcal A(E)}{h}+\mathcal O(h^{-5/6}),\label{myeq4.9}\\
&a_1^-=\frac{4}{\pi^2 h^{4/3}}\sin \frac{\mathcal A(E)}{h}+\mathcal O(h^{-5/6}),\label{myeq4.9.1}\\
&a_2^{\pm}=\mathcal O(h^{-1/2})\label{myeq4.9.2}\\
&b_1^{\pm}=\mathcal O(h^{-1/2})\label{myeq4.9.3}\\
&b_2^+=\frac{4}{\pi^2 h^{4/3}}+\mathcal O(h^{-5/6}),\label{myeq4.9.5}\\
&b_2^-=\mathcal O(h^{-1/2}),\label{myeq4.9.6}\\
&c_1^+=\frac{4}{\pi^2 h^{4/3}}+\mathcal O(h^{-5/6}),\label{myeq4.9.7}\\
&c_1^-=\mathcal O(h^{-1/2}),\label{myeq4.9.8}\\
&c_2^{\pm}=\mathcal O(h^{-1/2}),\label{myeq4.9.9}\\
&d_1^{\pm}=\mathcal O(h^{-1/2}),\label{myeq4.9.10}\\
&d_2^+=-\frac{4e^{i\frac{\pi}{4}}i}{\pi^2 h^{4/3}}+\mathcal O(h^{-5/6}),\label{myeq4.9.11}\\
&d_2^-=\frac{2e^{i\frac{\pi}{4}}}{\pi^2 h^{4/3}}+\mathcal O(h^{-5/6}).\label{myeq4.9.12}
\end{align}
\end{lem}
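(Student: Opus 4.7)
The plan is to apply Cramer's rule to the linear system \eqref{myeq4.8}, viewing the four vector-valued solutions $\tilde w_{1,b}^+,\tilde w_{1,b}^-,\tilde w_{2,b}^+,\tilde w_{2,b}^-$ as the columns of the $4\times 4$ Wronskian matrix in the $\mathbf w(x)$ notation introduced in Lemma \ref{connection}. Each coefficient then appears as a $4\times 4$ Wronskian determinant with one column replaced by $w_{1,L}$ or $w_{2,L}$; for instance $a_1^+=\mathcal W[w_{1,L},\tilde w_{1,b}^-,\tilde w_{2,b}^+,\tilde w_{2,b}^-]$. Since the Wronskian is independent of $x$, I evaluate at $x=b$ for $a_j^\pm,b_j^\pm$ and at $x=c$ for $c_j^\pm,d_j^\pm$.

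The next step is to compute each column at the matching endpoint. For $w_{1,L}(b)$ the Neumann series $\sum_j M_L^j u_{1,b}^-$ evaluated at $b$ collapses, by \eqref{myeq2.13.1}, to $u_{1,b}^-(b)$ plus a $\mathcal O(h^{2/3})$-multiple of $u_{1,b}^+(b)$, while the second component is a $\mathcal O(h^{5/6})$-multiple of $u_{2,b}^+(b)$ by \eqref{myeq2.12.1}. Proposition \ref{LRC} and the identifications $u_{1,b}^\pm=u_{1,L}^\pm$, $v_{1,b}^\pm=u_{1,R}^\pm$ then rewrite $u_{1,b}^-=\sin(\mathcal A(E)/h)\,v_{1,b}^-+2\cos(\mathcal A(E)/h)\,v_{1,b}^++\mathcal O(h)$. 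For $w_{2,L}(b)$ one uses \eqref{myeq2.12}, \eqref{myeq2.15.0}, and the identity $u_{2,b}^-=v_{2,b}^+$ that follows directly from the definitions. The columns $\tilde w_{j,b}^\pm(b)$ are computed from their defining Neumann series at $b$: by \eqref{myeq2.6}, \eqref{myeq2.7}, \eqref{myeq2.8}, \eqref{myeq2.10} each iteration of $hK_{1,b}W$, $hK_{2,b}W^*$, or $hK_{1,b}''W$ at $b$ produces a multiple of $v_{1,b}^+(b)$, $v_{2,b}^+(b)$, or a $v_{1,b}^\pm(b)$-combination respectively, whereas by \eqref{myeq2.9} every iteration of $hK_{1,b}'W$ at $b$ is exactly zero, so the first two components of $\tilde w_{2,b}^+(b)$ vanish identically. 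Combining with the small shifts from Lemma \ref{connection2} expresses each $\tilde w_{j,b}^\pm(b)$ as a leading $v_{j,b}^\pm(b)$ term supported in rows $1$--$2$ or rows $3$--$4$, plus off-block corrections of relative order $h^{5/6}$.

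For the leading coefficients $a_1^\pm$, $b_2^+$, $c_1^+$, $d_2^\pm$ the $4\times 4$ determinant becomes block-diagonal to leading order, and each $2\times 2$ block is a Wronskian of $v_{1,b}^\pm$ or $v_{2,b}^\pm$ (resp.\ $v_{j,c}^\pm$) equal to $\pm\frac{2}{\pi h^{2/3}}(1+\mathcal O(h))$ by \eqref{myeq2.0.1}. Multiplying the two blocks and inserting the trigonometric factors from Proposition \ref{LRC} on the $b$-side, and the phase factors from the identity $u_{2,R}^\pm=e^{\mp i\pi/4}(\tfrac12 a_2^-u_{2,L}^-\pm i a_2^+u_{2,L}^+)$ combined with $v_{2,c}^\pm=u_{2,L}^\pm$ on the $c$-side, yields the stated leading orders. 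For the remaining coefficients $a_2^\pm$, $b_1^\pm$, $c_2^\pm$, $d_1^\pm$, the leading block structure produces three columns sharing a common two-dimensional support, so the leading-order determinant vanishes; the first nonzero contribution comes from pairing one off-block entry (of relative order $h^{5/6}$) with the complementary in-block entries, giving a total order $h^{5/6}\cdot h^{-2/3}\cdot h^{-2/3}=h^{-1/2}$, as claimed.

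The main obstacle will be the bookkeeping in this last step: one must enumerate the ways an off-block entry can be drawn (from $w_{1,L}$ or $w_{2,L}$, from $w_{j,b}^\pm$ via \eqref{myeq2.7}, \eqref{myeq2.8}, \eqref{myeq2.15.0}, or from the Lemma \ref{connection2} shifts between $w$'s and $\tilde w$'s), check that no pairing of two or more off-block entries across different columns exceeds $\mathcal O(h^{-1/2})$, and confirm that higher iterates in the Neumann series contribute to the $\mathcal O(h^{-5/6})$ remainders of the leading coefficients only. The right-side analysis on $I_R^\theta$ is completely parallel, carried out at $x=c$ using Lemmas \ref{fundamental2} and \ref{fundamental4} in place of Lemmas \ref{fundamental} and \ref{fundamental3}.
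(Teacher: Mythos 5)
Your proposal takes essentially the same approach as the paper's proof: Cramer's rule expresses each coefficient as a $4\times 4$ Wronskian, evaluated at $b$ (resp.\ $c$) in the $\mathbf w$ notation and expanded by multilinearity of the determinant into block-diagonal products of $2\times 2$ Wronskians, with the endpoint values supplied by Lemmas \ref{fundamental}--\ref{fundamental4} and \ref{connection2}, Proposition \ref{LRC} furnishing the trigonometric factors, and the off-block corrections accounting for the subleading $\mathcal O(h^{-1/2})$ coefficients and the $\mathcal O(h^{-5/6})$ remainders.
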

\begin{proof}
We start with $a_1^+$. By the Cramer's formula we have
$$a_1^+=\mathcal W[w_{1,L},\tilde w_{1,b}^-,\tilde w_{2,b}^+,\tilde w_{2,b}^-].$$
By Lemma \ref{connection2} there exist $\alpha>0$ such that
$$\mathcal W[w_{1,L},\tilde w_{1,b}^-,\tilde w_{2,b}^+,\tilde w_{2,b}^-]=\mathcal W[w_{1,L},w_{1,b}^-,w_{2,b}^+,w_{2,b}^-]+\mathcal O(e^{-\alpha/h}).$$
We use the notation $\mathbf{w}$ as in the proof of Lemma \ref{connection}. The Wronskian in the right-hand side is written as the determinant;
$$\mathcal W[w_{1,L},w_{1,b}^-,w_{2,b}^+,w_{2,b}^-]=\mathrm{det}(\mathbf{w}_{1,L}(b),\mathbf{w}_{1,b}^-(b),\mathbf{w}_{2,b}^+(b),\mathbf{w}_{2,b}^-(b)).$$
By Lemma \ref{fundamental3} we have
\begin{equation}\label{myeq4.10}
\mathbf{w}_{1,L}(b)=\begin{pmatrix}
u_{1,b}^-(b)\\
\partial u_{1,b}^-(b)\\
0\\
0
\end{pmatrix}+\begin{pmatrix}
\mathcal O(h^{2/3})\\
\mathcal O(1)\\
\mathcal O(h)\\
\mathcal O(1)
\end{pmatrix}
.
\end{equation}
By Lemma \ref{fundamental} we also have
\begin{equation}\label{myeq4.11}
\begin{split}
&\mathbf{w}_{1,b}^-(b)=\begin{pmatrix}
v_{1,b}^-(b)\\
\partial v_{1,b}^-(b)\\
0\\
0
\end{pmatrix}+\begin{pmatrix}
\mathcal O(h^{1/2})\\
\mathcal O(h^{-1/6})\\
\mathcal O(h)\\
\mathcal O(1)
\end{pmatrix},\\
&\mathbf{w}_{2,b}^+(b)=\begin{pmatrix}
0\\
0\\
v_{2,b}^+(b)\\
\partial v_{2,b}^+(b)
\end{pmatrix}+\begin{pmatrix}
0\\
0\\
\mathcal O(h^{2/3})\\
\mathcal O(h^{-1/3})
\end{pmatrix},\\
&\mathbf{w}_{2,b}^-(b)=\begin{pmatrix}
0\\
0\\
v_{2,b}^-(b)\\
\partial v_{2,b}^-(b)
\end{pmatrix}+\begin{pmatrix}
\mathcal O(h^{5/6})\\
\mathcal O(h^{1/6})\\
\mathcal O(h^{2/3})\\
\mathcal O(h^{-1/3})
\end{pmatrix}.
\end{split}
\end{equation}

We calculate the determinant using the multilinearity with respect to columns as in the proof of Lemma \ref{connection}. Then the order of the upper and lower two elements in the remainder terms in \eqref{myeq4.10} and \eqref{myeq4.11} are those of the leading terms multiplied by $h^{1/2}$. Thus we obtain
$$\mathcal W[w_{1,L},w_{1,b}^-,w_{2,b}^+,w_{2,b}^-]=\mathcal W[u_{1,b}^-,v_{1,b}^-]\mathcal W[v_{2,b}^+,v_{2,b}^-]+\mathcal O(h^{-5/6}).$$
From the definition we have $u_{1,b}^-=u_{1,L}^-$, $v_{1,b}^-=u_{1,R}^-$. Hence by Proposition \ref{LRC} we obtain
$$\mathcal W[u_{1,b}^-,v_{1,b}^-]=b_-\mathcal W[u_{1,R}^+,u_{1,R}^-]=-\frac{4}{\pi h^{2/3}}\cos\frac{\mathcal A(E)}{h}+\mathcal O(h^{1/3}).$$
Since $v_{2,b}^+=e^{S_2/h}u_{2,L}^-$ and $v_{2,b}^-=e^{-S_2/h}u_{2,L}^+$, we have
$$\mathcal W[v_{2,b}^+,v_{2,b}^-]=\mathcal W[u_{2,L}^-,u_{2,L}^+]=-\frac{2}{\pi h^{2/3}}(1+\mathcal O (h))$$
which completes the proof of \eqref{myeq4.9}. The proof of \eqref{myeq4.9.1} is similar. 

The estimates \eqref{myeq4.9.2} and \eqref{myeq4.9.3} follow from the calculation of the determinant as above, Lemma \ref{fundamental} and Lemma \ref{fundamental3}. We can prove \eqref{myeq4.9.5} by the similar calculation as in the proof of \eqref{myeq4.9}. The estimate \eqref{myeq4.9.6} is obtained using $\mathcal W[v_{2,b}^+,u_{2,b}^-]=0$. The estimates \eqref{myeq4.9.7}-\eqref{myeq4.9.12} are obtained in the same way as \eqref{myeq4.9}-\eqref{myeq4.9.6}.
\end{proof}

\section{Quantisation condition}\label{fifthsec}
\begin{pro}
$E\in \mathcal D_I$ is a resonance of $P$ if and only if,
\begin{equation}\label{myeq4.15}
\cos \frac{\mathcal A(E)}{h}+f(E,h)=hF(E,h).
\end{equation}
where $f(E,h)$ and $F(E,h)$ are analytic for $E\in \mathcal D_I$, $f(E,h)$ is real for real $E$ and
\begin{align*}
f(E,h)=&\mathcal O(h^{1/2}),\\
F(E,h)=&-\frac{\pi}{4i}\left(\sin \frac{\mathcal A(E)}{h}\right)e^{-2A_1-2A_2}(V_1(0)-E)^{-1/2}\\
&\cdot(V_1'(0)-V_2'(0))^{-1}(r_0(0)+r_1(0)\sqrt{V_1(0)-E})^2+\mathcal O(h^{1/2}e^{-2A_1-2A_2}).
\end{align*}
\end{pro}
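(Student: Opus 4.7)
The strategy is to characterize $E\in\mathcal D_I$ being a resonance as the vanishing of a $4\times 4$ determinant built from the $L^2$-solutions $w_{j,L}, w_{j,R}$ ($j=1,2$), and then to extract the quantization formula by splitting the resulting equation into leading, subleading-polynomial, and exponentially small parts. Concretely, $E$ is a resonance iff there exists $(\alpha,\beta,\gamma,\delta)\neq 0$ such that $\alpha w_{1,L}+\beta w_{2,L}=\gamma w_{1,R}+\delta w_{2,R}$ as solutions of the system. Expanding both sides in the common basis $(\tilde w_{1,c}^+,\tilde w_{2,c}^+,\tilde w_{1,c}^-,\tilde w_{2,c}^-)$ — using Lemma \ref{connection3} for $w_{j,R}$ directly, and the same lemma composed with the transition matrix $\tilde T$ of Lemma \ref{connection2} for $w_{j,L}$ — yields a homogeneous linear system, whose determinant $\Delta(E,h)$ must vanish.

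The next step is to identify the leading term of $\Delta$. Thanks to the special (checkerboard) sparsity of $\tilde T$, the expansion of each $L$-column in the chosen basis involves only the blocks $(\tilde t_{12},\tilde t_{14},\tilde t_{32},\tilde t_{34})$ in the $\tilde w_{j,c}^-$-rows and only $(\tilde t_{21},\tilde t_{23},\tilde t_{41},\tilde t_{43})$ in the $\tilde w_{j,c}^+$-rows. The dominant contribution to $\Delta$ then comes from the permutation pairing $\tilde w_{1,c}^-$ with $L_1$ via $a_1^+\tilde t_{12}$, $\tilde w_{2,c}^-$ with $L_2$ via $b_2^+\tilde t_{34}$, $\tilde w_{1,c}^+$ with $R_1$ via $c_1^+$, and $\tilde w_{2,c}^+$ with $R_2$ via $d_2^+$. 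Using \eqref{myeq4.9}, \eqref{myeq4.9.5}, \eqref{myeq4.9.7}, \eqref{myeq4.9.11} together with \eqref{myeq4.0.1}, \eqref{myeq4.0.2}, this term equals an explicit nonvanishing analytic prefactor times $h^{-16/3}e^{A_1+B_1+A_2+B_2}\cos(\mathcal A(E)/h)$, up to a relative error $O(h^{1/2})$. Dividing $\Delta=0$ by this prefactor produces an equation of the form $\cos(\mathcal A(E)/h)+\widetilde f(E,h)=0$ with $\widetilde f$ analytic on $\mathcal D_I$.

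I then split $\widetilde f=f-hF$ according to magnitude. All $O(h^{1/2})$ remainders in Lemmas \ref{connection2} and \ref{connection3} feed into $f(E,h)=O(h^{1/2})$. The exponentially small part $hF$ is read off from those permutations in $\Delta$ that simultaneously involve $\tilde t_{23}$ and $\tilde t_{32}$, since only such terms carry the ``mismatch'' exponent $e^{-A_1-A_2+B_1+B_2}$, i.e.\ a factor $e^{-2A_1-2A_2}$ relative to the leading $e^{A_1+B_1+A_2+B_2}$. Using Lemma \ref{connection},
\[
\tilde t_{23}\tilde t_{32}=-\pi h\,e^{-A_1-A_2+B_1+B_2}(V_1(0)-E)^{-1/2}(V_1'(0)-V_2'(0))^{-1}(r_0(0)+r_1(0)\sqrt{V_1(0)-E})^2+O(h^{3/2}e^{-A_1-A_2+B_1+B_2}),
\]
which, combined with the $\sin(\mathcal A(E)/h)$-contribution of $a_1^-$ from \eqref{myeq4.9.1}, the leading parts of $c_1^+, d_2^+$ from Lemma \ref{connection3}, and the sign of the relevant permutation, gives precisely the stated leading term of $F(E,h)$; the factor $1/(4i)$ originates from the coefficient $-4e^{i\pi/4}i/(\pi^2h^{4/3})$ in \eqref{myeq4.9.11}.

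Finally, analyticity of $f$ and $F$ on $\mathcal D_I$ follows because each of $a_i^\pm, b_i^\pm, c_i^\pm, d_i^\pm$ and $\tilde t_{ij}$, as well as the Wronskians $D_L, D_R$, depends analytically on $E$, and the normalizing prefactor is nonvanishing. Reality of $f$ for real $E$ comes from the observation that the building blocks contributing to $\cos(\mathcal A/h)+f$ are real combinations of real-valued objects (real potentials $V_j$, real $r_j$, real Airy-function normalizations on $I_L\cup I_b\cup I_c$), while the genuinely complex datum — the outgoing boundary condition — is confined to the $R$-columns through the complex-distorted coefficients of Lemma \ref{connection3}, and is entirely absorbed into $hF$. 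The main obstacle is the verification in the previous paragraph: showing that every competing permutation in the expansion of $\Delta$ (in particular those involving $\tilde t_{14},\tilde t_{41},\tilde t_{21},\tilde t_{43}$ alongside the various $a_i^\pm, b_i^\pm, c_i^\pm, d_i^\pm$) is either of strictly smaller exponential order or of the form $O(h^{1/2}e^{-2A_1-2A_2})$ relative to the leading $\cos$-term, so that the explicit asymptotic of $F$ is captured precisely by the $\tilde t_{23}\tilde t_{32}\cdot\sin(\mathcal A/h)$ combination. This is a bookkeeping exercise but requires systematically using all the size estimates from Lemmas \ref{connection} and \ref{connection3}.
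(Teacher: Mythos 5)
Your proposal is correct and follows essentially the same route as the paper: both characterize resonances by the vanishing of the Wronskian of $w_{1,L},w_{2,L},w_{1,R},w_{2,R}$, expand via the coefficients of Lemma \ref{connection3} and the checkerboard transition matrix $\tilde T$ of Lemma \ref{connection2}, identify the leading $\cos(\mathcal A/h)$ contribution from $a_1^+,b_2^+,c_1^+,d_2^+$ with $\tilde t_{12},\tilde t_{34}$, and extract the exponentially small $\sin(\mathcal A/h)$ part precisely from the $\tilde t_{23}\tilde t_{32}$ permutation combined with $a_1^-$. The only cosmetic difference is that you push everything into the single $c$-basis, whereas the paper organizes the expansion as a sum of mixed $b$/$c$ Wronskians before applying $\tilde T$, but the bookkeeping and the conclusions coincide.
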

\begin{proof}
As in \cite{FMW}, $E$ is a resonance if and only if $w_{1,L},w_{2,L},w_{1,R}$ and $w_{2,R}$ are linearly dependent, that is 
\begin{equation}\label{myeq4.11.1}
\mathcal W_0(E)=\mathcal W[w_{1,L},w_{2,L},w_{1,R},w_{2,R}]=0.
\end{equation}
We substitute the right-hand side of \eqref{myeq4.8} for $w_{j,L},w_{j,R}$ in \eqref{myeq4.11.1} and develop the Wronskian as a sum of terms of the form $C(h)\mathcal W[w_1,w_2,w_3,w_4]$ where $C(h)$ is a constant, $w_1,w_2$ and $w_3,w_4$ are chosen from $w_{j,b}^{\pm},\ (j=1,2)$ and $w_{j,c}^{\pm},\ (j=1,2)$ respectively. If only one of the $w_j,\ (j=1,\dots, 4)$ is chosen from $w_{j,b}^-$ or $w_{j,c}^-$, then by the form of $\tilde T$ in Lemma \ref{connection2} we can see that $\mathcal W[w_1,w_2,w_3,w_4]=0$.
Thus by Lemma \ref{connection3} we have
\begin{equation}\label{myeq4.12}
\begin{split}
D_L^{2}D_R^{2}\mathcal W_0(E)=&(a_1^+b_2^+-b_1^+a_2^+)(c_1^+d_2^+-d_1^+c_2^+)\mathcal W[\tilde w_{1,b}^+,\tilde w_{2,b}^+,\tilde w_{1,c}^+,\tilde w_{2,c}^+]\\
&+(a_1^-b_2^+-b_1^-a_2^+)(c_1^+d_2^--d_1^+c_2^-)\mathcal W[\tilde w_{1,b}^-,\tilde w_{2,b}^+,\tilde w_{1,c}^+,\tilde w_{2,c}^-]\\
&+\mathcal O(e^{\max\{A_1-B_1-A_2+B_2,\ -A_1+B_1+A_2-B_2\}}).
\end{split}
\end{equation}
By Lemma \ref{connection2} we can easily see that
\begin{align*}
\mathcal W[\tilde w_{1,b}^-,\tilde w_{2,b}^+,\tilde w_{1,c}^+,\tilde w_{2,c}^-]&=\tilde t_{23}\tilde t_{32}\mathcal W[\tilde w_{2,c}^+,\tilde w_{1,c}^-,\tilde w_{1,c}^+,\tilde w_{2,c}^-]\\
&=\tilde t_{23}\tilde t_{32}\mathcal W[w_{2,c}^+,w_{1,c}^-,w_{1,c}^+,w_{2,c}^-],
\end{align*}
and by Lemma \ref{connection3} we have
\begin{equation}\label{myeq4.13}
\begin{split}
&a_1^+b_2^+-b_1^+a_2^+=\frac{32}{\pi^4h^{8/3}}\left(\cos \frac{\mathcal A(E)}{h}+f(E,h)\right),\\
&c_1^+d_2^+-d_1^+c_2^+=-\frac{16e^{i\frac{\pi}{4}}i}{\pi^4h^{8/3}}(1+\mathcal O(h^{1/2})),\\
&a_1^-b_2^+-b_1^-a_2^+=\frac{16}{\pi^4h^{8/3}}\left(\sin \frac{\mathcal A(E)}{h}+\mathcal O(h^{1/2})\right),\\
&c_1^+d_2^--d_1^+c_2^-=\frac{8e^{i\frac{\pi}{4}}}{\pi^4h^{8/3}}(1+\mathcal O(h^{1/2})),\\
\end{split}
\end{equation}
where $\lvert f(E,h)\rvert=\mathcal O(h^{1/2})$ uniformly with respect to $E\in \mathcal D_I$. From the construction of $u_{j,L}^{\pm}$  and $u_{1,R}^{\pm}$ we can easily see that $f(E,h)$ is real for real $E$. 
We can also see by the similar calculation as in the proof of Lemma \ref{connection},
\begin{equation}\label{myeq4.14}
\begin{split}
&\mathcal W[\tilde w_{1,b}^+,\tilde w_{2,b}^+,\tilde w_{1,c}^+,\tilde w_{2,c}^+]=-\frac{4}{\pi^2h^{4/3}}e^{A_1+A_2+B_1+B_2}(1+\mathcal O(h^{1/2})),\\
&\mathcal W[w_{2,c}^+,w_{1,c}^-,w_{1,c}^+,w_{2,c}^-]=-\frac{4}{\pi^2h^{4/3}}(1+\mathcal O(h^{1/2})).
\end{split}
\end{equation}
By \eqref{myeq4.12}, \eqref{myeq4.13}, \eqref{myeq4.14} and asymptotics of $t_{23}$ and $t_{32}$, we have
\begin{align*}
D_L^{2}D_R^{2}\mathcal W_0(E)=&\frac{2^{11}e^{i\frac{\pi}{4}}i}{\pi^{10}h^{20/3}}\left(\cos \frac{\mathcal A(E)}{h}+f(E,h)\right)e^{A_1+A_2+B_1+B_2}(1+\mathcal O(h^{1/2}))\\
&+\frac{2^{9}e^{i\frac{\pi}{4}}}{\pi^{9}h^{17/3}}\left(\sin \frac{\mathcal A(E)}{h}\right)e^{-A_1-A_2+B_1+B_2}(V_1(0)-E)^{-1/2}\\
&\cdot(V_1'(0)-V_2'(0))^{-1}(r_0(0)+r_1(0)\sqrt{V_1(0)-E})^2\\
&+\mathcal O(h^{-31/6}e^{-A_1-A_2+B_1+B_2}).
\end{align*}
Therefore, $\mathcal W_0(E)=0$ is equivalent to
\begin{equation}
\begin{split}
\cos \frac{\mathcal A(E)}{h}+f(E,h)=&-\frac{h\pi}{4i}\left(\sin \frac{\mathcal A(E)}{h}\right)e^{-2A_1-2A_2}(V_1(0)-E)^{-1/2}\\
&\cdot(V_1'(0)-V_2'(0))^{-1}(r_0(0)+r_1(0)\sqrt{V_1(0)-E})^2\\
&+\mathcal O(h^{3/2}e^{-2A_1-2A_2}).
\end{split}
\end{equation}
\end{proof}

\section{Completion of the proof of Theorem \ref{main}}\label{sixthsec}
In order to solve \eqref{myeq4.15}, we first observe that the roots of $\cos (\mathcal A(E)/h)=0$ are given by $E=e_k(h)$ with,
$$e_k(h):=\mathcal A^{-1}\left((k+\frac{1}{2})\pi h\right)\in \mathbb R,\ k\in \mathbb Z.$$
For sufficiently small $d'>0$ we set $\tilde I=I+[-d',d']$. Then since there exist constants $m,M>0$ such that for $E\in \tilde I$ we have $m\leq\mathcal A'(E)\leq M$, the estimate $\lvert e_k(h)-e_l(h)\rvert\geq \pi h\lvert k-l\rvert/M$ holds.
Since $\mathcal A(E)$ is holomorphic in $\mathcal D_I$ (see, e.g., Fujii\'e-Ramond \cite{FR}), by the Cauchy's integral formula we have for fixed $C'>0$ and $z\in \mathbb C$ such that $\lvert z-e_k\rvert<hC'/2$ 
\begin{equation}\label{myeq4.15.1}
\begin{split}
\cos \frac{\mathcal A(z)}{h}=&-h^{-1}\mathcal A'(e_k)(z-e_k)\\
&+\frac{1}{2\pi i}\oint_{\lvert \zeta-e_k\rvert =C'h}\frac{1}{\zeta-z}\left (\frac{z-e_k}{\zeta-e_k}\right)^2\cos\frac{\mathcal A(\zeta)}{h}d\zeta.
\end{split}
\end{equation}
Since $\mathcal A'(e_k)>m$ for $e_k\in \tilde I$, $f(E,h)=\mathcal O(h^{1/2})$ and
$$\left \lvert\frac{1}{2\pi i}\oint_{\lvert \zeta-e_k\rvert =C'h}\frac{1}{\zeta-z}\left (\frac{z-e_k}{\zeta-e_k}\right)^2\cos\frac{\mathcal A(\zeta)}{h}d\zeta\right \rvert\leq Ch^{-2}\lvert z-e_k\rvert^2,$$
by the Rouch\'e's theorem we can see that for sufficiently large $C_0'>0$ and sufficiently small $h>0$, $\cos\mathcal A(z)/h+f(z,h)=0$ has a unique solution $\tilde e_k(h)$ in $B(e_k;C_0'h^{3/2})$ for $e_k\in \tilde I$ and conversely, all the roots in $\mathcal D_I$ are of this type. Since $f(E,h)$ is real for $E\in \mathbb R$ and by \eqref{myeq4.15.1} there exists a number $C>0$ such that $[-CC_0'h^{1/2},CC_0'h^{1/2}]\subset\{\cos (\mathcal A(E)/h);-C_0'h^{3/2}<E-e_k<C_0'h^{3/2}\}$, we can see that $\tilde e_k(h)$ is real.

Estimating the remainder terms by the Cauchy's integral formula as above, the left-hand side of \eqref{myeq4.15} is written as
\begin{equation}\label{myeq4.16}
\begin{split}
\cos \frac{\mathcal A(z)}{h}+f(z,h)=&-h^{-1}(\mathcal A'(\tilde e_k(h))+\mathcal O(h^{1/2}))\sin\frac{\mathcal A(\tilde e_k(h))}{h}(z-\tilde e_k(h))\\
&+\mathcal O(h^{-2}\lvert z-\tilde e_k(h)\rvert^2).
\end{split}
\end{equation}
Thus again by the Rouch\'e's theorem we can see that for sufficiently large $C_0''>0$ and sufficiently small $h>0$, $\mathcal W_0(E)=0$ has a unique solution $E_k(h)$ in
$$B(\tilde e_k;C_0''h^{2}e^{-2A_1-2A_2}),$$
and all the roots in $\mathcal D_I$ are of this type.
In the same way as above we also have
$$\sin \frac{\mathcal A(E_k(h))}{h}=\sin \frac{\mathcal A(\tilde e_k(h))}{h}+\mathcal O(he^{-2A_1-2A_2}).$$
Hence substituting $E_k(h)$ into \eqref{myeq4.15} for $z$ we can see
\begin{align*}
E_k(h)-\tilde e_k(h)=&-\frac{h^2\pi i}{4}\mathcal A'(e_k(h))^{-1}e^{-2A_1-2A_2}(V_1(0)-E_k(h))^{-1/2}(V_1'(0)-V_2'(0))^{-1}\\
&\cdot(r_0(0)+r_1(0)\sqrt{V_1(0)-E_k(h)})^2+\mathcal O(h^{5/2})e^{-2A_1-2A_2},
\end{align*}
from which Theorem \ref{main} follows.

%\begin{ack}
%\end{ack}

\section*{Acknowledgement}
The author deeply thanks Professor A. Martinez for suggesting the topic treated in this paper and for his helpful discussions.

\end{document}